\documentclass[12pt,letterpaper]{article}
\usepackage{fullpage}
\usepackage[top=2cm, bottom=3.5cm, left=2.0cm, right=2.0cm]{geometry}
\usepackage{amsmath,amsthm,amsfonts,amssymb,amscd}
\usepackage{lastpage}
\usepackage{enumerate}
\usepackage{fancyhdr}
\usepackage{mathrsfs}
\usepackage{xcolor}
\usepackage{graphicx,caption}
\usepackage{listings}
\usepackage{hyperref}
\makeatletter
\newcommand\xlabel[2][]{\phantomsection\def\@currentlabelname{#1}\label{#2}}
\makeatother

% Prevent footnote from splitting across pages 
\interfootnotelinepenalty=10000

\usepackage{float}
\floatstyle{plaintop}
\restylefloat{table}

\usepackage{booktabs,caption}
\usepackage[flushleft]{threeparttable}

\usepackage{bm}
\usepackage[natbibapa]{apacite}

\usepackage{placeins}
%%%% For bib. %%%%%
\usepackage[utf8]{inputenc}
\usepackage[english]{babel}
%Import the natbib package and sets a bibliography  and citation styles
\usepackage{natbib}
\bibliographystyle{apacite}

\usepackage{lscape}
\usepackage{subcaption}

%\setcitestyle{authoryear,open={(},close={)}}
%%%%%%%%%%%%%
\usepackage{booktabs}
\usepackage{multirow}
\usepackage{siunitx}

\usepackage{titling}

\usepackage{authblk}

\usepackage{afterpage}

\usepackage{thmtools, thm-restate}
\usepackage{setspace}% http://ctan.org/pkg/{setspace,lipsum}

\usepackage{enumitem}

\usepackage{titling}

\hypersetup{%
   colorlinks=true,       % false: boxed links; true: colored links
   citecolor=magenta,          % color of links to bibliography
   linkcolor=magenta,          % color of internal links (change box color with linkbordercolor)
  %linkbordercolor={0 0 1}
}
 
 %%% Define floor function %%%%%%
\usepackage{mathtools}
\DeclarePairedDelimiter\floor{\lfloor}{\rfloor}
%%%%%%%%%%%%%%%%%%%%

%\pagestyle{fancyplain}

\newtheorem{assumption}{Assumption}

\newtheorem{theorem}{Theorem}
\newtheorem{cor}{Corollary}

%\declaretheoremstyle[
  %bodyfont=\normalfont\itshape,headformat=\NAME\NUMBER,headfont=\bfseries,headpunct={\vspace{\topsep}}, 
  %spacebelow=\parsep,spaceabove=\topsep,  
%]{nospacetheorem}
%\declaretheorem[style=nospacetheorem,name=P]{prior}

\DeclareMathOperator*{\argmax}{arg\,max}
\DeclareMathOperator*{\argmin}{arg\,min}  
\DeclareMathOperator*{\plim}{plim}

% Keywords command
\providecommand{\keywords}[1]
{
    \emph{Keywords--} #1
}

% Theorems 
%ThmConsistency
%ThmBvm

% Props 
%PropLimitQ
%PropAMax
%PropACts
%lemmaMLE
%lemmaVar
%PropLimitQlip

\begin{document}

\title{
Asymptotic properties of Bayesian inference \\
in linear regression with a structural break
}
%\date{\today}
\date{January 3, 2022}
\thanksmarkseries{alph}

\author{Kenichi Shimizu \thanks{E-mail address: 
Kenichi.Shimizu@glasgow.ac.uk. %The present address: Adam Smith Business School, University of Glasgow, University Ave, Glasgow, G12 8QQ, United Kingdom.
%kenichi\_shimizu@brown.edu. 
%The paper is based on the first chapter of the author's doctoral dissertation at Brown University.
%The paper was previously titled ``Structural break in linear regression models: Bayesian asymptotic analysis'' and ``Bayesian inference in linear regression with a structural break." 
%The author is grateful for the editor, the associate editor, and two  referees for their helpful comments.
I am grateful for Andriy Norets, my dissertation advisor, for guidance and encouragement through the work on this project. I thank for valuable comments and suggestions from 
Eric Renault, 
Susanne Schennach, 
Dimitris Korobilis, 
Jesse Shapiro, 
Kenneth Chay, 
Toru Kitagawa, 
Adam McCloskey, 
Siddhartha Chib,  and
Florian Gunsilius.
I thank to audiences at the 2020 NBER-NSF Seminar on Bayesian Inference in Econometrics and Statistics (SBIES) conference for helpful discussions.
This work was supported by the Economics department dissertation fellowship at Brown University. All remaining errors are mine.

The paper was previously titled ``Structural break in linear regression models: Bayesian asymptotic analysis'' and ``Bayesian inference in linear regression with a structural break." 
 } }

\affil{\small{ \textit{ 
Adam Smith Business School \protect\\
University of Glasgow \protect\\ University Ave, Glasgow, G12 8QQ, United Kingdom
%Department of Economics, Brown University \protect\\ Robinson Hall, 64 Waterman Street, Providence, RI, United States  
}}}
  
%C11	%Bayesian Analysis: General  
%C24  % Truncated and Censored Models • Switching Regression Models • Threshold Regression Models  

%\maketitle

\begin{titlingpage}
\usethanksrule 

\maketitle

\begin{abstract}
This paper studies large sample properties of a Bayesian approach to inference about slope parameters $\gamma$ in linear regression models  with a structural break.
In contrast to the conventional approach to inference about $\gamma$ that does not take into account the uncertainty of the unknown break location $\tau$, 
the Bayesian approach that we consider incorporates such uncertainty.
%%%%%%%%%%%%%
Our main theoretical contribution is  a Bernstein-von Mises type theorem (Bayesian asymptotic normality) for $\gamma$ under a wide class of priors, which essentially indicates an asymptotic equivalence between  the conventional frequentist and  Bayesian inference. 
%Our result is beneficial to both Bayesians and frequentists. 
%A Bayesian user can invoke our theorem to convey their statistical result to  frequentist researchers. 
Consequently, a frequentist researcher could look at credible intervals of $\gamma$ to check robustness with respect to  the uncertainty of $\tau$.
% and such sensitivity analysis is natural as our result guarantees them to converge to  the conventional confidence intervals. 
%%%%%%%%%%%%%
Simulation studies show that the conventional confidence intervals of $\gamma$ tend to undercover in finite samples whereas the credible intervals offer more reasonable coverages in general. As the sample size increases, the two methods coincide, as predicted from our theoretical conclusion. 
%%%%%%%%%%%%%
Using data from Paye and Timmermann (2006) on stock return prediction, we illustrate that the traditional confidence intervals on $\gamma$ might underrepresent  the true sampling uncertainty. 
%%%%%%%%%%%%%
\end{abstract}

\keywords{
Structural break, 
%Non-regular estimation, 
%Marginal posterior consistency, 
Bernstein-von Mises theorem, 
Sensitivity check, Model averaging
}

\end{titlingpage}

%\begin{titlingpage}
%\pagebreak
%{\linespread{.85}\small\tableofcontents}
%\end{titlingpage}
\onehalfspacing

\section{Introduction}
We consider the  linear regression  with a structural break, following the notations of \cite{bai1997}:
%%%%%%%%%%%%%
\begin{equation}\label{model}
    y_t=
    \begin{cases}
      w_t'\alpha+z_t'\delta_1+\epsilon_t, & \text{for}\ t=1,\ldots, \floor*{\tau T} \\
      w_t'\alpha+z_t'\delta_2+\epsilon_t, & \text{for}\ t=\floor*{\tau T}+1,\ldots, T,
    \end{cases}
\end{equation} 
where $w_t$ and $z_t$ are $d_w \times 1$ and $d_z \times 1$ vectors of covariates, and the random variable $\epsilon_t$ is a regression error. 
$\floor*{a}$ is the largest integer that is strictly smaller than $a$. 
The relationship between the outcome $y_t$ and the covariate $z_t$, measured by $\delta$'s, changes across regimes,
 which are defined by the break location parameter $\tau \in (0,1)$. 
There can be another set of covariates $w_t$ whose relationship with $y_t$, measured by $\alpha$, stays unchanged across the regimes.
%
%The vectors $\alpha,\delta_1,\delta_2$, and $\tau$ are unknown parameters. We assume $\delta_1 \ne \delta_2$, so that a change has taken place. 
%
%In such models, the threshold parameter $\tau$ divides the samples into two groups or regimes. The relationship between the outcome $y_t$ and the covariate $z_t$ is determined by the regime to which the particular observation $i$ belongs. There could be some covariates whose relationship to $y_t$, measured by $\alpha$, might stay constant between the regimes. 
%
The unknown parameters include the break location $\tau$ as well as the slope parameters $\gamma=(\alpha,\delta_1,\delta_2)$.
The focus of the current study is on inference about the slope parameter $\gamma$\footnote{
For an extensive review of 
important aspects in structural break models such as 
estimation and inference of
the number of breaks as well as break locations, see \cite{perron2006}.
}.

\subsection{The classic literature}
In the literature, the conventional least-squares estimators $\left( \hat{\tau}_{LS}, \hat{\gamma}_{LS} \right)$ for $\left( \tau, \gamma \right)$ are computed as follows: for each candidate $\tau$, compute the sum of squared residuals of the regression and denote the minimizing choice by $\hat{\tau}_{LS}$. Plug in the value $\tau = \hat{\tau}_{LS}$ in the model and define $\hat{\gamma}_{LS}=\hat{\gamma}(\hat{\tau}_{LS})$, where $\hat{\gamma}({\tau})$ is the usual OLS estimator of $\gamma$ assuming the break location $\tau$.
%%%%%%%%%%%%%
% ADDED  Nov 22, 2021 due to Comment 5 of AE 
\cite{bai1997} assumes that the true jump size $\delta_0$ is either fixed or shrinks to zero as $T \to \infty$, but at a rate slower than $\sqrt{T} \to \infty$. 
Bai shows 
that $\hat{\tau}_{LS}$ converges at the rate $T^{-1}$ in the former case and, 
in the latter case,
finds an asymptotic distribution of $\hat{\tau}_{LS}$ that can be used for constructing confidence intervals for $\tau$.
In both cases, Bai proves that the asymptotic distribution of $\hat{\gamma}_{LS}$ is the same as that of $\hat{\gamma}(\tau_0)$, where $\tau_0$ is the true value of $\tau$. 
%This means that the econometrician can approximate the distribution of $\hat{\gamma}_{LS}$ by the conventional normal distribution as if $\tau$ is known with certainty. 
%When $T$ or $\delta_0$ is small, this procedure is likely to under-represent the true sampling uncertainty as we will show in simulations. 
This means that one can ignore the very problem of unknown $\tau$ when making inference on $\gamma$.

Figure \ref{figure_posterior_tau} displays finite-sample distributions of $\hat{\tau}_{LS}$ (blue solid curves) which are produced based on 1,000 repeated experiments  on the following model $ y_t = \delta_0 1\left(t>\floor*{\tau_0T} \right)+\epsilon_t$\footnote{
Figure \ref{figure_posterior_tau} also shows distributions (red solid curve with small circles) of a Bayesian point estimator of $\tau$, the posterior mode. We later show that  the posterior mode converges to the same limiting distribution as $\hat{\tau}_{LS}$. 
}.
Note that despite the $T$-consistency, $\hat{\tau}_{LS}$ displays significant variation, especially when the true break size $\delta_0$ is small\footnote{In addition, the distributions exhibit three modes as reported in the literature (e.g.,\ \citealp{baek2021}; \citealp{casini_perron2021continuous}). }.
In practice, the conventional approach to inference on the slope parameters $\gamma$  would ignore this  uncertainty, neglecting all possible values of $\tau$ other than $\hat{\tau}_{LS}$. As a consequence, the  corresponding confidence intervals on $\gamma$ tend to undercover since it might not be the case that $\hat{\tau}_{LS} = \tau_0$ in a given sample (See our simulation in Section \ref{section-simulation}).

% MOVED FROM APPENDIX NOV 22, 2021 
\FloatBarrier
\graphicspath{{Figures_final/tau_posterior/}}

\begin{figure}[hbt!]
\centering
\begin{subfigure}{.45\textwidth}
  \centering
  \includegraphics[width=1\linewidth]{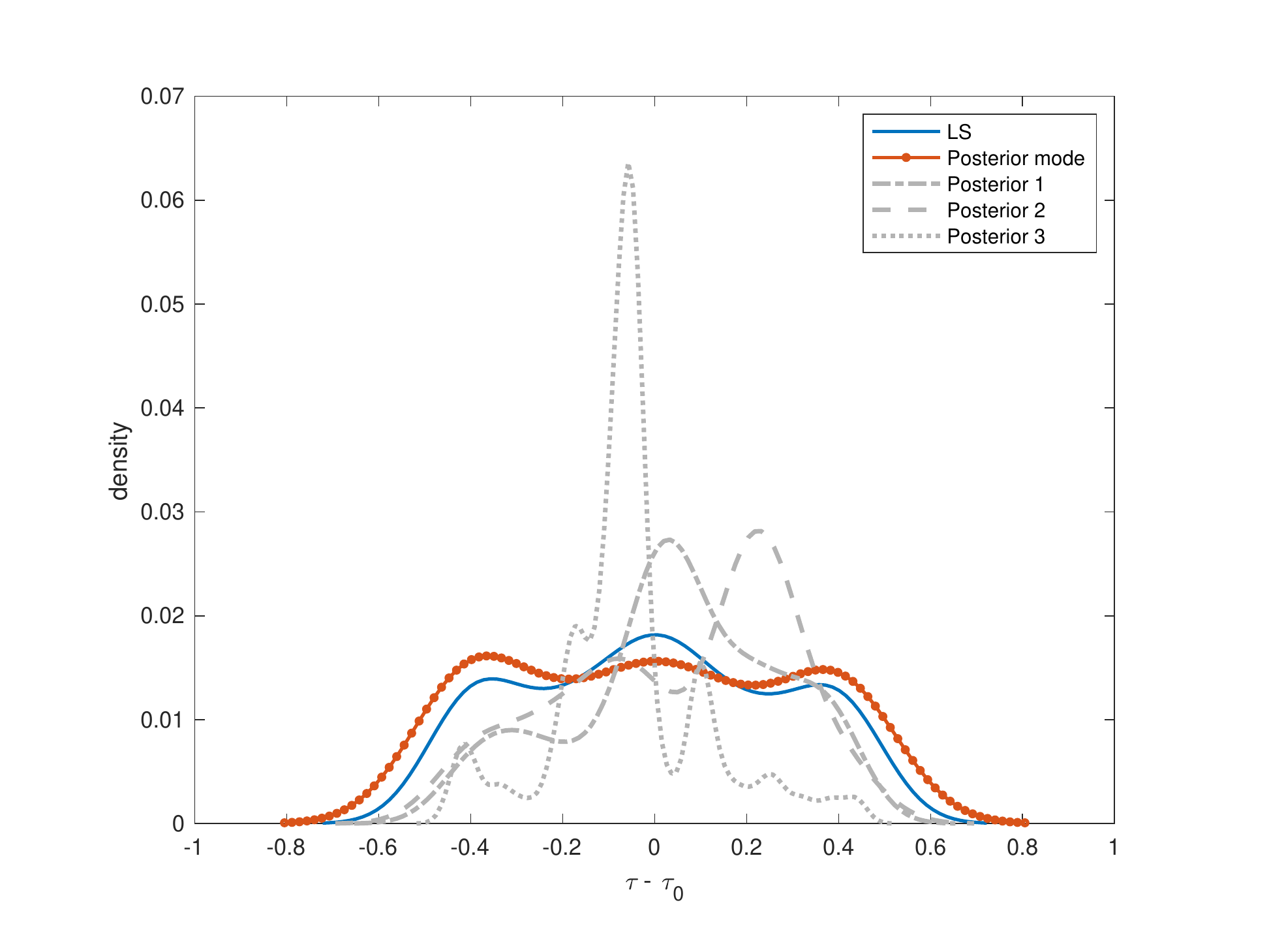}
  \caption{$\delta_0=0.25$}%, $\tau_0=0.5$}
\end{subfigure}
%
%\begin{subfigure}{.48\textwidth}
 % \centering
 % \includegraphics[width=1\linewidth]{}
  %\caption{$\delta_0=0.25$, $\tau_0=0.3$}
%\end{subfigure}
%
\begin{subfigure}{.45\textwidth}
  \centering
  \includegraphics[width=1\linewidth]{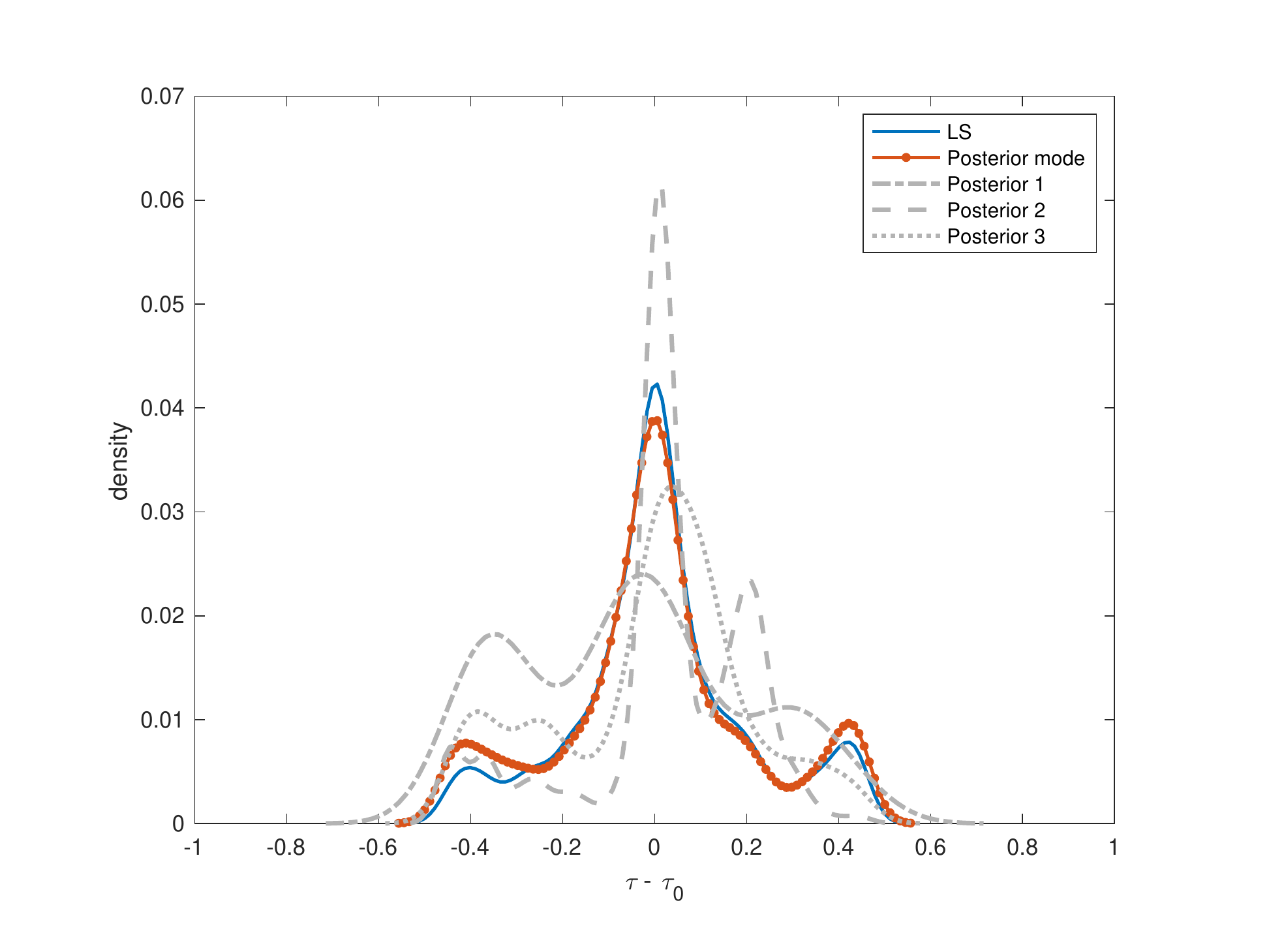}
  \caption{$\delta_0=0.5$}%, $\tau_0=0.5$}
\end{subfigure}
%
%\begin{subfigure}{.48\textwidth}
 % \centering
 % \includegraphics[width=1\linewidth]{}
 % \caption{$\delta_0=0.5$, $\tau_0=0.3$}
%\end{subfigure}
%
\caption{\footnotesize{
Finite-sample distribution of $\hat{\tau}_{LS}$ (blue solid curve) based on 
the model $ y_t = \delta_0 1\left(t>\floor*{\tau_0T} \right)+\epsilon_t$, $\epsilon_t \sim \text{i.i.d.} N(0,1), \tau_0=0.5, T=100$, with 1,000 repeated experiments.
The horizontal axis is $\tau-\tau_0$.
We also show finite-sample distribution of the posterior mode of $\tau$ (red solid curve with small circles).
In addition, we randomly chose 3 data realizations out of the 1,000 repetitions to plot posterior densities of $\tau$ in gray dashed curves (hence each of them represents a realization of one data set).}}
\label{figure_posterior_tau}
\end{figure}
%\FloatBarrier

%%%%%%%%%%%%%%%%%%%%%%%%%%%%%%%%%%%%%%%%%%%%%%%%%%%%%%%%%%%%%%%%%%%%%%%%%%%%%%%%%%%%%%%
\subsection{Bayesian perspective}
For a Bayesian, this non-standard estimation problem\footnote{
Estimation of structural break models is considered non-standard in a sense that there is a non-regular parameter (e.g.,\ break location) whose point estimator converges faster than $T^{-1/2}$, the rate at which the regular parameters (e.g.,\  slope coefficients) converge. 
}
can be dealt with  by placing prior on both $\tau$ and $\gamma$ and by computing  corresponding  posterior probabilities. The uncertainty of  $\tau$ will be automatically reflected on the marginal posterior probability of $\gamma$. This is because the posterior distribution of $\gamma$ given the data $\bm{D}_T$ can be written
as a mixture where the weights correspond to the marginal posterior density $\pi_T(\tau)$ for $\tau$:
\begin{equation}
p\left( \gamma |\bm{D}_T \right)=\int_0^1 p\left( \gamma |\tau,\bm{D}_T \right) \pi_T(\tau) d\tau, \label{mixture}
\end{equation}
where $p\left( \gamma |\tau,\bm{D}_T \right) $ is the posterior conditional distribution of $\gamma$ given $\tau$.
%In contrast to  inference based on the posterior distribution above, the conventional frequentist approach in \cite{bai1997}  fixes the value of $\tau$ at the least-squares estimate $\hat{\tau}_{LS}$ and essentially ignores the possibilities that $\tau$ could have taken other values. 
%
%
%
%PREVIOUSLY SECTION 3.4 BUT NOW MOVED TO INTRO, NOV 22, 2021 
%%%%%%%%%%%%%%%%%%%%%%%%%%%%%%%%%%%%%%%%%%%%%%%%%%%%%%%%%%%%%%%%%%%%%%%%%%%%%%%%%%%%%%%%%%%%%%%
%The marginal posterior of the regression slope parameter $\gamma=(\beta',\delta')' $ is a mixture \eqref{mixture} with the weights being the marginal posterior density $\pi_T(\tau)$ of $\tau$.
%
%Note that in the conventional frequentist approach, $\tau$ is fixed at the point estimate $\hat{\tau}_{LS}$, and hence the information $p\left( \gamma |\tau,\bm{D}_T \right) \pi_T(\tau)$ for $\tau \ne \hat{\tau}_{LS}$ is essentially neglected. In contrast, such information or uncertainty is explicitly incorporated in Bayesian framework.  
%
%
%
% DROPPED, AE 1(A), NOV 22, 2021 
%A Bayesian equivalent of the conventional inference on the slope parameters $\gamma$ would be to fix $\tau$ at the posterior mode. 
%In simulations (Section \ref{section-simulation}), we illustrate that credible intervals of $\gamma$ tend to under-cover like the conventional confidence intervals if $\tau$ is fixed at its estimated value.
%
The posterior density $\pi_T(\tau)$ reflects the uncertainty of $\tau$ given the data set.
Figure \ref{figure_posterior_tau} shows  three realizations of $\pi_T(\tau)$ (gray dashed curves) which are randomly chosen out of the 1,000 repetitions. 
Compared to the conventional approach,  the key difference is that
 the Bayesian approach \eqref{mixture} incorporates all possibilities of $\tau$ (not just $\hat{\tau}_{LS}$) and weights them according to the posterior density. 
As we  see in simulation studies, this results in longer lengths of  Bayesian credible intervals of $\gamma$ compared to the conventional  counterparts. Consequently, the credible intervals tend to avoid  undercoverage. 
See Section \ref{section-simulation} for further discussion.
%In contrast, the full Bayesian inference on the slope parameters $\gamma$ is based on the entire posterior distribution of $\tau$ instead of its point estimator. 
%
%
%Note that the posterior distributions of $\tau$ also demonstrate variation of $\tau$. 
%
% 
%
%As our Theorems \ref{ThmBvm} and \ref{ThmBvmGeneral} establish,  Bayesian inference and the conventional inference converge towards each other asymptotically. 
%
%%%%%%%%%%%%%%%%%%%%%%%%%%%%%%%%%%%%%%%%%%%%%%%%%%%%%%%%%%%%%%%%%%%%%%%%%%%%%%%%%%%%%%%%%%%%%%%
Note that, unlike  conventional frequentist methods, Bayesian inference has a valid interpretation even in finite samples as it does not rely on asymptotic theory.
% Indeed, Bayesian estimations of change-point models have been very popular in the statistics and econometrics literature (for example, refer to \cite{Khodadadi_Asgharian2008}). 

In this study, we examine the asymptotic behavior of Bayesian estimation of the considered model under the fixed jump size framework. 
Specifically, we prove a Bernstein-von Mises type theorem for the slope parameters $\gamma$ which validates a frequentist interpretation of  Bayesian credible regions. 
%
%Our result is beneficial to both Bayesian and frequentist researchers. 
A Bayesian researcher can invoke our theorem to convey  statistical results to  frequentist researchers. A frequentist researcher could look at the credible interval of $\gamma$ to check robustness with respect to the uncertainty of the break location. Such sensitivity analysis is reasonable as our result guarantees the credible interval to converge to the conventional confidence interval. 
%The researcher can use our method either as an approach that correctly reflects the finite-sample uncertainty from the model selection stage or as a sensitivity-check for the conventional frequentist approach by comparing the conventional confidence interval for $\beta$ with the Bayesian credible interval. \\
%
%Our theoretical results hold under a wide range of prior specifications. 
%
We first establish theoretical results under normal likelihood and  natural conjugate prior. 
We further  extend the results to non-conjugate priors using  Laplace   approximations.
%

%Her asymptotic result implies that for  inference of $\beta$, the researcher would condition on $\tau=\hat{\tau}$, hence essentially ignores the aforementioned finite-sample uncertainty from estimating $\tau$.

% DROPPED  Nov 18, 2021 due to Comment 5(f) of AE 
%In this paper, we do not consider the shrinking jump size frameworks which were described above because it is likely that the Bayesian method would deliver different results compared to the frequentist counterparts. Nevertheless, in contrast to \cite{bai1997} and \cite{hansen2000}, Bayesian credible sets for $\tau$ are available without the assumption of the moderately small jump size. Furthermore, any finite-sample uncertainty of $\tau$ will be automatically reflected upon the posterior distribution of $\gamma$, so it is not necessary to assume the small break condition of \citet{elliott_muller_2014} and  \citet{andrews_kitagawa_mccloskey2020} in order to express this uncertainty. 

%%%%%%%%%%%%%%%%%%%%%%%%%%%%%%%%%%%%%%%%%%%%%%%%%%%%%%%%%%%%%%%%%%%%%%%%%%%%%%%%%%%%%%%
The literature on the theoretical properties of Bayesian approaches in non-regular models such as \eqref{model} is very scarce despite their popularity in applications.
To our knowledge, frequentist properties of the Bayesian approach for  linear regression models with structural breaks have not been studied in the literature. \cite{ghosal_samanta1995} consider a general non-regular estimation problem from a Bayesian perspective and establish conditions under which the Bernstein-von Mises theorem holds for the regular part of the parameter. However, their assumptions are difficult to verify in regard to our model in consideration.

% Shortened Nov 22, 2021
%Estimation of change-point models such as structural break models is considered non-standard in a sense that there is a non-regular parameter (e.g.,\ break location) whose point estimator converges faster than $T^{-1/2}$, the rate at which the regular parameters (e.g.,\  regression coefficients) converge. Despite its popularity in applications, the Bayesian theoretical literature on non-regular estimation is very scarce.  To our knowledge, frequentist evaluation of Bayesian approach for structural break in linear regression models has not been studied in the literature. \cite{ghosal_samanta1995} consider a general non-regular estimation problem from Bayesian perspective and establish conditions under which Bernstein-von Mises theorem holds for the regular part of the parameter. However, their assumptions are difficult to verify in regard to our model in consideration. 

%%%%%%%%%%%%%%%%%%%%%%%%%%%%%%%%%%%%%%%%%%%%%%%%%%%%%%%%%%%%%%%%%%%%%%%%%%%%%%%%%%%%%%%

Recently, \cite{casini_perron2020generalized} propose a generalized Laplace estimator of the break location $\tau$
which is defined by an integration rather than an optimization.
%
%They define a quasi-prior using the asymptotic distribution of $\hat{\tau}_{LS}$, and  inference on the break location is based on the quasi-posterior. 
%
Their approach provides a better approximation about the  uncertainty in $\tau$ than the conventional method. 
Although our focus of the current paper is on  inference about the slope coefficients $\gamma$ and not  $\tau$, our Bayesian approach toward inference shares the same spirit; any statement about $\gamma$ is expressed as a weighted average \eqref{mixture} over the marginal posterior density of $\tau$.
%
%We will show in simulations that the credible intervals on the slope coefficients have better coverage compared to the conventional confidence intervals because of longer interval length, which can be considered as a reflection of the uncertainty of the unknown break location. \\

%%%%%%%%%%%%%%%%%%%%%%%%%%%%%%%%%%%%%%%%%%%%%%%%%%%%%%%%%%%%%%%%%%%%%%%%%%%%%%%%%%%%%%%
The paper is organized as follows. 
Section \ref{section-model} introduces the model and lists a set of assumptions.
%%%%%%%%%%%%%%%%%%%%%%%%%
%Section \ref{section-dgp-bayes} lists assumptions made about the data-generating-process and on the choice of the prior distributions.
%%%%%%%%%%%%%%%%%%%%%%%%%
Section \ref{section-normal-default} introduces a Bayesian approach based on normal likelihood and conjugate prior.
%%%%%%%%%%%%%%%%%%%%%%%%%
%Section \ref{section-normal-default-theory}
The section then establishes frequentist properties  of the approach. 
%%%%%%%%%%%%%%%%%%%%%%%%%
Section \ref{section-general-theory} extends the results to non-conjugate priors.  
%%%%%%%%%%%%%%%%%%%%%%%%%
Section \ref{section-simulation} presents simulation evidence to assess  the adequacy of the asymptotic theory and to illustrate that  conventional confidence intervals on the slope parameters tend to undercover.
%%%%%%%%%%%%%%%%%%%%%%%%%
Section \ref{section-application} reports an empirical application to the stock return prediction model of  \citet{paye_timmermann2006}.
%%%%%%%%%%%%%%%%%%%%%%%%%
Section \ref{section-conclusion} concludes the paper.
%%%%%%%%%%%%%%%%%%%%%%%%%
The mathematical proofs and derivations are listed in the Appendix. 
Additional tables are provided in the online appendix. 

%%%%%%%%%%%%%%%%%%%%%%%%%%%%%%%%%%%%%%%%%%%%%%%%%%%%%%%%%%%%%%%%%%%%%%%%%%%%%%%%%%%%%%%
\section{The model and data generating process}\label{section-model}

\subsection{The model}
Using the reparametrization $x_t=(w_t',z_t')'$, $\beta=(\alpha',\delta_1')'$, and $\delta=\delta_2-\delta_1$, the equations \eqref{model} can be rewritten as 
%%%%%%%%%%%%%
\begin{equation}\label{model2}
    y_t=
    \begin{cases}
      x_t'\beta+\epsilon_t, & \text{for}\ i=1,\ldots, \floor*{\tau T} \\
      x_t'\beta+z_t'\delta+\epsilon_t, & \text{for}\ i=\floor*{\tau T}+1,\ldots, T.
    \end{cases}
\end{equation}
Note that $z_t$ is a subvector of $x_t$. More generally, let $z_t=R'x_t$, where $R$ is a $d_x \times d_z$ known matrix with full column rank and hence $z_t$ is defined as a linear transformation of $x_t$. For $R=(0_{d_z\times d_w},I_{d_z})'$, we obtain  model \eqref{model2}. For $R=I_{d_x}$, a pure change model is obtained. 
%
%%%%%%%%%%%%%
To rewrite the model in matrix form, we introduce further notations. Define 
$Y=(y_1,\ldots,y_T)'$, 
$\epsilon=(\epsilon_1,\ldots,\epsilon_T)'$, 
$X=(x_1,\ldots,x_T)'$, 
$X_{1\tau}=(x_1,\ldots,x_{\floor*{\tau T} },0,\ldots,0)'$, 
$X_{2\tau}=(0,\ldots,0,x_{\floor*{\tau T} +1},\ldots,x_T)'$.
Define $Z, Z_{1\tau}$, and $Z_{2\tau}$ similarly. Then, 
$Z=XR$, 
$Z_{1\tau}=X_{1\tau}R$,
and 
$Z_{2\tau}=X_{2\tau}R$.
Now, the equations \eqref{model2} can be written as
\begin{equation}\label{model3}
Y
= X\beta + Z_{2\tau}\delta + \epsilon 
= \chi_\tau \gamma +\epsilon,
\end{equation}
where $\chi_\tau = (X, Z_{2\tau})$ and $\gamma=(\beta',\delta')'$.
$S_T(\tau)$ denotes the sum of squared residuals of the regression \eqref{model3} given $\tau$. Let $\mathcal{H} \subset (0,1)$ be the space of the break locations. The least-squares estimator of $\tau$ is defined as	 
\begin{equation}
\hat{\tau}_{LS} =  \argmin_{\tau \in \mathcal{H}} S_T(\tau), \label{tau_hat}
\end{equation}
and the least-squares estimator for the slope coefficients $\gamma=(\beta',\delta')'$ is
\begin{equation}
 \hat{\gamma}_{LS} = \hat{\gamma}(\hat{\tau}_{LS}), \label{gamma_hat}
 \end{equation}
where $\hat{\gamma}(\tau)$ denotes the usual OLS estimator given the value of $\tau$.

%\section{Data generating process and a Bayesian approach}\label{section-dgp-bayes}
%
%We make assumptions about the data generating process in Section \ref{section-dgp} and define the likelihood function and the default class of prior distributions in 
%Sections \ref{section-like} and \ref{section-prior} respectively.
%
%In Section \ref{section-bayes}, we illustrate the finite-sample uncertainty involving the unknown $\tau$  that the conventional least-squares method ignores while the Bayesian approach can take into account.

\subsection{Data generating process}\label{section-dgp}

The data are assumed to include $T$ observations on a response and a vector of covariates: $\bm{D}_T=( \bm{Y}_T, \bm{X}_T) = (y_1,\ldots,y_T,x_1,\ldots,x_T)$
where $y_t \in  \mathbb{R}$ and $x_t \in \mathcal{X} \subset \mathbb{R}^{d_x}$, $t=1,...,T$. 
$\mathcal{X}$ is assumed to be a convex and bounded set.
Conditional on $ \bm{X}_T $, the response is generated according to  model \eqref{model3} with the true parameters $(\gamma_0',\sigma_0^2, \tau_0)'$. We use $\theta=(\gamma',\sigma^2)'$ to denote the regression parameters. We make the following assumptions about the true data-generating-process (DGP): 
%%%%%%%%%%%%%%%%%%%%%%%%%%%%%%%%%%%%%%%%%%%%%%%%%%%%%%%%%%%%%%%%%%%%%%%%%%%%%
\pagebreak

\begin{assumption}\label{assumption1}
\quad 
\begin{enumerate}[label=(\roman*)]
%%%%%%%%%%%%%%
\item 
$\delta_0 \ne 0$. 
%%%%%%%%%%%%%%
\item
$\epsilon_t$ is i.i.d. with $E(\epsilon_t|x_t)=0$, $E(\epsilon_t^2|x_t) =\sigma_0^2$, where $\sigma_0^2$ is unknown to the econometrician.
%%%%%%%%%%%%%%
\item $\Sigma_X =E[x_tx_t']= \plim \frac{1}{T} \sum_{t=1}^T x_t x_t'$ exists and is positive definite. 
%%%%%%%%%%%%%%
\item For all $\tau_1,\tau_2 \in (0,1)$ with $\tau_1 < \tau_2$, 
$\frac{1}{T} \sum_{\floor*{\tau_1T}+1}^{\floor*{\tau_2T}} x_t\epsilon_t =O_p(T^{-1/2})$ 
and 
$\frac{1}{T} \sum_{\floor*{\tau_1T}+1}^{\floor*{\tau_2T}} x_tx_t' =(\tau_2-\tau_1)\Sigma_X+O_p(T^{-1/2})$ 
%%%%%%%%%%%%%%
\end{enumerate}

\end{assumption}

%%%%%%%%%%%%%%%%%%%%%%%%%%%%%%%%%%%%%%%%%%%%%%%%%%%%%%%%%%%%%%%%%%%%%%%%%%%%%
Under the above assumptions, the classical theoretical results apply. \cite{bai1997} shows that the convergence rate of $\hat{\tau}_{LS}$ is $T^{-1}$ if $\delta_0$ is fixed with respect to the sample size:
\[
\hat{\tau}_{LS} =\tau_0 + O_p(T^{-1}),
\]
and that the least-squares estimator for $\gamma$ is asymptotically normal with the asymptotic covariance matrix being the same as if $\tau_0$ is known: 
\begin{equation}
\sqrt{T}\left(  \hat{\gamma}_{LS} -\gamma_0 \right)
%=
%\begin{bmatrix}
%\sqrt{T}\left(  \hat{\beta}_{LS} -\beta_0 \right)\\
%\sqrt{T}\left(  \hat{\delta}_{LS} -\delta_0 \right)
%\end{bmatrix}
\overset{d}{\to}
N_{(d_x+d_z)}(0, \sigma_0^2 V^{-1}),
\label{conventional_theory}
\end{equation}
where 
\[
V
=\plim T^{-1}
\begin{pmatrix}
\sum_{t=1}^T x_tx_t' & \sum_{t=\floor*{\tau_0T}+1}^T x_tz_t' \\
 \sum_{t=\floor*{\tau_0T}+1}^T z_tx_t'  &  \sum_{t=\floor*{\tau_0T}+1}^T z_tz_t' 
\end{pmatrix}
=
\plim T^{-1}
\chi_{\tau_0}'\chi_{\tau_0}.
\]
This means that $\tau$ can be treated as known for the purpose of inference about $\gamma$. In other words, the  uncertainty of the break location  is essentially ignored, and thus the confidence interval for $\gamma$ tends to undercover (see Section \ref{section-simulation} for simulation).

There are several comments on Assumption \ref{assumption1}. 
In threshold regression models (see \citealp{hansen2000}), the threshold variable is often one of the regressors. 
In this case, sorting the threshold variable leads to a trend in the regressors, which requires an alternative approach for the asymptotic analysis. 
We do not consider the case with one of the regressors being the threshold variable in this paper. 
In addition, we require the regression errors to be i.i.d.\ with variance $\sigma^2$. 
Adding more flexibility such as heteroscedasticity and serial correlation would be an important future direction.

%%%%%%%%%%%%%%%%%%%%%%%%%%%%%%%%%%%%%%%%%%%%%%%%%%%%%%%%%%%%%%%%%%%%%%%%%%%%%

%For a general class of priors we consider, a Laplace-type approximation result in \cite{hong_preston2012} is used in order to study the conditional marginal likelihood given $\tau$ and the asymptotic behavior of the posterior of $\gamma$ conditional on $\tau$. The Laplace-type approximation requires a set of conditions described in Assumption \ref{assumption2}. 
%%%

%%%%%%%%%%%%%%%%%%%%%%%%%%%%%%%%%%%%%%%%%%%%%%%%%%%%%%%%%%%%%%%%%%%%%%%%%%%%%
\section{A Bayesian approach under normal likelihood and conjugate prior}\label{section-normal-default} 

%\subsection{Prior}
%%%%%%%%%
%Our paper first study asymptotic properties of posterior distributions under normal likelihood and conjugate prior.% (Theorems \ref{ThmConsistency}-\ref{ThmBvm}). 
%In Section \ref{section-general-theory}, we extend the asymptotic theory to a general class of likelihoods and priors (Theorems \ref{ThmConsistencyGeneral}-\ref{ThmBvmGeneral}). 
%\subsection{Likelihood and prior}
%We model the distribution of the regression error term in  \eqref{model3} given $\tau$ by normal: $\epsilon_t | \sigma^2 \sim N(0, \sigma^2)$. 
The distribution of covariates is assumed to be ancillary and it is not modeled.
Throughout this paper, we assume the normal likelihood function\footnote{
Similarly, \cite{qu_perron2007} propose  a quasi-maximum likelihood estimator assuming normal errors.
}
\begin{equation}
p(\bm{Y}_T |\bm{X}_T, \theta,\tau) 
=\prod_{t=1}^T 
 \frac{1}{\sqrt{2 \pi \sigma^2}} 
 \exp \left( -\frac{\left( y_t - \chi_{\tau,t}' \gamma \right)^2  }{2\sigma^2}  \right), \label{normal_like}
\end{equation}
where $\chi_{\tau,t}'$ is the $t$th row of the matrix $\chi_{\tau}$.
Note that the normality is not assumed for the true DGP, so the model can be mis-specified.

%%%%%%%%%
%The prior distribution $\pi \left( \gamma,\sigma^2,\tau \right)$ is assumed to have a bounded ratio

%We first consider a ``default'' class of priors, where $\tau$ and $\theta$ are independent a-priori and the prior on $\theta$ is  the natural conjugate prior. 
%
%Formally, the uninformative prior of the regression parameter $\theta$ is
%\begin{align}
% \pi \left(\gamma,\sigma^2 \right) \propto \frac{1}{\sigma^2} \label{prior_improper}
%\end{align}
The break location $\tau$ and the regression parameters $\theta$ are independent a-priori and the prior on $\theta$ is  the natural conjugate prior. 
That is, $ \pi \left(\gamma,\sigma^2, \tau \right) = \pi(\gamma | \sigma^2) \pi(\sigma^2) \pi(\tau)$ where 
the prior on $\gamma$ conditional on $\sigma^2$ is normal $N_{(d_x+d_z)}(\underline{\mu}, \sigma^2 \underline{H}^{-1}) $
and 
the prior on $\sigma^2$ is inverse-gamma $InvGamma(\underline{a},\underline{b}) $.
%\begin{align*}
%\pi(\sigma^2) &\sim InvGamma(\underline{a},\underline{b}) \nonumber \\
%\gamma | \sigma^2 &\sim N_{(d_x+d_z)}(\underline{\mu}, \sigma^2 \underline{H}^{-1}) %\label{prior_conjugate}
%\end{align*}
Note that by taking $\underline{H} \to 0$, $\underline{a} \to -(d_x+d_z)/2$, and $\underline{b} \to 0$, we have  the uninformative improper prior $ \pi \left(\gamma,\sigma^2 \right) \propto \sigma^{-2}$ as a special case.
%%%%%%%%%
%We call this group of priors the \textit{default class of priors} and denote by \nameref{prior_default}.
The prior on $\tau$ can be of any form as long as it is positive at $\tau_0$, and $\pi(\tau)$ is finite for all $\tau \in \mathcal{H} $.
%\subsubsection{Default class of priors}

%Under the uninformative prior,  the posterior mean  of $\gamma$ conditional on $\tau=\hat{\tau}_{LS}$ coincides with the corresponding least squares estimators (see \eqref{posterior_gamma_improper}).
%Such choice of prior might be of interest to a frequentist researcher who intends to use the posterior intervals of $\gamma$ for robustness check. On the other hand, the benefits of the conjugate prior include facts that the researcher can incorporate prior belief on the regression coefficients or can impose some regularization. %The conjugate prior becomes the uninformative improper prior when $\underline{a} \to -(d_x+d_z)/2$, $\underline{b} \to 0$, and $\underline{H} \to 0$.
The  conjugate prior is a popular choice in the Bayesian estimation of linear regression models. 
Our restriction on the prior for the break location $\tau$ is very mild. 
For example, the uniform distribution on $\mathcal{H}$ satisfies the requirement. 
Recently, \cite{baek2021} investigates the same model \eqref{model}. As the distribution of $\hat{\tau}_{LS}$ might exhibit tri-modality for small jumps, Baek proposes a new point estimator for $\tau$ based on a modified objective function. The proposed modification can be regarded as equivalent to specifying a certain type of prior for $\tau$ and indeed such prior satisfies our restriction.

% MOVED TO INTRO, NOV 22, 2021
%%%%%%%%%%%%%%%%%%%%%%%%%%%%%%%%%%%%%%%%%%%%%%%%%%%%%%%%%%%%%%%%%%%%%%%%%%%%%
%\subsection{Discussion about the finite-sample uncertainty of the break location}\label{section-bayes}

%%%%%%%%%%%%%%%%%%%%%%%%%%%%%%%%%%%%%%%%%%%%%%%%%%%%%%%%%%%%%%%%%%%%%%%%%%%%%%%%%%%%%%%
%\subsection{Posterior distributions}\label{section-normal-conjugate-posterior}
%The default class of priors facilitate posterior samplings because closed-form expressions of the conditional posterior of $\theta$ given $\tau$ are available. 
%In this section, we display posterior distributions under the normal likelihood  and the default class of priors (\nameref{prior_default}) and discuss how to sample from the posterior. 
%\subsubsection{Posterior distributions}\label{posterior_conjugate}
Under the normal likelihood function and the prior defined above, the posterior distributions are
%%%%%%%%%%
\begin{align}
\pi(\tau | \bm{D}_T ) &\propto 
\left[ 
\det
\left( \bar{H}_\tau \right)
 \right]^{-0.5} 
%%%%%%
\bar{b}_\tau^{-\bar{a}} \times \pi(\tau), \label{posterior_tau_conjugate} \\
\gamma
\big| \tau, \bm{D}_T &\sim t_{(d_x+d_z)} 
\left(
2\bar{a}, 
\bar{\mu}_{\tau},
( \bar{b}_\tau /  \bar{a} )\bar{H}_\tau^{-1}
\right), \label{posterior_gamma_conjugate} \\
%%%%%%%
\sigma^2| \tau, \bm{D}_T &\sim Inv Gamma \left( \bar{a},\bar{b}_\tau \right), \label{posterior_sigma_conjugate}
\end{align}
where 
$ \bar{H}_\tau =  \underline{H} + \chi_\tau'\chi_\tau$, 
$\bar{\mu}_{\tau} = \bar{H}_\tau^{-1} \left[  \underline{H}\underline{\mu}+\chi_\tau' Y   \right]$,
$\bar{b}_\tau = \underline{b} +0.5\left[ \underline{\mu}' \underline{H} \underline{\mu} + Y'Y - \bar{\mu}_\tau' \bar{H}_\tau \bar{\mu}_\tau \right]$, and 
$\bar{a} = \underline{a}+ T/2$, and 
$t_k(v,\mu,\Sigma)$ is the $k$-dimensional t-distribution with $v$ degrees of freedom, a location vector $\mu \in \mathbb{R}^k$, and a $k \times k$ shape matrix $\Sigma$.
See Appendix \ref{derivation} for the derivation.

%%%%%%%%%%%%%%%%%%%%%%%%%%%%%%%%%%%%%%%%%%%%%%%%%%%%%%%%%%%%%%%%%%%

%\subsubsection*{Posterior sampling}
Due to the availability of the closed-forms for the conditional posteriors given $\tau$, the posterior sampling is simple and fast. One can first draw $\tau_{(1)},\ldots,\tau_{(S)}$ from the marginal posterior of $\tau$ as in  \eqref{posterior_tau_conjugate} via, for example, the Metropolis-Hastings algorithm, where $S$ is the number of posterior draws. For each $\tau_{(s)}$, one can sample posterior draws of $\sigma^2_{(s)}$ from the posterior conditional on $\tau=\tau_{(s)}$, namely \eqref{posterior_sigma_conjugate}. Conditional on $\tau$ and $\sigma^2$, one can draw $\gamma$ from $p(\gamma | \sigma^2, \tau, \bm{D}_T )$\footnote{
It can be shown that 
$\gamma
\big| \sigma^2, \tau, \bm{D}_T \sim N_{(d_x+d_z)} 
\left(
\bar{\mu}_\tau,
\sigma^2
\bar{H}_\tau^{-1} 
\right)
%\label{gamma_given_sigma2_conj}
$}. 
For example, a laptop with a 2.2GHz processor and 8GB RAM takes about 4.1 seconds to draw 10,000 posterior draws in an empirical example in Section \ref{section-application} that has ten slope coefficients in total. 

%Due to the availability of the closed-form for the conditional posteriors given $\tau$, the posterior sampling is simple and fast. One can first draw $\tau_{(1)},\ldots,\tau_{(S)}$ from the marginal posterior of $\tau$ as in \eqref{posterior_tau_improper} or \eqref{posterior_tau_conjugate} via for example Metropolis-Hastings algorithm, where $S$ is the number of posterior draws. For each $\tau_{(s)}$, one can sample posterior draws of $\sigma^2_{(s)}$ from the posteriors conditional on $\tau=\tau_{(s)}$, namely \eqref{posterior_sigma_improper} or \eqref{posterior_sigma_conjugate}. Conditional on $\tau$ and $\sigma^2$, we can draw $\gamma$ from \eqref{gamma_given_sigma2_uninf} or \eqref{gamma_given_sigma2_conj}. For example, a laptop with a 2.2GHz processor and 8GB RAM takes about 4.1 seconds to draw 10,000 posterior draws in the regression problem in Section \ref{section-application} that has ten slope coefficients in total. 
%%%%%%%%%%%%%%%%%%%%%%%%%%%%%%%%%%%%%%%%%%%%%%%%%%%%%%%%%%%%%%%%%%%%%%%%%%%%%%%%%%%%%%%
\subsection{Asymptotic theory}\label{section-normal-default-theory}

We investigate the asymptotic behavior of the Bayesian method under the normal likelihood and the conjugate prior defined above. We do so in two steps. Section \ref{SectionConsistency} shows that the marginal posterior of the break location $\tau$ contracts to the true value $\tau_0$ at the rate of $T^{-1}$, the same rate at which the least-squares estimator $\hat{\tau}_{LS}$ converges. The proof is based on studying the behavior of the log ratio of the marginal posterior densities of $\tau$. In addition, we establish the limiting distribution of the posterior mode of $\tau$. Section \ref{SectionBvm} establishes a Bernstein-von Mises type theorem for the regression slope coefficients $\gamma$. 
The proof is based on the $T$-consistency of the marginal posterior of $\tau$ and the fact that the conditional posterior for $\sqrt{T}\left(  \gamma - \hat{\gamma}_{LS} \right)$ given $\tau$ is asymptotically normal. 
Proofs of the theorems can be found in Appendix \ref{AppendixThms}.

\subsubsection{Marginal posterior of $\tau$}\label{SectionConsistency}

An intermediate step for proving the Bernstein-von Mises theorem is the marginal posterior consistency of $\tau$ at rate $T^{-1}$.  Marginal posteriors have not been studied extensively or systematically in the literature. Here, we directly analyze the form of the marginal posterior of $\tau$. 
Let $L_T(\tau)$ be the marginal likelihood conditional on $\tau$, that is 
\[
L_T(\tau)=
\int p(\bm{Y}_T |\bm{X}_T, \theta,\tau) \pi(\theta, \tau) d\theta,
\]
%where we regard $\pi(\theta, \tau)$ as the prior density if a proper prior is used and as being proportional to the prior if an improper prior is used. 
%
which is available up to a multiplicative constant under the normal likelihood and the conjugate prior as can be seen in \eqref{posterior_tau_conjugate}.
The marginal posterior density $\pi_T(\tau) $ of $\tau$ is defined as 
\[
\pi_T(\tau) 
%= \frac{ \pi(\tau|\bm{D}_T)  }{ \int  \pi(\tau|\bm{D}_T)   d\tau }
= \frac{ L_T(\tau)  }{ \int  L_T(\tau) d\tau }.
\]

The following theorem establishes the first step for proving the Bernstein-von Mises theorem, the $T$-consistency of the marginal posterior of $\tau$.
It states that the posterior mass outside of a ball around $\tau_0$ with radius proportional to $T^{-1}$ will be asymptotically negligible. 
\begin{theorem}[Marginal posterior consistency of $\tau$ at rate $T^{-1}$]\label{ThmConsistency}
Suppose  Assumption \ref{assumption1} holds. 
Then, under the normal likelihood 
and the conjugate prior described above,  
$\forall \eta>0, \epsilon>0$, $\exists M>0$ and $k>0$ such that $T \geq k \implies$
%Then for any positive sequence $\{ M\}$ such that $M \to \infty$ and $M \leq n$, 
\[
P_{\theta_0, \tau_0} \left( 
\int_{B^c_{M/T} (\tau_0)} \pi_T(\tau) d\tau < \eta
\right)
>
1-\epsilon,
\]
where for any constant $d>0$, $B^c_{d}(\tau_0)$ denotes the set difference $\mathcal{H} \setminus (\tau_0-d, \tau_0+d)$. 

%The same conclusion holds without Assumption \textbf{A5} when the prior satisfies \textbf{P2}. 
\end{theorem}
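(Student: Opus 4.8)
The plan is to work directly with the closed form of the marginal likelihood in \eqref{posterior_tau_conjugate} and to show that, once normalized by its value at $\tau_0$, the likelihood ratio decays fast enough away from $\tau_0$ that the posterior mass on $B^c_{M/T}(\tau_0)$ can be made arbitrarily small by choosing $M$ large. First I would write
\[
\int_{B^c_{M/T}(\tau_0)} \pi_T(\tau)\,d\tau
=\frac{\int_{B^c_{M/T}(\tau_0)} L_T(\tau)/L_T(\tau_0)\,d\tau}{\int_{\mathcal H} L_T(\tau)/L_T(\tau_0)\,d\tau},
\]
and reparametrize locally by $v=T(\tau-\tau_0)$, so that $d\tau=dv/T$ and $B^c_{M/T}(\tau_0)$ becomes $\{|v|>M\}$. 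For the denominator I would bound it below by the contribution of a fixed bounded neighborhood of $v=0$; since the normalized ratio $L_T(\tau_0+v/T)/L_T(\tau_0)$ stays bounded away from $0$ for bounded $v$ with high probability, the denominator is of order $1/T$ with high probability. The task then reduces to showing that the tail integral $T^{-1}\int_{|v|>M} L_T(\tau_0+v/T)/L_T(\tau_0)\,dv$ is small relative to this lower bound.

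The second step is to expand the log-ratio using \eqref{posterior_tau_conjugate}:
\[
\log\frac{L_T(\tau)}{L_T(\tau_0)}
=-\tfrac12\log\frac{\det \bar H_\tau}{\det \bar H_{\tau_0}}
-\bar a\,\log\frac{\bar b_\tau}{\bar b_{\tau_0}}
+\log\frac{\pi(\tau)}{\pi(\tau_0)}.
\]
The prior ratio is bounded because $\pi$ is positive and finite on $\mathcal H$, and the determinant ratio is negligible: as $\tau$ moves away from $\tau_0$ by $v/T$, only about $|v|$ rank-one blocks $z_tz_t'$ are added to or removed from $\chi_\tau'\chi_\tau$, an $O(|v|/T)$ relative perturbation of a matrix with eigenvalues of order $T$ by Assumption \ref{assumption1}(iii). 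The dominant term is therefore $-\bar a\log(\bar b_\tau/\bar b_{\tau_0})$ with $\bar a=\underline a+T/2$. Since $2\bar b_\tau$ equals the sum of squared residuals $S_T(\tau)$ up to $O(1)$ prior contributions and $\bar b_{\tau_0}$ is of order $T$, a first-order expansion gives $-\bar a\log(\bar b_\tau/\bar b_{\tau_0})\approx -\{S_T(\tau)-S_T(\tau_0)\}/(2\sigma_0^2)$, so everything hinges on the residual-sum-of-squares difference.

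The third and central step is to control $S_T(\tau)-S_T(\tau_0)$. Splitting the sample into observations correctly and incorrectly classified by the break at $\tau$, the misclassified block has size $\lfloor\tau T\rfloor-\lfloor\tau_0 T\rfloor\asymp|v|$, and each misclassified observation carries the jump $z_t'\delta_0$. Using Assumption \ref{assumption1}(i), (iii), (iv), I would show that $S_T(\tau)-S_T(\tau_0)$ has a deterministic drift of order $|v|\,\delta_0'\Sigma_Z\delta_0>0$, with $\Sigma_Z=R'\Sigma_X R$, together with a stochastic part behaving like a partial-sum process whose increments are $O_p(1)$. This yields, with high probability, a bound of the form $\log\{L_T(\tau_0+v/T)/L_T(\tau_0)\}\le -c\,|v|+(\text{fluctuation})$ in a local regime $M<|v|\le \epsilon_0 T$, and a uniform bound $\le -c'T$ in the global regime $|\tau-\tau_0|\ge\epsilon_0$, where the misclassification penalty is itself of order $T$. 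The positive drift, guaranteed by $\delta_0\neq0$, is what eventually dominates the fluctuation as $|v|$ grows.

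Finally I would combine these bounds: the global region contributes an exponentially negligible amount, while in the local region the tail integral $\int_{|v|>M} e^{-c|v|+(\text{fluctuation})}\,dv$ can be made smaller than $\eta$ times the $1/T$-order denominator by taking $M$ large, uniformly with probability exceeding $1-\epsilon$. I expect the main obstacle to be the uniform-in-$\tau$ control of the stochastic fluctuation of $S_T(\tau)-S_T(\tau_0)$: a maximal inequality for the partial-sum process governing the misclassified block is needed so that the random walk does not overwhelm the linear drift anywhere on $\{|v|>M\}$. This is precisely the mechanism underlying Bai's $T^{-1}$ rate for $\hat\tau_{LS}$, and it is the part of the argument requiring the most care.
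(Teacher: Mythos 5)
Your proposal is correct in outline and shares the paper's outer skeleton: bound the tail posterior mass by a ratio of integrals of $L_T(\tau)/L_T(\tau_0)$, use the closed form \eqref{posterior_tau_conjugate} to reduce the log marginal likelihood ratio to a comparison of sums of squared residuals, derive an exponential bound that is linear in $T|\tau-\tau_0|$, and integrate. The implementation of the middle layer, however, genuinely differs from the paper's, in two ways. First, you lower-bound the normalizing integral by the contribution of a bounded neighborhood $\{|v|\le \mathrm{const}\}$ of $\tau_0$, where the ratio stays bounded away from zero; the paper instead lower-bounds $\pi_T(\tau_0)$ by $\bigl[\int_{B^c_{M_0/T}(\tau_0)} L_T(\tau')/L_T(\tau_0)\,d\tau'\bigr]^{-1}$ and therefore needs the lower exponential bound \eqref{exp_lb} on the ratio over an exterior region. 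Both devices yield the required $1/T$-order denominator; yours is arguably the more standard one. Second, and more substantively, you control $S_T(\tau)-S_T(\tau_0)$ directly in the manner of Bai (1997): a misclassified block of size of order $|v|$, a deterministic drift of order $|v|\,\delta_0'R'\Sigma_X R\,\delta_0$, and partial-sum fluctuations tamed by a maximal inequality, all at the local scale $v=T(\tau-\tau_0)$. The paper instead factors the problem through $Q_T(\tau)=T^{-1}S_T(\tau)$ and $A_T(\tau)=-\tfrac12\log Q_T(\tau)$: Proposition \ref{PropLimitQ} gives the pointwise limit $Q(\tau)$ at rate $O_p(T^{-1/2})$, Proposition \ref{PropAMax} gives the kinked unique maximum of $A$ at $\tau_0$ (whence the negative linear decay constants), and Proposition \ref{PropACts} controls the centered process after dividing by $|\tau-\tau_0|$, which absorbs the $O_p(T^{-1})$ remainder through a factor $1/M$. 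Your route makes the source of the drift explicit and is more candid about the uniformity issue --- the maximal inequality you flag is exactly the delicate step, which the paper's Proposition \ref{PropACts} dispatches more tersely via the normalization by $|\tau-\tau_0|$ and pointwise rates. The paper's route buys modularity: Propositions \ref{PropRatio}--\ref{PropACts} are reused essentially verbatim for the non-conjugate extension (Theorem \ref{ThmConsistencyGeneral}) and for the posterior-mode corollary, whereas your SSR-level argument would have to be re-derived once a Laplace approximation replaces the conjugate closed form.
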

%%%%%%%%%%%%%%%%%%%%%%%%%%%%%%%%%%%%%%%%%%%%%%%%%%%%%%%%%%

%
The proof of Theorem 1 is built on some intermediate steps, Propositions \ref{PropRatio}-\ref{PropACts}.
It can be shown that 
$
\int_{ B^c_{M/T}(\tau_0) }  \pi_T(\tau) d\tau 
$ 
is bounded by the product of 
$
\int_{ B^c_{M/T}(\tau_0) } \frac{ L_T( \tau)   }{ L_T(\tau_0)  } d\tau  %\label{bounding}
$
and the inverse of
$ 
\int_{  B^c_{M_0/T} (\tau_0) } \frac{ L_T(\tau') }{ L_T( \tau_0 )} d\tau'   
$
for each $T$ and for any $M_0>0$.
Proposition \ref{PropRatio} shows that under the normal likelihood and the conjugate prior, 
due to the availability of the marginal likelihood conditional on $\tau$ up to a normalization constant as in \eqref{posterior_tau_conjugate},
studying 
the log marginal likelihood ratio
boils down to comparing the sum of squared residuals  $S_T(\tau)$.
Proposition \ref{PropLimitQ} establishes the probability limit of $T^{-1} S_T(\tau)$, for which we show examples in Figure \ref{fig_Qn}.
We then show that the limit of $T^{-1} S_T(\tau)$ achieves a unique minimum at $\tau_0$ (Proposition  \ref{PropAMax}), 
and study 
the modulus of continuity of an appropriate empirical process (Proposition \ref{PropACts})
in order to derive  bounds.
The detail of the proof of Theorem 1 can be found 
in Appendix \ref{ProofThmConsistency}.
%%%%%%%%%%%%%%%%%%%%%%%%%%%%%%%%%%%%%%%%%%%%%%%%%%%%%%%%%%
% Figure
\FloatBarrier
\graphicspath{{Figures_final/}}
\begin{figure}[!htb]
\center
  \includegraphics[width=0.5 \linewidth]{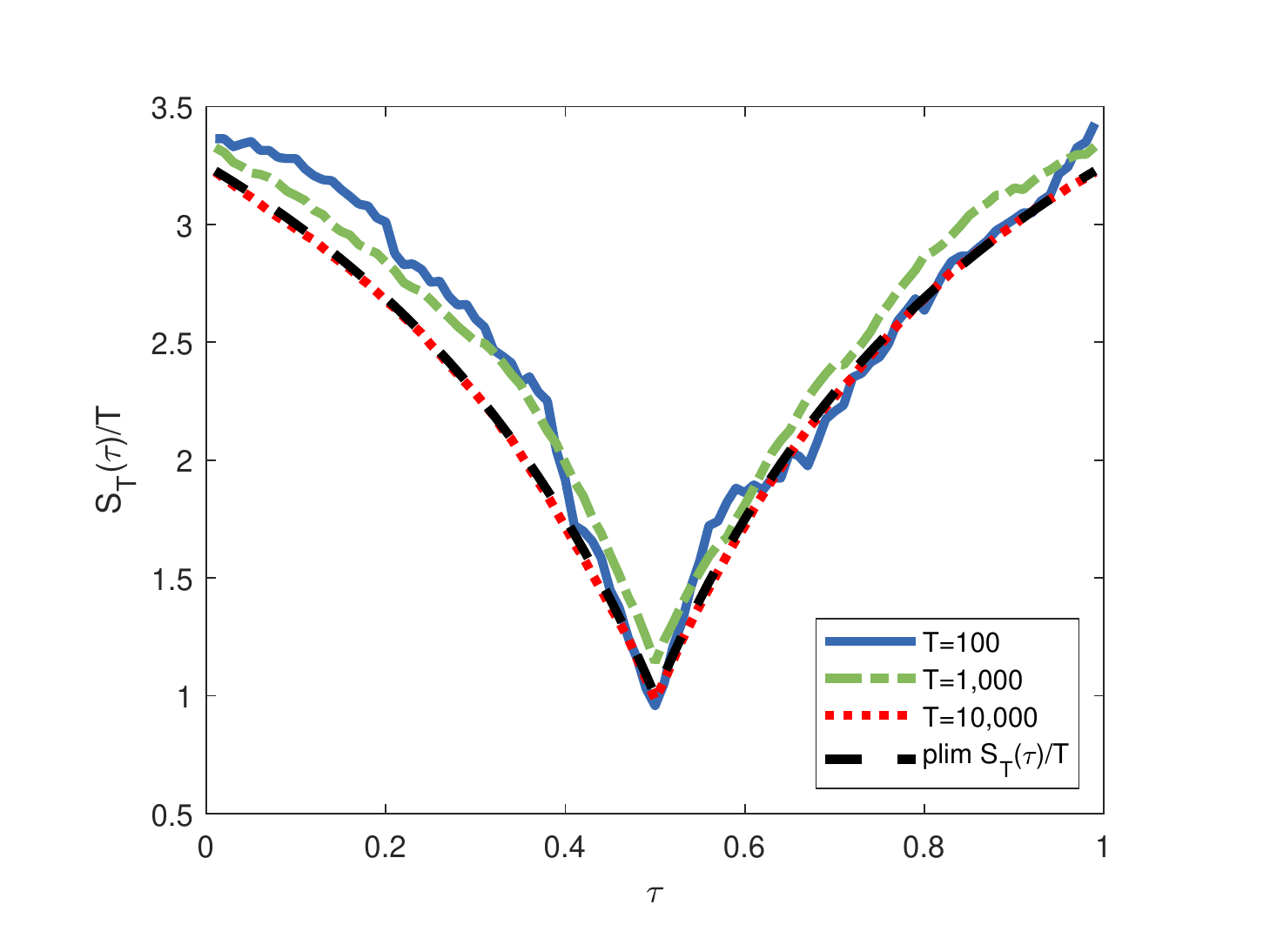}
\caption{\small{Example of $T^{-1} S_T(\tau)$ with
 $T=100$ (solid, blue), $T=1,000$ (dash-dotted, green), and $T=10,000$ (dotted, red) 
 and $\plim T^{-1} S_T(\tau)$ (dashed, black)
 }}\label{fig_Qn}
\end{figure}
\FloatBarrier

The Bayesian counterpart of the least-squares estimator $\hat{\tau}_{LS}$ would be the posterior mode:
\[
\hat{\tau}_{Bayes} = \argmax_{\tau \in \mathcal{H}} \pi_T(\tau).
\]
\cite{bai1997} shows that 
$ \argmax_{m} W^*(m) $
is the asymptotic distribution of $\hat{\tau}_{LS}$  
\footnote{
$W^*(m)$ is a stochastic process defined on the set of integers as follows: $W^*(0)=0$, $W^*(m)=W_1(m)$ for $m<0$, and $W^*(m)=W_2(m)$ for $m>0$, with 
\begin{alignat}{4}
W_1(m)& = -\delta_0 \sum_{i=m+1}^0 z_iz_i' \delta_0 &&+ 2\delta_0 \sum_{i=m+1}^0 z_i\epsilon_i, &&\text{ for } m=-1,-2,... \nonumber \\
W_2(m)& = -\delta_0 \sum_{i=1}^m z_iz_i' \delta_0 &&- 2\delta_0 \sum_{i=1}^m z_i\epsilon_i, &&\text{ for } m=1,2,... \nonumber
\end{alignat}
}.
A consequence of the proof of Theorem \ref{ThmConsistency} is that $\hat{\tau}_{Bayes} $ converges to the same limiting distribution.
%
%Theorem \ref{ThmMode} below establishes that $\hat{\tau}_{Bayes} $ shares this limiting distribution. 
%
See Appendix \ref{ProofThmMode} for a proof. 
\begin{cor}[Limiting distribution of the posterior mode of $\tau$]\label{ThmMode}
Suppose  Assumption \ref{assumption1} holds.
Then, under the normal likelihood 
and the conjugate prior described above,
\[
\floor*{T\left( \hat{\tau}_{Bayes} - \tau_0 \right) }   \overset{d}{\to} \argmax_{m} W^*(m) .
\]
\end{cor}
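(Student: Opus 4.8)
The plan is to show that, on the local scale $\tau=\tau_0+m/T$, maximizing the marginal posterior density $\pi_T(\tau)$ is asymptotically the same optimization problem as minimizing the sum of squared residuals $S_T(\tau)$, so that the posterior mode $\hat{\tau}_{Bayes}$ inherits the limiting distribution \cite{bai1997} derived for $\hat{\tau}_{LS}$. The key structural observation is that $\chi_\tau$ — and hence $S_T(\tau)$, $\bar{H}_\tau$, and $\bar{b}_\tau$ — depends on $\tau$ only through the integer break index $\floor*{\tau T}$. I would therefore reparametrize by $m := \floor*{T(\tau-\tau_0)}$, work with the localized criterion as a process on the integer lattice $\mathbb{Z}$, and then invoke an argmax argument.

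First, localization. From the bounds on the log marginal-likelihood ratio established in the proof of Theorem \ref{ThmConsistency} (together with Proposition \ref{PropRatio}), the posterior density outside a $T^{-1}$-neighborhood of $\tau_0$ is strictly dominated by $\pi_T(\tau_0)$ with probability tending to one, so the maximizer satisfies $\hat{m}_T := \floor*{T(\hat{\tau}_{Bayes}-\tau_0)} = O_p(1)$; for every $\epsilon>0$ there is $M$ with $P(|\hat{m}_T|\le M)>1-\epsilon$ for all large $T$, which lets me restrict attention to the finite window $\{-M,\dots,M\}$. Next, using the closed form \eqref{posterior_tau_conjugate} and Proposition \ref{PropRatio}, I would Taylor-expand the log posterior ratio about $\tau_0$. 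Writing $\bar{b}_\tau=\underline{b}+0.5\,S_T(\tau)+o_p(1)$, $\bar{a}=T/2+O(1)$, and using $\bar{a}/\bar{b}_{\tau_0}\to\sigma_0^{-2}$ (since $T^{-1}S_T(\tau_0)\to\sigma_0^2$), the leading term of $-\bar{a}(\log\bar{b}_\tau-\log\bar{b}_{\tau_0})$ is $-\frac{1}{2\sigma_0^2}[S_T(\tau)-S_T(\tau_0)]$; the second-order Taylor term is $O_p(T^{-1})$, the determinant factor $-\tfrac12\log(\det\bar{H}_\tau/\det\bar{H}_{\tau_0})$ is $O_p(T^{-1})$ (moving the break by $O(1)$ observations perturbs the $O(T)$ matrix $\chi_\tau'\chi_\tau$ by $O(1)$), and $\log(\pi(\tau)/\pi(\tau_0))\to 0$ by continuity and positivity of the prior at $\tau_0$. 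This yields, uniformly over $|m|\le M$,
\[
\log\frac{\pi_T(\tau_0+m/T)}{\pi_T(\tau_0)} = -\frac{1}{2\sigma_0^2}\bigl[S_T(\tau_0+m/T)-S_T(\tau_0)\bigr] + o_p(1).
\]

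The conclusion then follows by comparison with the least-squares estimator. Because the localized Bayes criterion and $-\frac{1}{2\sigma_0^2}$ times the localized SSR differ by a uniform $o_p(1)$ over the window, and the common limit is a positive multiple of Bai's process $W^*$, which possesses an a.s.\ unique maximizer separated from its competitors by a positive gap (its negative quadratic drift sends it to $-\infty$), the integer maximizer $\hat{m}_T$ of the former coincides with the integer minimizer of the latter — namely $\floor*{T(\hat{\tau}_{LS}-\tau_0)}$ — with probability tending to one. (The positive constant $\frac{1}{2\sigma_0^2}$ is irrelevant since it does not move the argmax.) A minor wrinkle is that $\hat{\tau}_{Bayes}$ maximizes over the continuum while $\pi_T$ depends on $\tau$ only through $\floor*{\tau T}$ apart from the smooth factor $\pi(\tau)$; since that factor is continuous and positive and its interval of constancy shrinks at rate $T^{-1}$, it cannot alter the selected break index asymptotically. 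Hence $\hat{m}_T$ shares the limiting law of $\floor*{T(\hat{\tau}_{LS}-\tau_0)}$, which \cite{bai1997} identifies as $\argmax_m W^*(m)$.

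The main obstacle I anticipate is the local expansion: producing a remainder that is genuinely $o_p(1)$ \emph{uniformly} over the window, which requires controlling the nonlinear $\log\bar{b}_\tau$ transform and verifying that every non-SSR contribution (the determinant ratio, the prior ratio, and the prior terms buried in $\bar{b}_\tau$) is uniformly negligible on the $m$-scale. Fortunately, pinning down the exact leading constant is not essential — only its positivity matters for the location of the argmax — and the heavier analytic inputs, namely the finite-dimensional behavior of the localized SSR process and the a.s.\ uniqueness of $\argmax_m W^*(m)$, are already supplied by \cite{bai1997} and can be invoked directly rather than re-derived.
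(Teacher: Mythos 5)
Your proposal is correct and takes essentially the same route as the paper's proof: use the closed form \eqref{posterior_tau_conjugate} and Proposition \ref{PropRatio} to reduce maximization of the marginal posterior to minimization of $S_T(\tau)$, use the bounds from the proof of Theorem \ref{ThmConsistency} to localize $\hat{\tau}_{Bayes}$ to a $T^{-1}$-window, and then transfer Bai's argmax limit $\argmax_m W^*(m)$ for the localized least-squares criterion to the posterior mode via an argmax/truncation argument. One minor slip worth noting: your intermediate claim $\bar{b}_\tau=\underline{b}+0.5\,S_T(\tau)+o_p(1)$ is not quite right — the prior-dependent remainder in $\bar{b}_\tau-0.5\,S_T(\tau)$ is only $O_p(1)$ (it has a nonzero probability limit) — but what your expansion actually requires, and what does hold, is that the \emph{variation} of this remainder across the local window $|\tau-\tau_0|\le M/T$ is $O_p(T^{-1})$, so your displayed uniform expansion and the resulting conclusion stand.
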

%%%%%%%%%%%%%%%%%%%%%%%%%%%%%%%%%%%%%%%%%%%%%%%%%%%%%%%%%%

\subsubsection{Bernstein-von Mises Theorem for $\gamma$}\label{SectionBvm}
The marginal posterior of $\gamma$ is a mixture with weights corresponding to the marginal posterior density $\pi_T(\tau) $. Furthermore, due to Theorem \ref{ThmConsistency}, we can focus our attention on the values of $\tau$ in a $T^{-1}$ neighborhood of $\tau_0$:
\[
\int p( \gamma |\tau,\bm{D}_T ) \pi_T(\tau) d\tau 
=
\int_{B_{M/T}(\tau_0)} p( \gamma|\tau,\bm{D}_T) \pi_T(\tau) d\tau +o_p(1).
\]
%%%%%%%%%%%%%%%%%%%%%%%%%%%%%%%%%%%%%%%%%%
We are now ready to establish the  Bernstein-von Mises type result. 
\begin{theorem}[Bernstein-von Mises theorem for the slope coefficients]\label{ThmBvm}
Suppose  Assumption \ref{assumption1} holds. 
Then, under the normal likelihood 
and the conjugate prior described above,
\[
d_{TV} \left(  
\pi \left[  
\begin{matrix}
\sqrt{T}\left( \gamma-  \hat{\gamma}_{LS}  \right)
\end{matrix}
\bigg| \bm{D}_T\right], 
N_{(d_x+d_z)}\left(  0,  \sigma^2_0V^{-1} \right) 
\right) \to 0 ,
\]
in $P_{\theta_0, \tau_0}-probability$ where $d_{TV}$ is the total variation distance. 

%The same conclusion holds without Assumption \textbf{A5} when the prior satisfies \textbf{P2}. 
\end{theorem}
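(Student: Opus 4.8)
The plan is to combine the mixture representation \eqref{mixture} with the $T^{-1}$-contraction of the marginal posterior of $\tau$ from Theorem \ref{ThmConsistency}, thereby reducing the claim to a \emph{uniform} statement about the conditional posteriors $p(\gamma\mid\tau,\bm{D}_T)$ over a shrinking neighborhood of $\tau_0$. Since the target is a mixture over $\tau$ and the map $\mu\mapsto d_{TV}(\mu,\phi)$ is convex for a fixed law $\phi$ (because $\pi_T$ integrates to one), I would first write
\[
d_{TV}\!\left(\pi\!\left[\sqrt T(\gamma-\hat\gamma_{LS})\mid\bm D_T\right],\,N(0,\sigma_0^2 V^{-1})\right)
\le \int d_{TV}\!\left(\pi\!\left[\sqrt T(\gamma-\hat\gamma_{LS})\mid\tau,\bm D_T\right],\,N(0,\sigma_0^2 V^{-1})\right)\pi_T(\tau)\,d\tau.
\]
By Theorem \ref{ThmConsistency} the portion of this integral over $B^c_{M/T}(\tau_0)$ is bounded by the posterior mass there, which is $o_p(1)$; so everything reduces to showing that the conditional total-variation distance in the integrand tends to $0$ uniformly over $\tau\in B_{M/T}(\tau_0)$.

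For the conditional law I would isolate three effects in \eqref{posterior_gamma_conjugate}. First, the degrees of freedom $2\bar a=2\underline a+T\to\infty$, so a quantitative bound on $d_{TV}\!\left(t_k(v,\cdot,\cdot),N(\cdot,\cdot)\right)$ that vanishes as $v\to\infty$ drives the $t$-law to a normal with the same location and shape, uniformly in $\tau$. Second, the rescaled shape converges: since $T^{-1}\bar H_\tau=T^{-1}\underline H+T^{-1}\chi_\tau'\chi_\tau\to V$ and $\bar b_\tau/\bar a\to\sigma_0^2$ — because $2\bar b_\tau=S_T(\tau)+O_p(1)$ and $T^{-1}S_T(\tau)\to\sigma_0^2$ on the neighborhood by Proposition \ref{PropLimitQ} — one gets $(\bar b_\tau/\bar a)\,T\bar H_\tau^{-1}\to\sigma_0^2 V^{-1}$ uniformly over $B_{M/T}(\tau_0)$. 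Third, the prior's effect on the centering is negligible: $\bar\mu_\tau-\hat\gamma(\tau)$ differs from $\hat\gamma(\tau)$ only through the $O(1)$ terms $\underline H\underline\mu$ and the $O(1)$ perturbation $\underline H$ inside $\bar H_\tau^{-1}$, each damped by order $T^{-1}$, so $\sqrt T(\bar\mu_\tau-\hat\gamma(\tau))=O_p(T^{-1/2})$.

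The step I expect to be the main obstacle is controlling the recentering from the $\tau$-dependent $\hat\gamma(\tau)$ to the single point $\hat\gamma_{LS}=\hat\gamma(\hat\tau_{LS})$, uniformly over the neighborhood. The key observation is that $\chi_\tau=(X,Z_{2\tau})$ depends on $\tau$ only through the break index $\floor*{\tau T}$, so for $\tau,\tau'\in B_{M/T}(\tau_0)$ the matrices $\chi_\tau'\chi_\tau$ and vectors $\chi_\tau'Y$ differ only in the contributions of the at most $2M+1$ observations whose regime assignment changes. With bounded covariates and $E(\epsilon_t^2\mid x_t)=\sigma_0^2$, each such contribution is $O_p(1)$, giving $\|\chi_\tau'\chi_\tau-\chi_{\tau'}'\chi_{\tau'}\|=O_p(M)$ and $\|\chi_\tau'Y-\chi_{\tau'}'Y\|=O_p(M)$. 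Feeding these into $\hat\gamma(\tau)-\hat\gamma(\tau')=(\chi_\tau'\chi_\tau)^{-1}[\chi_\tau'Y-\chi_{\tau'}'Y]+[(\chi_\tau'\chi_\tau)^{-1}-(\chi_{\tau'}'\chi_{\tau'})^{-1}]\chi_{\tau'}'Y$ together with $(\chi_\tau'\chi_\tau)^{-1}=O_p(T^{-1})$ yields $\hat\gamma(\tau)-\hat\gamma(\tau')=O_p(M/T)$, hence $\sqrt T(\hat\gamma(\tau)-\hat\gamma_{LS})=O_p(M/\sqrt T)=o_p(1)$ uniformly, on the event $\hat\tau_{LS}\in B_{M/T}(\tau_0)$ which has probability $\to 1$ by \cite{bai1997}. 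Because only $O(M)$ distinct values of $\floor*{\tau T}$ occur in the neighborhood, the supremum over $\tau$ is just a finite maximum and needs no chaining.

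Finally I would assemble the pieces: combining the three effects with the recentering bound (a location shift $c_\tau=\sqrt T(\bar\mu_\tau-\hat\gamma_{LS})=o_p(1)$ contributes $o_p(1)$ to the total variation because the limit normal is nondegenerate) shows the integrand tends to $0$ uniformly over $B_{M/T}(\tau_0)$, and then the convexity bound plus the $o_p(1)$ tail mass from Theorem \ref{ThmConsistency} delivers the claim in $P_{\theta_0,\tau_0}$-probability. The conceptual reason the Bayesian and frequentist limits coincide is exactly that the posterior spread of $\tau$ is $O_p(T^{-1})$, fast enough that the induced variation in $\hat\gamma(\tau)$ is of smaller order than the $T^{-1/2}$ sampling scale of $\gamma$; the finite-sample inflation of the credible intervals seen in the simulations is precisely this variation before the asymptotics dominate.
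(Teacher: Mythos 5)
Your proposal is correct, and its skeleton --- the convexity bound $d_{TV}\bigl(\pi[z\mid\bm{D}_T],\phi\bigr)\le\int d_{TV}\bigl(\pi[z\mid\tau,\bm{D}_T],\phi\bigr)\,d\pi(\tau\mid\bm{D}_T)$, the reduction to $B_{M/T}(\tau_0)$ via Theorem \ref{ThmConsistency}, and the analysis of the location and shape of the conjugate conditional posterior \eqref{posterior_gamma_conjugate} --- is exactly the paper's. Where you genuinely diverge is in the two technical workhorses. First, the recentering $\sqrt{T}\bigl(\hat\gamma(\tau)-\hat\gamma_{LS}\bigr)=o_p(1)$: the paper proves this by expanding both $\sqrt{T}\bigl(\hat\gamma(\tau)-\gamma_0\bigr)$ and $\sqrt{T}\bigl(\hat\gamma_{LS}-\gamma_0\bigr)$ around the truth, writing $Y=X\beta_0+Z_{2\tau}\delta_0+\epsilon_\tau^*$ with $\epsilon_\tau^*=(Z_{2\tau_0}-Z_{2\tau})\delta_0+\epsilon$ and showing both estimators share the leading term $\bigl[\tfrac1T\chi_{\tau_0}'\chi_{\tau_0}\bigr]^{-1}\tfrac{1}{\sqrt T}\chi_{\tau_0}'\epsilon$; you instead observe that $\chi_\tau$ and $\chi_{\hat\tau_{LS}}$ differ in at most $2M+1$ rows, so with bounded covariates the normal equations are perturbed by $O_p(M)$ and the OLS estimator by $O_p(M/T)$, whence $\sqrt{T}\cdot O_p(M/T)=o_p(1)$ on the event $\hat\tau_{LS}\in B_{M/T}(\tau_0)$, whose probability tends to one by Bai's $T$-consistency. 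Your argument is more elementary, gives an explicit $O_p(M/\sqrt{T})$ rate, and correctly notes that uniformity reduces to a finite maximum over the $O(M)$ admissible break indices; the paper's version instead recycles the classical asymptotic-linearity computations and never needs to count rows. Second, the conversion to total variation: the paper bounds $d_{TV}$ by $2\sqrt{\mathrm{KL}}$ and applies the explicit three-term bound \eqref{KLbound} on the KL divergence between two normals, treating the $t$-to-normal step somewhat informally via an ``$\overset{a}{\sim}$'' statement, whereas you invoke affine invariance to get a bound on $d_{TV}\bigl(t_k(v,\cdot,\cdot),N(\cdot,\cdot)\bigr)$ that depends only on $(k,v)$ and vanishes as $v=2\bar a\to\infty$, plus continuity of $d_{TV}$ in the location and shape at a nondegenerate normal. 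Your route is tidier on the $t$-to-normal step, which is the one place the paper's write-up glosses; the paper's KL bound, in exchange, packages the mean and covariance discrepancies into a single inequality. Both versions of both steps are sound, and your quantifier ordering (fix $M$ from Theorem \ref{ThmConsistency} and from the concentration of $\hat\tau_{LS}$, then send $T\to\infty$) matches what the paper's argument implicitly requires.
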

The proof of Theorem 2 exploits the fact that the conditional posterior for $\sqrt{T}\left(  \gamma - \hat{\gamma}_{LS} \right)$ given $\tau$ is asymptotically normal, which is close to the asymptotic distribution of $ \hat{\gamma}_{LS}$ when $\tau$ is close to $\tau_0$. A bound on the Kullback–Leibler (KL) divergence between two normal densities together with the $T$-consistency is used to make the argument precise. 
The proof is presented in Appendix \ref{ProofThmBvm}.

%%%%%%%%%%%%%%%%%%%%%%%%%%%%%%%%%%%%%%%%%%%%%%%%%%%%%%%%%%%%%%%%%%%%%%%%%%%%%%%%%%%%%%%
\section{An extension to non-conjugate priors}\label{section-general-theory}
The previous section establishes the asymptotic properties of the posterior distributions under the conjugate prior. 
A natural question is whether these results can be extended to other priors. 
For example, an independent prior between the slope coefficients $\gamma$ and the error variance $\sigma^2$, e.g.,\
$\pi(\gamma,\sigma^2)=\pi(\gamma)\pi(\sigma^2)$ with $\gamma \sim N_{(d_x+d_z)}(\underline{\mu}, \underline{\Sigma})$ and $\sigma^2 \sim InvGamma(\underline{a},\underline{b})$,
 is a popular choice for the Bayesian estimation of regression models in practice. 
Under the normal likelihood and the conjugate prior, the analytical expressions of the marginal posterior of $\tau$ up to a normalization constant \eqref{posterior_tau_conjugate}
and the conditional posterior of $\gamma$ given $\tau$ \eqref{posterior_gamma_conjugate}
facilitate the theoretical analysis. They are not available, for instance, under the independent prior mentioned above. 
In this section, we extend the theoretical results by keeping the normal likelihood \eqref{normal_like} but without requiring  the  conjugate prior on $\theta$.
In order to study the asymptotic behavior of the posterior distributions without having their closed-form expressions,
we employ Laplace approximation type results in \cite{hong_preston2012}.
To do so, we make an additional assumption as shown below. 
Let $\hat{\theta}(\tau)$ be the maximum likelihood estimator of $\theta$ conditional on $\tau \in \mathcal{H}$, i.e.,\ 
$\hat{\theta}(\tau)=\arg\sup_{\theta \in  \Theta} \log p(\bm{Y}_T | \bm{X}_T, \theta, \tau)$.
Denote by $\theta^*(\tau)$  the corresponding pseudo true parameter value
that minimizes the KL divergence between the model $p(\bm{Y}_T | \bm{X}_T, \theta, \tau)$
and the DGP. 

\begin{assumption}\label{assumption2}
\quad 
\begin{enumerate}[label=(\roman*)]
\item
There is a compact convex subset $\Theta$ of $\mathbb{R}^{d_x+d_z+1}$ 
such that $\theta^*(\tau) \in int\left(\Theta \right)$  for all $\tau \in \mathcal{H}$.
\item The prior $\pi(\theta,\tau)$ is supported on $\Theta \times \mathcal{H}$. It is continuous in $\theta$ and bounded away from 0 and $\infty$ around $\left( \theta^*(\tau), \tau \right)$  for all $\tau \in \mathcal{H}$.
\end{enumerate}
\end{assumption}
\noindent Under the normal likelihood and Assumption \ref{assumption1}, together with Assumption \ref{assumption2}, 
we can invoke the Laplace approximation results of \cite{hong_preston2012}.
Note that, under  the normal likelihood and Assumption \ref{assumption1},  $\theta^*(\tau)$ exists and is a function of parameters in the DGP.
%Assumption \ref{assumption2} requires an existence of a compact convex set $\Theta$ whose interior includes all  $\theta^*(\tau)$'s and the prior to be supported on $\Theta$. 
%
In this section, we no longer assume the natural conjugate prior on $\theta$.
For instance, the independent prior $\pi(\gamma,\sigma^2,\tau)=\pi(\gamma)\pi(\sigma^2) \pi(\tau)$  mentioned above
satisfies the conditions in (ii) of  Assumption \ref{assumption2}
 as long as they are truncated on $\Theta$ and $\pi(\tau)$ is positive and finite at all $\tau$.
%
%Note that in contrast to (\nameref{prior_default}), (\nameref{prior_general}) does not require $\theta$ and $\tau$ to be independent \textit{a priori}. 

Theorem 3 below establishes the $T$-consistency of the marginal posterior of $\tau$ under this prior and the additional assumption. 
%%%%%%%%%%%%%%%%%%%%%%%%%%%%%%%%%%%%%%%%%%%%%%%%%%%%%%%%%%
%%%%%%%%%%%%%%%%%%%%%%%%%%%%%%%%%%%%%%%%%%%%%%%%%%%%%%%%%%
\begin{theorem}[Marginal posterior consistency of $\tau$ at rate $T^{-1}$, non-conjugate priors]\label{ThmConsistencyGeneral}
Suppose  Assumptions \ref{assumption1} and \ref{assumption2} hold. Then, under the normal likelihood,  
$\forall \eta>0, \epsilon>0$, $\exists M>0$ and $k>0$ such that $T \geq k \implies$
%Then for any positive sequence $\{ M\}$ such that $M \to \infty$ and $M \leq n$, 
\[
P_{\theta_0, \tau_0} \left( 
\int_{B^c_{M/T} (\tau_0)} \pi_T(\tau) d\tau < \eta
\right)
>
1-\epsilon,
\]
where for any constant $d>0$, $B^c_{d}(\tau_0)$ denotes the set difference $\mathcal{H} \setminus (\tau_0-d, \tau_0+d)$. 
\end{theorem}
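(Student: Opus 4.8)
The plan is to follow the architecture of the proof of Theorem \ref{ThmConsistency} essentially verbatim, replacing the single ingredient that relied on conjugacy — the closed form \eqref{posterior_tau_conjugate} for the marginal likelihood — by a Laplace approximation. Recall that in Theorem \ref{ThmConsistency} the tail mass $\int_{B^c_{M/T}(\tau_0)}\pi_T(\tau)\,d\tau$ was bounded by the product of $\int_{B^c_{M/T}(\tau_0)} L_T(\tau)/L_T(\tau_0)\,d\tau$ and the reciprocal of $\int_{B^c_{M_0/T}(\tau_0)} L_T(\tau')/L_T(\tau_0)\,d\tau'$. This decomposition is prior-agnostic, so it remains available; what changes is how I estimate the ratio $L_T(\tau)/L_T(\tau_0)$.

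First I would invoke the Laplace approximation to the integrated likelihood of \cite{hong_preston2012}, whose hypotheses are supplied by Assumptions \ref{assumption1} and \ref{assumption2} (a compact $\Theta$ containing the pseudo-true value $\theta^*(\tau)$ in its interior, a prior continuous and bounded away from $0$ and $\infty$ near $(\theta^*(\tau),\tau)$, and the smoothness and identifiability of the normal log-likelihood). This yields, for each $\tau$,
\[
L_T(\tau) = (2\pi)^{(d_x+d_z+1)/2}\,\bigl|J_T(\tau)\bigr|^{-1/2}\,p\bigl(\bm{Y}_T\mid \bm{X}_T,\hat{\theta}(\tau),\tau\bigr)\,\pi\bigl(\hat{\theta}(\tau),\tau\bigr)\,(1+o_p(1)),
\]
where $J_T(\tau)=-\nabla_\theta^2\log p(\bm{Y}_T\mid\bm{X}_T,\hat{\theta}(\tau),\tau)$ is the observed information at the conditional MLE. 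Taking logs of the ratio and using that, under the normal likelihood, the concentrated log-likelihood equals $-\tfrac{T}{2}\log\bigl(S_T(\tau)/T\bigr)$ up to an additive constant independent of $\tau$, I obtain
\[
\log\frac{L_T(\tau)}{L_T(\tau_0)} = -\frac{T}{2}\bigl[\log S_T(\tau)-\log S_T(\tau_0)\bigr] - \frac12\log\frac{|J_T(\tau)|}{|J_T(\tau_0)|} + \log\frac{\pi(\hat{\theta}(\tau),\tau)}{\pi(\hat{\theta}(\tau_0),\tau_0)} + o_p(1).
\]
This reproduces the reduction of Proposition \ref{PropRatio} — the log ratio is governed by the sum-of-squared-residuals comparison — now with two asymptotically negligible correction terms appended: the determinant ratio is $O_p(1)$ because $|J_T(\tau)|=T^{d_x+d_z+1}\,|J_T(\tau)/T|$ with $J_T(\tau)/T$ converging to a positive definite limit, so the powers of $T$ cancel, while the prior ratio is bounded by Assumption \ref{assumption2}(ii). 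Hence the $T$-scale behaviour is driven entirely by the gap $S_T(\tau)-S_T(\tau_0)$, exactly as in the conjugate case.

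At this point the analysis of $S_T(\tau)$ carries over with no modification: Propositions \ref{PropLimitQ}, \ref{PropAMax}, and \ref{PropACts} concern $T^{-1}S_T(\tau)$, its probability limit, the uniqueness of its minimizer $\tau_0$, and the modulus of continuity of the associated empirical process, none of which involve the prior. Combining these with the two bounds on $L_T(\tau)/L_T(\tau_0)$ and running the same argument that delivered Theorem \ref{ThmConsistency} — the profile term forces $L_T(\tau)/L_T(\tau_0)$ to decay geometrically in $|T(\tau-\tau_0)|$ outside the $M/T$ ball, while the ratio stays bounded away from $0$ and $\infty$ for $\tau$ within $O(1/T)$ of $\tau_0$ — yields the claimed $T^{-1}$ contraction of $\pi_T(\tau)$.

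The main obstacle I anticipate is uniformity of the Laplace approximation in $\tau$. The result of \cite{hong_preston2012} is stated for a single model, whereas here $\tau$ indexes a family of models with $\tau$-dependent pseudo-true values $\theta^*(\tau)$, and the tail bound must survive integration of $L_T(\tau)/L_T(\tau_0)$ over $B^c_{M/T}(\tau_0)$. Two features make this tractable: the integral over $\tau$ is effectively a finite sum, because $S_T(\tau)$ and $\hat{\theta}(\tau)$ depend on $\tau$ only through $\floor*{\tau T}$, which takes at most $T$ values; and Assumption \ref{assumption2} furnishes the uniform control — over the compact $\mathcal{H}$ bounded away from $0$ and $1$, and over the compact $\Theta$ — of the prior and of the information-matrix eigenvalues needed for the $o_p(1)$ and $O_p(1)$ remainders to hold uniformly in $\tau$ rather than merely pointwise. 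Verifying that the Hong--Preston remainder is genuinely uniform over these at-most-$T$ values of $\tau$, so that summing the exponentially small ratios still gives a vanishing tail, is the technical crux.
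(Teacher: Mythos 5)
Your proposal takes essentially the same route as the paper's proof: it invokes the Hong--Preston Laplace approximation to reduce $\frac{1}{T}\log\left(L_T(\tau)/L_T(\tau_0)\right)$ to $\frac{1}{2}\log\left(S_T(\tau_0)/S_T(\tau)\right)$ plus asymptotically negligible prior and determinant terms (exactly the role Proposition \ref{PropRatio} plays in the conjugate case), and then reuses Propositions \ref{PropLimitQ}--\ref{PropACts} and the bounding argument of Theorem \ref{ThmConsistency}. The differences are minor — the paper evaluates the prior at the pseudo-true value $\theta^*(\tau)$ rather than at $\hat{\theta}(\tau)$, and your explicit concern about uniformity of the Laplace remainder in $\tau$ is a point the paper handles only implicitly (applying the approximation "for each $\tau$"), so if anything your treatment flags a subtlety the paper glosses over.
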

%%%%%%%%%%%%%%%%%%%%%%%%%%%%%%%%%%%%%%%%%%%%%%%%%%%%%%%%%%
\noindent Recall that while proving the $T$-consistency  under  the conjugate prior (i.e.,\ Theorem 1),  
we utilize
the closed-form expression of the marginal posterior of $\tau$ up to a multiplicative constant \eqref{posterior_tau_conjugate} 
in order to study the behavior of the marginal likelihood ratio conditional on $\tau$. 
Under   non-conjugate priors, such expression is not available in general.
For this reason, we invoke a Laplace approximation   to investigate the quantity 
$\int p(\bm{Y}_T | \bm{X}_T, \theta, \tau) \pi(\theta,\tau) d\theta$ to prove Theorem 3. See Appendix \ref{ProofThmConsistencyGeneral} for the detail.

As in the previous section, an implication of the $T$-consistency of the marginal posterior of $\tau$ is 
that the posterior mode converges to the limiting distribution of $\hat{\tau}_{LS}$. Proof is in Appendix \ref{ProofThmModeGeneral}.
%%%%%%%%%%%%%%%%%%%%%%%%%%%%%%%%%%%%%%%%%%%%%%%%%%%%%%%%%%
\begin{cor}[Limiting distribution of the posterior mode of $\tau$, non-conjugate priors]\label{ThmModeGeneral}
Suppose  Assumptions \ref{assumption1} and \ref{assumption2} hold. Then, under the normal likelihood,  
\[
\floor*{T\left( \hat{\tau}_{Bayes} - \tau_0 \right)}  \overset{d}{\to} \argmax_{m} W^*(m),
\]
where the stochastic process $W^*(m)$ is defined in Section \ref{SectionConsistency}. 
\end{cor}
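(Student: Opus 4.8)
The plan is to mirror the proof of Corollary \ref{ThmMode} (the conjugate case), replacing the explicit formula \eqref{posterior_tau_conjugate} for the marginal likelihood by the Laplace approximation of \cite{hong_preston2012} already invoked in Theorem \ref{ThmConsistencyGeneral}. Since $\hat{\tau}_{Bayes} = \argmax_{\tau\in\mathcal{H}}\pi_T(\tau)$ and $\pi_T(\tau)\propto L_T(\tau)$, it suffices to analyze $\argmax_\tau \log L_T(\tau)$. First I would write the Laplace expansion, integrating $\theta$ out against the joint prior:
\[
\log L_T(\tau) = \ell_T(\hat{\theta}(\tau),\tau) + \log\pi(\hat{\theta}(\tau),\tau) - \tfrac{1}{2}\log\det J_T(\tau) + \tfrac{d}{2}\log(2\pi) + o_p(1),
\]
where $d = d_x+d_z+1$, $\ell_T$ is the log-likelihood, $\hat{\theta}(\tau)$ is the conditional MLE, and $J_T(\tau)$ is the observed information at $\hat{\theta}(\tau)$. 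For the normal model the concentrated log-likelihood is $\ell_T(\hat{\theta}(\tau),\tau) = -\tfrac{T}{2}\log\bigl(S_T(\tau)/T\bigr) + \mathrm{const}$, so the dominant term in $\tau$ is a strictly decreasing transform of $S_T(\tau)$; maximizing it is equivalent to minimizing $S_T(\tau)$, which is exactly the least-squares objective \eqref{tau_hat}.

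Next I would localize and rescale. By Theorem \ref{ThmConsistencyGeneral} the marginal posterior, and hence its mode, concentrates in $B_{M/T}(\tau_0)$ with probability approaching one, so it is enough to study $\tau = \tau_0 + m/T$ over a bounded range of integers $m$ and to show that the discrete process $G_T(m) := \log L_T(\tau_0 + m/T) - \log L_T(\tau_0)$ converges in its finite-dimensional distributions to the process $m\mapsto W^*(m)/(2\sigma_0^2)$. The correction terms $\log\pi(\hat{\theta}(\tau),\tau)$ and $\log\det\bigl(J_T(\tau)/T\bigr)$ are smooth in $\tau$ and are evaluated at points $\tau_0+m/T \to \tau_0$, so (using $\hat{\theta}(\tau)\to\theta^*(\tau)$ and the continuity in Assumption \ref{assumption2}) their contribution to $G_T(m)$ vanishes in probability for each fixed $m$; the $\tfrac{d}{2}\log T$ piece of $\log\det J_T$ is constant in $\tau$ and cancels in the difference. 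The profile part contributes $-\tfrac{T}{2}\log\bigl(S_T(\tau)/S_T(\tau_0)\bigr)$; expanding the logarithm and using $S_T(\tau_0)/T \overset{p}{\to}\sigma_0^2$ together with Bai's computation that $S_T(\tau_0+m/T) - S_T(\tau_0) \overset{d}{\to} -W^*(m)$ over bounded $m$ yields the stated limit. Because the argmax functional is invariant under multiplication by the positive constant $1/(2\sigma_0^2)$, the limiting maximizer is $\argmax_m W^*(m)$.

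Finally I would conclude via an argmax–continuous-mapping argument on the integers: the finite-dimensional convergence of $G_T$, combined with the tightness supplied by Theorem \ref{ThmConsistencyGeneral} (which rules out the mode escaping to large $|m|$) and the almost-sure uniqueness of $\argmax_m W^*(m)$ established in \cite{bai1997}, gives $\floor*{T(\hat{\tau}_{Bayes}-\tau_0)} = \argmax_m G_T(m) \overset{d}{\to} \argmax_m W^*(m)$. I expect the main obstacle to be the control of the Laplace correction terms: unlike the conjugate case, where the analogues $\det(\bar{H}_\tau)$ and $\bar{b}_\tau$ in \eqref{posterior_tau_conjugate} are available in closed form, here I must verify that the Hong–Preston approximation holds with a remainder that is $o_p(1)$ uniformly over the shrinking window $B_{M/T}(\tau_0)$, and that the prior and information-determinant differences are negligible at the $m/T$ scale. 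Once these smooth terms are shown not to perturb the argmax, the discontinuous least-squares part drives the limit exactly as in the conjugate Corollary \ref{ThmMode}.
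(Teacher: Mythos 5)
Your proposal takes essentially the same route as the paper: its proof likewise invokes the Hong--Preston Laplace approximation (via the identity \eqref{desired_eq} established in the proof of Theorem \ref{ThmConsistencyGeneral}) to reduce maximizing $\log L_T(\tau)$ to minimizing $S_T(\tau)$, uses Theorem \ref{ThmConsistencyGeneral} to conclude $\hat{\tau}_{Bayes} = \tau_0 + O_p(T^{-1})$, and then repeats the argument of Corollary \ref{ThmMode} — Bai's uniform convergence of $V_T(\tau)-V_T(\tau_0)$ to $W^*\left(\floor*{T(\tau-\tau_0)}\right)$ on bounded windows combined with a truncation/argmax argument. Your added care about the uniformity of the Laplace remainder over the shrinking window and about the positive constant $1/(2\sigma_0^2)$ not affecting the argmax merely makes explicit steps that the paper leaves implicit.
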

%%%%%%%%%%%%%%%%%%%%%%%%%%%%%%%%%%%%%%%%%%%%%%%%%%%%%%%%%%

Theorem 4 establishes our main theoretical result, the Bernstein-von Mises theorem for $\gamma$, under the prior defined in Assumption \ref{assumption2} (ii).
%%%%%%%%%%%%%%%%%%%%%%%%%%%%%%%%%%%%%%%%%%%%%%%%%%%%%%%%%%
\begin{theorem}[Bernstein-von Mises theorem for the slope coefficients, non-conjugate priors]\label{ThmBvmGeneral}
Suppose  Assumptions \ref{assumption1} and \ref{assumption2} hold. Then, under the normal likelihood,  
\[
d_{TV} \left(  
\pi \left[  
\begin{matrix}
\sqrt{T}\left( \gamma-  \hat{\gamma}_{LS}  \right)
\end{matrix}
\bigg| \bm{D}_T\right], 
N_{(d_x+d_z)}\left(  0,  \sigma^2_0V^{-1} \right) 
\right) \to 0 ,
\]
in $P_{\theta_0, \tau_0}-probability$ where $d_{TV}$ is the total variation distance. 
\end{theorem}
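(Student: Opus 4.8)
The plan is to follow the architecture of the conjugate proof of Theorem \ref{ThmBvm}, replacing the explicit conditional posterior \eqref{posterior_gamma_conjugate}---which is unavailable without conjugacy---by a Laplace approximation. The starting point is the mixture representation \eqref{mixture}. By Theorem \ref{ThmConsistencyGeneral}, the marginal posterior of $\tau$ concentrates on a $T^{-1}$ neighborhood of $\tau_0$, so that
\[
\int p(\gamma\mid\tau,\bm{D}_T)\,\pi_T(\tau)\,d\tau
=\int_{B_{M/T}(\tau_0)} p(\gamma\mid\tau,\bm{D}_T)\,\pi_T(\tau)\,d\tau + o_p(1),
\]
and it suffices to control the conditional posterior of $\sqrt{T}(\gamma-\hat{\gamma}_{LS})$ uniformly for $\tau\in B_{M/T}(\tau_0)$.

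First I would invoke the conditional Laplace approximation of \cite{hong_preston2012} to obtain a conditional Bernstein--von Mises result: for $\tau$ fixed, the posterior of $\sqrt{T}(\gamma-\hat{\gamma}(\tau))$ given $\tau$ and $\bm{D}_T$ is asymptotically normal, centered at $0$ with covariance $\Omega_T(\tau)$ equal to the relevant block of the inverse observed information. Assumption \ref{assumption2} supplies exactly the hypotheses their result requires: a compact convex $\Theta$ with the pseudo-true value $\theta^*(\tau)$ in its interior, and a prior continuous and bounded away from $0$ and $\infty$ near $(\theta^*(\tau),\tau)$. Because the working likelihood \eqref{normal_like} is Gaussian and hence exactly quadratic in $\gamma$, the $\gamma$-block of the Hessian is $\chi_\tau'\chi_\tau/\sigma^2$; combined with the posterior concentration of $\sigma^2$ at $\sigma_0^2$ and the convergence $T^{-1}\chi_\tau'\chi_\tau\to V$ from Assumption \ref{assumption1}, this identifies $\Omega_T(\tau)\to\sigma_0^2 V^{-1}$, matching the frequentist covariance in \eqref{conventional_theory}.

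Next I would stitch the conditional approximations together. Two facts are needed for $\tau\in B_{M/T}(\tau_0)$: (a) the centering is stable, $\sqrt{T}(\hat{\gamma}(\tau)-\hat{\gamma}_{LS})=o_p(1)$, which follows because moving $\tau$ by $O(T^{-1})$ reallocates only $O(1)$ observations between regimes and changes $\hat{\gamma}(\tau)$ by $O_p(T^{-1})$---the estimator-level counterpart of the equivalence \eqref{conventional_theory} of \citet{bai1997}; and (b) the covariance converges, $\Omega_T(\tau)\to\sigma_0^2 V^{-1}$, uniformly on the neighborhood. Given (a) and (b), I would bound the total variation distance between each conditional normal $N\!\left(\sqrt{T}(\hat{\gamma}(\tau)-\hat{\gamma}_{LS}),\,\Omega_T(\tau)\right)$ and the target $N(0,\sigma_0^2 V^{-1})$ by the Kullback--Leibler bound for two Gaussians together with Pinsker's inequality; both the mean gap and the covariance gap vanish, so each component distance tends to $0$. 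Finally, by convexity of $d_{TV}$ in its arguments, the total variation distance between the mixture over $B_{M/T}(\tau_0)$ and $N(0,\sigma_0^2 V^{-1})$ is at most the $\pi_T$-weighted average of the component distances, which tends to $0$; adding the $o_p(1)$ truncation error completes the argument.

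The main obstacle is upgrading the pointwise Laplace approximation of \cite{hong_preston2012} to one that holds \emph{uniformly} over the shrinking band $\tau\in B_{M/T}(\tau_0)$, since only uniform control of the remainder lets me integrate the conditional result against $\pi_T(\tau)$. This requires uniform (in $\tau$) bounds on $\hat{\theta}(\tau)$, on the observed information, and on the quadratic-expansion remainder. The saving grace is that the band has width $O(T^{-1})$ and the regressor cross-products $T^{-1}\chi_\tau'\chi_\tau$ vary by only $O(T^{-1})$ across it, so the conditional MLE, Hessian, and remainder are all essentially constant over the band; making this equicontinuity precise---and verifying the centering bound (a) uniformly rather than pointwise---is the crux of the argument.
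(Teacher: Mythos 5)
Your proposal matches the paper's own proof essentially step for step: truncation of the mixture to $B_{M/T}(\tau_0)$ via Theorem \ref{ThmConsistencyGeneral}, the Laplace approximation of \cite{hong_preston2012} for the conditional posterior of $\sqrt{T}\left(\gamma-\hat{\gamma}(\tau)\right)$ given $\tau$, the centering bound $\sqrt{T}\left(\hat{\gamma}(\tau)-\hat{\gamma}_{LS}\right)=o_p(1)$ imported from the conjugate-case proof, identification of the limiting conditional covariance with $\sigma_0^2V^{-1}$, and the Gaussian KL bound \eqref{KLbound} integrated against $\pi_T(\tau)$ by convexity of $d_{TV}$. The uniformity-over-the-band issue you flag as the crux is a genuine subtlety, but the paper treats it no more carefully than you do: it applies the Hong--Preston result pointwise in $\tau$ and integrates, relying implicitly on the same $O(T^{-1})$-equicontinuity you describe.
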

%%%%%%%%%%%%%%%%%%%%%%%%%%%%%%%%%%%%%%%%%%%%%%%%%%%%%%%%%%
%%%%%%%%%%%%%%%%%%%%%%%%%%%%%%%%%%%%%%%%%%%%%%%%%%%%%%%%%%
When proving the corresponding result under the conjugate prior (i.e.,\ Theorem \ref{ThmBvm}), 
we utilize the closed-form expression of the marginal posterior of $\gamma$ given $\tau$ \eqref{posterior_gamma_conjugate}.
As this is not available under the prior in this section, we again use a Laplace  approximation to study the asymptotic behavior of the marginal posterior. 
See Appendix \ref{ProofThmBvmGeneral} for a proof.

\pagebreak 
%%%%%%%%%%%%%%%%%%%%%%%%%%%%%%%%%%%%%%%%%%%%%%%%%%%%%%%%%%%%%%%%%%%%%%%%%%%%%%%%%%%%%%%%% 
\section{Simulation}\label{section-simulation}
The main purpose of the simulation studies below is to compare inference on the slope parameters $\gamma$ between the two methods: the conventional least-squares method in \cite{bai1997} and the Bayesian approach described in our paper.
For the Bayesian approach, we use the uniform prior for $\tau$ and the conjugate prior for the regression parameters with $\underline{H}=0.1 I_{(d_x+d_z)}$, $\underline{\mu}=0_{(d_x+d_z)}$, and $\underline{a}=\underline{b}=1$. The findings are similar even when we use the uninformative improper prior. 
Following the literature (e.g.,\ \citealp{casini_perron2021continuous}), we set the range of the candidate values of $\tau$ to be  $(\epsilon,1-\epsilon)$ with $\epsilon=0.05$ for all methods\footnote{It prevents the break location estimator from being in the first and last $100\epsilon$\% of the sample. The trimming parameter $\epsilon$ should not be chosen too high otherwise it might introduce bias in the break location estimate. \cite{casini_perron2021continuous} find the choice $\epsilon=0.05$ performs well in general, which we also confirm in our simulation exercises.}.

We consider the following model: $y_t=\delta_01(t>\floor*{\tau_0T})+\epsilon_t$.
In order to compare the methods in repeated experiments, for each combination of $\tau_0$, $\delta_0$, and $T$, we generate 1,000 data sets. 
We consider different values of the break location $\tau_0 \in \{0.3,0.5\}$, the jump size $\delta_0 \in \{0.25,0.5,1.0,2.0\}$, and the sample size $T \in \{20, 50,100,250, 500, 1000\}$. 
The error $\epsilon_t$ is independently and identically generated from $N(0,1)$.
In the online appendix, we present a robustness check with the errors generated from a mixture of two normals $0.5N\left(-1/\sqrt{2},1/2 \right) + 0.5N\left( 1/\sqrt{2},1/2 \right) $ and illustrate that the overall findings are similar to these under the normal DGP. 
%%
%All the tables are displayed in Appendix \ref{section-tables}.
%%

Table \ref{table_simulation_delta} shows the simulation results concerning $\delta$.
The top panel ``Coverage" shows empirical coverages of the true jump size $\delta_0$ by the 95\% confidence and credible intervals. 
The frequentist confidence intervals are computed based on the conventional asymptotic theory \eqref{conventional_theory}. 
For the Bayesian approach, we report the equal-tailed credible intervals. 
%%
%It is worthwhile to mention that, by definition, Bayesian credible intervals are not meant to have pre-specified coverages in repeated experiments like confidence intervals are. 
%%
The middle panel ``Length" presents the average lengths of the aforementioned intervals. 
The bottom panel ``MSE for $\delta$'' shows the mean-squared-errors for the point estimator, which is the least-squares estimator $\hat{\delta}_{LS}$ defined in \eqref{gamma_hat} for the conventional method and the posterior mean for the Bayesian approach. %We report results under two scenarios:
%(a) $\tau_0=0.5$ and
%(b) $\tau_0=0.3$.
%(c) $\tau_0=0.5$ with the mixture DGP, and
%(d) $\tau_0=0.3$ with the mixture DGP.
%%

There are several significant findings.
First, for small $T$ and/or small $\delta_0$, the conventional  confidence intervals significantly undercover. Meanwhile, the Bayesian credible intervals have relatively reasonable coverages. 
Second, the Bayesian intervals tend to be longer than the conventional confidence intervals for small $T$ and/or $\delta_0$.
Third,  as $T$ increases, the discrepancy between the two methods decreases, as expected from the Bernstein-von Mises theorem that we establish.

%%%%%%%%%%%%%%%%%%%%%%%%%%%%%%%%%%%%%%%%%%%%%%%%%%%%%%%%%%%%%%%%%%%%%%%%%%%%%%%%%%%%%%%%%%%%%%
% TABLE 1 STARTS HERE
%%%%%%%%%%%%%%%%%%%%%%%%%%%%%%%%%%%%%%%%%%%%%%%%%%%%%%%%%%%%%%%%%%%%%%%%%%%%%%%%%%%%%%%%%%%%%%
\FloatBarrier
\begin{table}[!htb]

%%%%%%%%%%%%%%%
\begin{subtable}{0.5\textwidth}
\centering
% LEFT PANEL 1
%%%
\scalebox{0.55}{        
\begin{tabular}{lllllllllllll}
    \toprule
    \toprule
    \multirow{1}{*}{} &
      \multicolumn{4}{c}{Least-squares} &
      \multicolumn{4}{c}{Bayesian} \\
      \cmidrule(lr){2-5}\cmidrule(lr){6-9}
      $\delta_0=$ & {0.25} & {0.50} & {1.00} &{2.00} & {0.25} & {0.50} & {1.00} & {2.00} \\
      \midrule
      Coverage\\
$T=20$ & 0.68 & 0.77 & 0.88 & 0.95 & 0.96 & 0.97 & 0.96 & 0.95 \\
$T=50$ & 0.67 & 0.84 & 0.94 & 0.96 & 0.96 & 0.97 & 0.96 & 0.95 \\
$T=100$ & 0.69 & 0.90 & 0.96 & 0.95 & 0.96 & 0.97 & 0.96 & 0.95 \\
$T=250$ & 0.83 & 0.94 & 0.94 & 0.96 & 0.96 & 0.96 & 0.94 & 0.96 \\
$T=500$ & 0.91 & 0.95 & 0.95 & 0.96 & 0.97 & 0.96 & 0.96 & 0.96 \\
$T=1000$ & 0.93 & 0.94 & 0.95 & 0.96 & 0.96 & 0.95 & 0.95 & 0.96 \\
\hline 
Length\\
$T=20$ & 3.87 & 3.60 & 3.20 & 2.82 & 4.85 & 4.59 & 4.20 & 3.16 \\
$T=50$ & 2.31 & 2.07 & 1.82 & 1.76 & 2.91 & 2.67 & 2.13 & 1.79 \\
$T=100$ & 1.61 & 1.38 & 1.26 & 1.24 & 2.10 & 1.78 & 1.34 & 1.25 \\
$T=250$ & 0.93 & 0.81 & 0.78 & 0.78 & 1.21 & 0.92 & 0.80 & 0.79 \\
$T=500$ & 0.61 & 0.56 & 0.55 & 0.55 & 0.76 & 0.58 & 0.56 & 0.56 \\
$T=1000$ & 0.41 & 0.39 & 0.39 & 0.39 & 0.46 & 0.40 & 0.39 & 0.40 \\
\hline
MSE for $\delta$\\% \tiny{$(\times 10)$}\\
$T=20$ & 3.85 & 2.79 & 1.75 & 0.60 & 1.13 & 0.91 & 0.86 & 0.58 \\
$T=50$ & 1.35 & 0.78 & 0.26 & 0.20 & 0.42 & 0.33 & 0.25 & 0.20 \\
$T=100$ & 0.67 & 0.28 & 0.11 & 0.10 & 0.21 & 0.15 & 0.11 & 0.10 \\
$T=250$ & 0.18 & 0.05 & 0.04 & 0.04 & 0.08 & 0.05 & 0.04 & 0.04 \\
$T=500$ & 0.05 & 0.02 & 0.02 & 0.02 & 0.03 & 0.02 & 0.02 & 0.02 \\
$T=1000$ & 0.02 & 0.01 & 0.01 & 0.01 & 0.01 & 0.01 & 0.01 & 0.01 \\
\bottomrule
\end{tabular}          
}
%%%
\caption{$\tau_0=0.5$}%, Normal DGP}
%\label{tab:week1}
\end{subtable}
%%%%%%%%%%%%%%%
    \hfill
%%%%%%%%%%%%%%%    
\begin{subtable}{0.5\textwidth}
\centering
% RIGHT PANEL 1
%%%
\scalebox{0.55}{        
\begin{tabular}{lllllllllllll}
    \toprule
    \toprule
    \multirow{1}{*}{} &
      \multicolumn{4}{c}{Least-squares} &
      \multicolumn{4}{c}{Bayesian} \\
      \cmidrule(lr){2-5}\cmidrule(lr){6-9}
      $\delta_0=$ & {0.25} & {0.50} & {1.00} &{2.00} & {0.25} & {0.50} & {1.00} & {2.00} \\
      \midrule
      Coverage\\
$T=20$ & 0.66 & 0.70 & 0.87 & 0.93 & 0.96 & 0.97 & 0.96 & 0.95 \\
$T=50$ & 0.64 & 0.81 & 0.93 & 0.94 & 0.97 & 0.95 & 0.95 & 0.95 \\
$T=100$ & 0.72 & 0.86 & 0.94 & 0.97 & 0.97 & 0.96 & 0.96 & 0.97 \\
$T=250$ & 0.80 & 0.92 & 0.94 & 0.96 & 0.95 & 0.96 & 0.95 & 0.96 \\
$T=500$ & 0.90 & 0.94 & 0.96 & 0.95 & 0.96 & 0.96 & 0.96 & 0.94 \\
$T=1000$ & 0.92 & 0.95 & 0.95 & 0.95 & 0.96 & 0.95 & 0.95 & 0.95 \\
\hline 
Length\\
$T=20$ & 3.95 & 3.91 & 3.70 & 3.40 & 4.89 & 4.84 & 4.72 & 4.01 \\
$T=50$ & 2.32 & 2.30 & 2.18 & 2.14 & 2.96 & 2.90 & 2.63 & 2.24 \\
$T=100$ & 1.69 & 1.64 & 1.52 & 1.51 & 2.18 & 2.08 & 1.72 & 1.54 \\
$T=250$ & 1.05 & 0.98 & 0.95 & 0.95 & 1.34 & 1.16 & 0.99 & 0.96 \\
$T=500$ & 0.71 & 0.68 & 0.67 & 0.67 & 0.90 & 0.74 & 0.68 & 0.68 \\
$T=1000$ & 0.49 & 0.48 & 0.48 & 0.48 & 0.59 & 0.49 & 0.48 & 0.48 \\
\hline
MSE for $\delta$ \\%\tiny{$(\times 10)$}\\
$T=20$ & 3.95 & 3.63 & 2.20 & 0.87 & 1.16 & 1.18 & 1.10 & 0.92 \\
$T=50$ & 1.35 & 1.07 & 0.47 & 0.32 & 0.44 & 0.49 & 0.41 & 0.32 \\
$T=100$ & 0.70 & 0.48 & 0.17 & 0.14 & 0.23 & 0.24 & 0.18 & 0.14 \\
$T=250$ & 0.24 & 0.09 & 0.06 & 0.06 & 0.11 & 0.08 & 0.07 & 0.06 \\
$T=500$ & 0.07 & 0.04 & 0.03 & 0.03 & 0.04 & 0.03 & 0.03 & 0.03 \\
$T=1000$ & 0.02 & 0.01 & 0.02 & 0.02 & 0.02 & 0.02 & 0.02 & 0.02 \\
\bottomrule
\end{tabular}          
}        
\caption{$\tau_0=0.3$}%, Normal DGP}
%\label{tab:week2}
\end{subtable}
%\hfill

%\vskip\baselineskip

%%%%%%%%%%%%%%%
\caption[Caption for delta1]{Simulation results for $\delta$}% \protect\footnotemark}
%\caption{A caption \footnote{A footnote}
%and bias $E | \hat{\tau} -\tau_0 | $ in threshold estimation (bottom panel). 
%In the Bayesian approach, the posterior median is used as the point estimator for $\tau$.
%Normal DGP: left (a) and mixture of 2 normals DGP: right (b).
%}

\label{table_simulation_delta}
\end{table}
\FloatBarrier
%%%%%%%%%%%%%%%%%%%%%%%%%%%%%%%%%%%%%%%%%%%%%%%%%%%%%%%%%%%%%%%%%%%%%%%%%%%%%%%%%%%%%%%%%%%%%%
% TABLE 1 ENDS HERE
%%%%%%%%%%%%%%%%%%%%%%%%%%%%%%%%%%%%%%%%%%%%%%%%%%%%%%%%%%%%%%%%%%%%%%%%%%%%%%%%%%%%%%%%%%%%%%

Table \ref{table_simulation_tau} shows the results of estimation and inference of the break location $\tau$. Although the main focus of the current paper is on  inference about the slope parameters $\gamma$ and not on  inference about $\tau$, we report the empirical coverage and the length of the 95\% confidence interval of \cite{bai1997} and the highest posterior density (HPD) set\footnote{
Note that although we prove that the posterior mode of the break location $\tau$ converges to the limiting distribution of the least-squares estimator, whether the posterior distribution of $\tau$ converges to the same limit or not is still an open question. The Bayesian literature on the Bernstein-von Mises-like result for non-regular parameters is very scarce. To our best knowledge, the only available work is that by \cite{kleijn_knapik2012} whose results do not seem to be applicable to the model in consideration in this paper. Hence, it is not guaranteed that a credible set of $\tau$ has frequentist coverage even asymptotically. However, we emphasize that  credible sets on $\tau$ still have a statistically valid interpretation even in finite samples.}.
We also report the inverted likelihood ratio (ILR) confidence set suggested by \cite{eo_morley2015}. 

Overall, the HPD set and the ILR confidence set of the break location $\tau$ behave similarly
although the HPD set slightly undercovers relative to the ILR confidence set for small $T$ and/or $\delta_0$.
We  confirm several findings of Eo and Morley (2015). 
First, 
when $T$ is large, 
the confidence interval of Bai has longer lengths than the ILR confidence set and the HPD set\footnote{
\cite{eo_morley2015} explain that the likelihood ratio test is more powerful than the Wald-type test used to construct the confidence interval of Bai,
which results in a shorter length of the ILR confidence set.
}.
Second, when $T$ and $\delta_0$ are small,  the confidence interval of Bai tends to severely undercover compared to the ILR confidence set and the HPD set.
The interval of Bai indeed has a shorter length than the other two sets for small $T$, but its undercoverage raises concerns  for  small samples in practice\footnote{
Eo and Morley (2015) also find that 
 the confidence interval of \cite{qu_perron2007}  for the break location, which is also based on the Wald-type test as the confidence interval of Bai,
tends to undercover in small sample despite having a slightly shorter length than the ILR confidence set.
}
\footnote{
In addition, as also reported by Eo and Morley (2015), 
the ILR confidence set tends to slightly overcover even in large sample. 
}.

The bottom panels of Table \ref{table_simulation_tau} shows the mean-absolute-error (MAE) of the point estimator of $\tau$ which is $\hat{\tau}_{LS}$ defined in \eqref{tau_hat} for the conventional method and the posterior mode $\hat{\tau}_{Bayes}$ for the Bayesian approach. It is known that the finite-sample distribution of the least-squares estimator $\hat{\tau}_{LS}$ tends to be trimodal (see \citealp{baek2021}) when the jump size is relatively small. The same seems to be true for the Bayesian point estimator (see Figure \ref{figure_posterior_tau}).

%%%%%%%%%%%%%%%%%%%%%%%%%%%%%%%%%%%%%%%%%%%%%%%%%%%%%%%%%%%%%%%%%%%%%%%%%%%%%%%%%%%%%%%%%%%%%%
% TABLE 2 (1) STARTS HERE
%%%%%%%%%%%%%%%%%%%%%%%%%%%%%%%%%%%%%%%%%%%%%%%%%%%%%%%%%%%%%%%%%%%%%%%%%%%%%%%%%%%%%%%%%%%%%%
%Table \ref{table_simulation_tau} shows simulation results concerning $\tau$. The bottom panels report the mean-absolute-error (MAE) of the point estimator which is $\hat{\tau}_{LS}$ defined in \eqref{tau_hat} for the least-squares approach and the posterior mode for the Bayesian approach. We also report empirical coverage (top panels) and length (middle panels) of 95\% confidence and credible sets/intervals for $\tau$. The confidence interval for the least-squares approach is constructed according to Proposition 3 of \cite{bai1997}. We construct the highest posterior density (HPD) set for the Bayesian approach. In addition, we report the inverted likelihood ratio (ILR) confidence set proposed in \cite{eo_morley2015}. 
%
\FloatBarrier
\begin{table}[h]
%%%%%%%%%%%%%%%
\begin{subtable}[h]{1\textwidth}
\centering
%  PANEL 1
%%%
\scalebox{0.5}{        
\begin{tabular}{lllllllllllllllll}
    \toprule
    \toprule
    \multirow{1}{*}{} &
      \multicolumn{4}{c}{Least-squares} &
      \multicolumn{4}{c}{Bayesian} &
      \multicolumn{4}{c}{ILR} \\      
      \cmidrule(lr){2-5}\cmidrule(lr){6-9}\cmidrule(lr){10-13}      
      $\delta_0=$ & {0.25} & {0.50} & {1.00} &{2.00} & {0.25} & {0.50} & {1.00} & {2.00}& {0.25} & {0.50} & {1.00} & {2.00} \\
      \midrule
      Coverage\\
$T=20$ & 0.50 & 0.58 & 0.75 & 0.93 & 0.83 & 0.87 & 0.94 & 0.97 & 0.91 & 0.92 & 0.92 & 0.95 \\
$T=50$ & 0.51 & 0.68 & 0.87 & 0.97 & 0.83 & 0.92 & 0.96 & 0.97 & 0.93 & 0.93 & 0.96 & 0.98 \\
$T=100$ & 0.53 & 0.78 & 0.91 & 0.96 & 0.85 & 0.96 & 0.95 & 0.97 & 0.93 & 0.96 & 0.96 & 0.98 \\
$T=250$ & 0.67 & 0.87 & 0.94 & 0.97 & 0.91 & 0.94 & 0.94 & 0.97 & 0.94 & 0.95 & 0.96 & 0.98 \\
$T=500$ & 0.75 & 0.93 & 0.96 & 0.98 & 0.93 & 0.95 & 0.94 & 0.96 & 0.95 & 0.96 & 0.97 & 0.98 \\
$T=1000$ & 0.85 & 0.92 & 0.96 & 0.97 & 0.92 & 0.90 & 0.91 & 0.95 & 0.95 & 0.96 & 0.97 & 0.98 \\
\hline 
Length \small{($\times 100$)} \\
$T=20$ & 47.9 & 50.05 & 50.1 & 32.12 & 83.15 & 80.56 & 68.19 & 29.88 & 83.59 & 80.91 & 62.89 & 22.96 \\
$T=50$ & 48.3 & 50.66 & 40.26 & 14.0 & 76.18 & 69.37 & 40.09 & 9.01 & 80.92 & 71.41 & 37.08 & 9.18 \\
$T=100$ & 46.82 & 47.52 & 24.88 & 6.67 & 73.84 & 58.69 & 19.24 & 3.92 & 78.91 & 58.91 & 18.06 & 4.23 \\
$T=250$ & 47.63 & 34.35 & 9.78 & 2.66 & 64.2 & 30.63 & 5.83 & 1.50 & 67.73 & 28.93 & 6.37 & 1.68 \\
$T=500$ & 44.08 & 18.96 & 4.76 & 1.31 & 50.15 & 12.47 & 2.60 & 0.73 & 51.23 & 12.77 & 3.03 & 0.81 \\
$T=1000$ & 33.84 & 9.49 & 2.35 & 0.64 & 28.47 & 5.02 & 1.21 & 0.38 & 28.9 & 5.99 & 1.51 & 0.42 \\
\hline
MAE for $\tau$ \small{($\times 10$)} \\
$T=20$ & 2.63 & 2.27 & 1.43 & 0.37 & 3.13 & 2.72 & 1.64 & 0.38 \\
$T=50$ & 2.45 & 1.84 & 0.69 & 0.14 & 2.88 & 2.12 & 0.78 & 0.14 \\
$T=100$ & 2.32 & 1.24 & 0.38 & 0.06 & 2.76 & 1.50 & 0.39 & 0.06 \\
$T=250$ & 1.70 & 0.60 & 0.13 & 0.03 & 2.02 & 0.69 & 0.13 & 0.03 \\
$T=500$ & 1.20 & 0.27 & 0.06 & 0.01 & 1.31 & 0.28 & 0.06 & 0.01 \\
$T=1000$ & 0.63 & 0.13 & 0.03 & 0.01 & 0.70 & 0.14 & 0.03 & 0.01 \\
\bottomrule
\end{tabular}          
}
%%%
\caption{$\tau_0=0.5$}%, Normal DGP}
%\label{tab:week1}
\end{subtable}
%%%%%%%%%%%%%%%
    \hfill \\ 
    \quad \\
%%%%%%%%%%%%%%%
\begin{subtable}[h]{1\textwidth}
\centering
%  PANEL 2
%%%
\scalebox{0.5}{        
\begin{tabular}{lllllllllllllllll}
    \toprule
    \toprule
    \multirow{1}{*}{} &
      \multicolumn{4}{c}{Least-squares} &
      \multicolumn{4}{c}{Bayesian} &
      \multicolumn{4}{c}{ILR} \\      
      \cmidrule(lr){2-5}\cmidrule(lr){6-9}\cmidrule(lr){10-13}      
      $\delta_0=$ & {0.25} & {0.50} & {1.00} &{2.00} & {0.25} & {0.50} & {1.00} & {2.00}& {0.25} & {0.50} & {1.00} & {2.00} \\
      \midrule
      Coverage\\
$T=20$ & 0.49 & 0.56 & 0.76 & 0.94 & 0.88 & 0.90 & 0.95 & 0.97 & 0.90 & 0.91 & 0.92 & 0.95 \\
$T=50$ & 0.54 & 0.68 & 0.87 & 0.96 & 0.89 & 0.94 & 0.96 & 0.98 & 0.94 & 0.93 & 0.95 & 0.98 \\
$T=100$ & 0.56 & 0.72 & 0.91 & 0.98 & 0.89 & 0.94 & 0.96 & 0.96 & 0.95 & 0.93 & 0.97 & 0.98 \\
$T=250$ & 0.65 & 0.88 & 0.94 & 0.97 & 0.92 & 0.96 & 0.95 & 0.97 & 0.93 & 0.95 & 0.97 & 0.99 \\
$T=500$ & 0.78 & 0.90 & 0.96 & 0.98 & 0.92 & 0.93 & 0.93 & 0.96 & 0.94 & 0.95 & 0.96 & 0.98 \\
$T=1000$ & 0.85 & 0.94 & 0.96 & 0.97 & 0.90 & 0.91 & 0.91 & 0.95 & 0.93 & 0.96 & 0.97 & 0.98 \\
\hline 
Length \small{($\times 100$)} \\
$T=20$ & 47.17 & 47.58 & 51.13 & 35.2 & 83.16 & 81.22 & 71.03 & 34.04 & 83.83 & 81.16 & 66.59 & 26.55 \\
$T=50$ & 49.07 & 48.74 & 40.61 & 15.03 & 76.84 & 70.0 & 43.7 & 9.62 & 81.42 & 72.56 & 40.55 & 9.62 \\
$T=100$ & 46.47 & 43.1 & 27.07 & 6.99 & 74.33 & 59.13 & 21.83 & 3.94 & 79.98 & 59.24 & 20.37 & 4.31 \\
$T=250$ & 45.62 & 34.95 & 10.59 & 2.67 & 65.97 & 33.65 & 5.87 & 1.52 & 70.08 & 31.54 & 6.24 & 1.71 \\
$T=500$ & 42.61 & 21.02 & 4.87 & 1.32 & 52.67 & 13.73 & 2.65 & 0.75 & 54.46 & 13.61 & 3.07 & 0.84 \\
$T=1000$ & 32.86 & 10.01 & 2.35 & 0.65 & 29.99 & 5.21 & 1.23 & 0.37 & 30.17 & 6.10 & 1.50 & 0.41 \\
\hline
MAE for $\tau$ \small{($\times 10$)} \\
$T=20$ & 3.00 & 2.64 & 1.64 & 0.39 & 3.43 & 3.03 & 1.86 & 0.41 \\
$T=50$ & 2.80 & 1.98 & 0.84 & 0.15 & 3.11 & 2.21 & 0.88 & 0.15 \\
$T=100$ & 2.46 & 1.49 & 0.37 & 0.06 & 2.86 & 1.73 & 0.41 & 0.06 \\
$T=250$ & 1.93 & 0.65 & 0.13 & 0.03 & 2.17 & 0.71 & 0.13 & 0.03 \\
$T=500$ & 1.26 & 0.32 & 0.06 & 0.01 & 1.39 & 0.32 & 0.06 & 0.01 \\
$T=1000$ & 0.67 & 0.13 & 0.03 & 0.01 & 0.68 & 0.14 & 0.03 & 0.01 \\
\bottomrule
\end{tabular}          
}
%%%
\caption{$\tau_0=0.3$}%, Normal DGP}
%\label{tab:week1}
\end{subtable}

%%%%%%%%%%%%%%%
\caption[]{
Simulation results for $\tau$
%Empirical coverage (top panel) and length (middle panel) of 95\% confidence and credible intervals for $\delta$ 
%and bias $E | \hat{\tau} -\tau_0 | $ in threshold estimation (bottom panel). 
%In the Bayesian approach, the posterior median is used as the point estimator for $\tau$.
%Normal DGP: left (a) and mixture of 2 normals DGP: right (b).
}
\label{table_simulation_tau}
\end{table}
%%%%%%%%%%%%%%%%%%%%%%%%%%%%%%%%%%%%%%%%%%%%%%%%%%%%%%%%%%%%%%%%%%%%%%
\FloatBarrier
%%%%%%%%%%%%%%%%%%%%%%%%%%%%%%%%%%%%%%%%%%%%%%%%%%%%%%%%%%%%%%%%%%%%%%%%%%%%%%%%%%%%%%%%%%%%%%
% TABLE 2 (1) ENDS HERE
%%%%%%%%%%%%%%%%%%%%%%%%%%%%%%%%%%%%%%%%%%%%%%%%%%%%%%%%%%%%%%%%%%%%%%%%%%%%%%%%%%%%%%%%%%%%%%

\clearpage 
To better understand the importance of the uncertainty of the break location $\tau$ for inference on the slope parameters, we conduct a hypothetical experiment.
%%%
We repeat the simulation exercise but now fixing  $\tau$ at the least-squares estimate $\hat{\tau}_{LS}$. Table \ref{table_simulation_delta_tauLS} displays the results. 
Note that the results for the least-squares estimator are of course the same as in Table  \ref{table_simulation_delta}. 
We however now see that, not only the conventional confidence intervals of $\delta$ but also the credible intervals  undercover for small $T$ and/or small $\delta_0$. They also have similar lengths in general.  Importantly, the credible intervals  when $\tau$ is fixed at $\hat{\tau}_{LS}$ (Table \ref{table_simulation_delta_tauLS}) have shorter lengths compared to the full Bayesian intervals  (Table \ref{table_simulation_delta}). On average, the full Bayesian credible intervals are 17.1\% longer\footnote{The difference is larger when $T$ and/or $\delta_0$ are/is smaller.} than the credible intervals produced by fixing the value of $\tau$ at $\hat{\tau}_{LS}$. 
Note that a Bayesian equivalent of the conventional approach to inference on the slope parameters  would be to fix the value of $\tau$ at the posterior mode (whose value is very similar to $\hat{\tau}_{LS}$ as we can see from Figure \ref{figure_posterior_tau} and deduce from Corollary \ref{ThmMode}). 
We can see in Figure \ref{figure_posterior_tau} that both $\hat{\tau}_{LS}$ and the posterior mode of $\tau$ display significant amount of variations. 
Fixing $\tau$ at a point estimate forces the Bayesian approach to ignore this  uncertainty of $\tau$; as a result, the credible interval on $\delta$ becomes shorter and hence undercovers.
The full Bayesian approach  takes into account such uncertainty via marginal posterior of $\tau$ (see examples of the density in Figure \ref{figure_posterior_tau}). This results in longer lengths of the full Bayesian intervals on 
the slope parameters and helps them avoid  undercoverage. In contrast, by construction (i.e.,\ Equation \ref{conventional_theory}), the conventional confidence intervals do not have this feature.

%In Table \ref{table_simulation_delta_tauLS}, we first compute $\hat{\tau}_{LS}$ and set the value of $\tau$ at $\hat{\tau}_{LS}$ for the Bayesian approach.\\
%
%%%%%%%%%%%%%%%%%%%%%%%%%%%%%%%%%%%%%%%%%%%%%%%%%%%%%%%%%%%%%%%%%%%%%%%%%%%%%%%%%%%%%%%%%%%%%%
% TABLE 3  STARTS HERE
%%%%%%%%%%%%%%%%%%%%%%%%%%%%%%%%%%%%%%%%%%%%%%%%%%%%%%%%%%%%%%%%%%%%%%%%%%%%%%%%%%%%%%%%%%%%%%
\FloatBarrier
\begin{table}[h]
%%%%%%%%%%%%%%%
\begin{subtable}[h]{0.5\textwidth}
\centering
% LEFT PANEL 1
%%%
\scalebox{0.55}{        
\begin{tabular}{lllllllllllll}
    \toprule
    \toprule
    \multirow{1}{*}{} &
      \multicolumn{4}{c}{Least-squares} &
      \multicolumn{4}{c}{Bayesian} \\
      \cmidrule(lr){2-5}\cmidrule(lr){6-9}
      $\delta_0=$ & {0.25} & {0.50} & {1.00} &{2.00} & {0.25} & {0.50} & {1.00} & {2.00} \\
      \midrule
      Coverage\\
$T=20$ & 0.68 & 0.77 & 0.88 & 0.95 & 0.65 & 0.70 & 0.88 & 0.95 \\
$T=50$ & 0.67 & 0.84 & 0.94 & 0.96 & 0.64 & 0.82 & 0.93 & 0.94 \\
$T=100$ & 0.69 & 0.90 & 0.96 & 0.95  & 0.69 & 0.89 & 0.95 & 0.95 \\
$T=250$ & 0.83 & 0.94 & 0.94 & 0.96& 0.83 & 0.94 & 0.95 & 0.94 \\
$T=500$ & 0.91 & 0.95 & 0.95 & 0.96& 0.88 & 0.96 & 0.95 & 0.95 \\
$T=1000$ & 0.93 & 0.94 & 0.95 & 0.96 & 0.94 & 0.95 & 0.95 & 0.96 \\
\hline 
Length\\
$T=20$ & 3.87 & 3.60 & 3.20 & 2.82 & 3.46 & 3.30 & 2.99 & 2.76 \\
$T=50$ & 2.31 & 2.07 & 1.82 & 1.76  & 2.21 & 2.04 & 1.79 & 1.74 \\
$T=100$ & 1.61 & 1.38 & 1.26 & 1.24& 1.56 & 1.36 & 1.23 & 1.24 \\
$T=250$ & 0.93 & 0.81 & 0.78 & 0.78 & 0.93 & 0.80 & 0.78 & 0.78 \\
$T=500$ & 0.61 & 0.56 & 0.55 & 0.55  & 0.61 & 0.56 & 0.55 & 0.55 \\
$T=1000$ & 0.41 & 0.39 & 0.39 & 0.39  & 0.41 & 0.39 & 0.39 & 0.39 \\
\hline
MSE for $\delta$\\% \tiny{$(\times 10)$}\\
$T=20$ & 3.85 & 2.79 & 1.75 & 0.60 & 3.00 & 2.59 & 1.39 & 0.47 \\
$T=50$ & 1.35 & 0.78 & 0.26 & 0.20  & 1.28 & 0.79 & 0.28 & 0.20 \\
$T=100$ & 0.67 & 0.28 & 0.11 & 0.10 & 0.65 & 0.27 & 0.10 & 0.10 \\
$T=250$ & 0.18 & 0.05 & 0.04 & 0.04& 0.18 & 0.05 & 0.04 & 0.04 \\
$T=500$ & 0.05 & 0.02 & 0.02 & 0.02 & 0.06 & 0.02 & 0.02 & 0.02 \\
$T=1000$ & 0.02 & 0.01 & 0.01 & 0.01  & 0.01 & 0.01 & 0.01 & 0.01 \\
\bottomrule
\end{tabular}          
}
%%%
\caption{$\tau_0=0.5$}%, Normal DGP}
%\label{tab:week1}
\end{subtable}
%%%%%%%%%%%%%%%
   % \hfill
%%%%%%%%%%%%%%%    
\begin{subtable}[h]{0.5\textwidth}
\centering
% RIGHT PANEL 1
%%%
\scalebox{0.55}{        
\begin{tabular}{lllllllllllll}
    \toprule
    \toprule
    \multirow{1}{*}{} &
      \multicolumn{4}{c}{Least-squares} &
      \multicolumn{4}{c}{Bayesian} \\
      \cmidrule(lr){2-5}\cmidrule(lr){6-9}
      $\delta_0=$ & {0.25} & {0.50} & {1.00} &{2.00} & {0.25} & {0.50} & {1.00} & {2.00} \\
      \midrule
      Coverage\\
$T=20$ & 0.66 & 0.70 & 0.87 & 0.93  & 0.65 & 0.71 & 0.87 & 0.92 \\
$T=50$ & 0.64 & 0.81 & 0.93 & 0.94  & 0.61 & 0.78 & 0.92 & 0.95 \\
$T=100$ & 0.72 & 0.86 & 0.94 & 0.97  & 0.67 & 0.87 & 0.93 & 0.95 \\
$T=250$ & 0.80 & 0.92 & 0.94 & 0.96& 0.81 & 0.93 & 0.95 & 0.96 \\
$T=500$ & 0.90 & 0.94 & 0.96 & 0.95 & 0.88 & 0.94 & 0.96 & 0.94 \\
$T=1000$ & 0.92 & 0.95 & 0.95 & 0.95  & 0.92 & 0.95 & 0.94 & 0.94 \\
\hline 
Length\\
$T=20$ & 3.95 & 3.91 & 3.70 & 3.40 & 3.49 & 3.53 & 3.48 & 3.27 \\
$T=50$ & 2.32 & 2.30 & 2.18 & 2.14 & 2.28 & 2.24 & 2.14 & 2.10 \\
$T=100$ & 1.69 & 1.64 & 1.52 & 1.51  & 1.66 & 1.57 & 1.49 & 1.49 \\
$T=250$ & 1.05 & 0.98 & 0.95 & 0.95 & 1.02 & 0.97 & 0.95 & 0.95 \\
$T=500$ & 0.71 & 0.68 & 0.67 & 0.67 & 0.71 & 0.68 & 0.67 & 0.67 \\
$T=1000$ & 0.49 & 0.48 & 0.48 & 0.48  & 0.49 & 0.48 & 0.48 & 0.48 \\
\hline
MSE for $\delta$ \\%\tiny{$(\times 10)$}\\
$T=20$ & 3.95 & 3.63 & 2.20 & 0.87 & 3.03 & 2.83 & 1.76 & 0.76 \\
$T=50$ & 1.35 & 1.07 & 0.47 & 0.32& 1.34 & 1.01 & 0.47 & 0.29 \\
$T=100$ & 0.70 & 0.48 & 0.17 & 0.14  & 0.72 & 0.40 & 0.17 & 0.14 \\
$T=250$ & 0.24 & 0.09 & 0.06 & 0.06 & 0.20 & 0.09 & 0.06 & 0.06 \\
$T=500$ & 0.07 & 0.04 & 0.03 & 0.03 & 0.08 & 0.03 & 0.03 & 0.03 \\
$T=1000$ & 0.02 & 0.01 & 0.02 & 0.02  & 0.02 & 0.02 & 0.02 & 0.02 \\
\bottomrule
\end{tabular}          
}        
\caption{$\tau_0=0.3$}%, Normal DGP}
%\label{tab:week2}
\end{subtable}

%\vskip\baselineskip

%%%%%%%%%%%%%%%
\caption{
Simulation results for $\delta$, $\tau$ fixed at $\hat{\tau}_{LS}$
%Empirical coverage (top panel) and length (middle panel) of 95\% confidence and credible intervals for $\delta$ 
%and bias $E | \hat{\tau} -\tau_0 | $ in threshold estimation (bottom panel). 
%In the Bayesian approach, the posterior median is used as the point estimator for $\tau$.
%Normal DGP: left (a) and mixture of 2 normals DGP: right (b).
}
\label{table_simulation_delta_tauLS}
\end{table}
\FloatBarrier
%%%%%%%%%%%%%%%%%%%%%%%%%%%%%%%%%%%%%%%%%%%%%%%%%%%%%%%%%%%%%%%%%%%%%%%%%%%%%%%%%%%%%%%%%%%%%%
% TABLE 3  ENDS HERE
%%%%%%%%%%%%%%%%%%%%%%%%%%%%%%%%%%%%%%%%%%%%%%%%%%%%%%%%%%%%%%%%%%%%%%%%%%%%%%%%%%%%%%%%%%%%%%

\pagebreak

In summary, the simulation exercises demonstrate that 
(1) the credible intervals on the slope coefficient tend to have more reasonable coverages than the conventional confidence intervals  because of longer lengths,
(2) the longer length of the credible intervals is a reflection of the uncertainty of the unknown\footnote{When $\tau_0$ is known, the two intervals behave very similarly. To illustrate this point, we conduct another hypothetical experiment by repeating the simulation exercise as before but now fixing the value of $\tau$ at the true value $\tau_0$ in both conventional and Bayesian approaches. Table \ref{table_simulation_delta_tau0} summarizes the results. In this case, we see that both confidence and credible intervals have coverages quite close to 95\% in all cases. They also have similar lengths. Note that when the true value $\tau_0$ is given, the usual asymptotic normality and the regular Bernstein-von Mises theorem apply. As a consequence, both  frequentist and Bayesian intervals seem to converge faster to the limit compared to the case with unknown $\tau$.} break location $\tau$, and 
(3) the two intervals converge to each other asymptotically as expected from our Bernstein-von Mises theorem. 

%The two hypothetical experiments illustrate the property of the full Bayesian approach that it can incorporate the finite-sample uncertainty of the unknown break location upon inference of the regression parameters via longer interval lengths. The experiments also suggest that the under-coverage tendency of the conventional confidence intervals seems to stem from the fact that the approach fails to take into account such uncertainty. 

%%%%%%%%%%%%%%%%%%%%%%%%%%%%%%%%%%%%%%%%%%%%%%%%%%%%%%%%%%%%%%%%%%%%%%%%%%%%%%%%%%%%%%%%%%%%%%
% TABLE 4  STARTS HERE
%%%%%%%%%%%%%%%%%%%%%%%%%%%%%%%%%%%%%%%%%%%%%%%%%%%%%%%%%%%%%%%%%%%%%%%%%%%%%%%%%%%%%%%%%%%%%%
%In Table \ref{table_simulation_delta_tau0}, the value of $\tau$ is fixed at $\tau_0$ for both frequentist and Bayesian estimators. 

\FloatBarrier
\begin{table}[!htb]
%%%%%%%%%%%%%%%
\begin{subtable}[h]{0.5\textwidth}
\centering
% LEFT PANEL 1
%%%
\scalebox{0.55}{        
\begin{tabular}{lllllllllllll}
    \toprule
    \toprule
    \multirow{1}{*}{} &
      \multicolumn{4}{c}{Least-squares} &
      \multicolumn{4}{c}{Bayesian} \\
      \cmidrule(lr){2-5}\cmidrule(lr){6-9}
      $\delta_0=$ & {0.25} & {0.50} & {1.00} &{2.00} & {0.25} & {0.50} & {1.00} & {2.00} \\
      \midrule
      Coverage\\
$T=20$ & 0.93 & 0.95 & 0.96 & 0.95 & 0.94 & 0.95 & 0.95 & 0.94 \\
$T=50$ & 0.95 & 0.95 & 0.94 & 0.94 & 0.94 & 0.94 & 0.94 & 0.94 \\
$T=100$ & 0.96 & 0.96 & 0.96 & 0.96 & 0.96 & 0.96 & 0.95 & 0.96 \\
$T=250$ & 0.95 & 0.95 & 0.95 & 0.94 & 0.94 & 0.96 & 0.94 & 0.94 \\
$T=500$ & 0.96 & 0.95 & 0.95 & 0.96 & 0.96 & 0.95 & 0.95 & 0.96 \\
$T=1000$ & 0.95 & 0.96 & 0.96 & 0.95 & 0.95 & 0.96 & 0.96 & 0.95 \\
\hline 
Length\\
$T=20$ & 2.77 & 2.77 & 2.77 & 2.77 & 2.72 & 2.74 & 2.73 & 2.74 \\
$T=50$ & 1.75 & 1.75 & 1.75 & 1.75 & 1.75 & 1.74 & 1.74 & 1.75 \\
$T=100$ & 1.24 & 1.24 & 1.24 & 1.24 & 1.23 & 1.24 & 1.24 & 1.23 \\
$T=250$ & 0.78 & 0.78 & 0.78 & 0.78 & 0.78 & 0.78 & 0.78 & 0.78 \\
$T=500$ & 0.55 & 0.55 & 0.55 & 0.55 & 0.55 & 0.55 & 0.55 & 0.55 \\
$T=1000$ & 0.39 & 0.39 & 0.39 & 0.39 & 0.39 & 0.39 & 0.39 & 0.39 \\
\hline
MSE for $\delta$\\% \tiny{$(\times 10)$}\\
$T=20$ & 0.53 & 0.50 & 0.47 & 0.52 & 0.51 & 0.47 & 0.45 & 0.50 \\
$T=50$ & 0.21 & 0.20 & 0.21 & 0.21 & 0.21 & 0.20 & 0.20 & 0.20 \\
$T=100$ & 0.10 & 0.09 & 0.10 & 0.09 & 0.10 & 0.09 & 0.10 & 0.09 \\
$T=250$ & 0.04 & 0.04 & 0.04 & 0.04 & 0.04 & 0.04 & 0.04 & 0.04 \\
$T=500$ & 0.02 & 0.02 & 0.02 & 0.02 & 0.02 & 0.02 & 0.02 & 0.02 \\
$T=1000$ & 0.01 & 0.01 & 0.01 & 0.01 & 0.01 & 0.01 & 0.01 & 0.01 \\
\bottomrule
\end{tabular}          
}
%%%
\caption{$\tau_0=0.5$}%, Normal DGP}
%\label{tab:week1}
\end{subtable}
%%%%%%%%%%%%%%%
   % \hfill
%%%%%%%%%%%%%%%    
\begin{subtable}[h]{0.5\textwidth}
\centering
% RIGHT PANEL 2
%%%
\scalebox{0.55}{        
\begin{tabular}{lllllllllllll}
    \toprule
    \toprule
    \multirow{1}{*}{} &
      \multicolumn{4}{c}{Least-squares} &
      \multicolumn{4}{c}{Bayesian} \\
      \cmidrule(lr){2-5}\cmidrule(lr){6-9}
      $\delta_0=$ & {0.25} & {0.50} & {1.00} &{2.00} & {0.25} & {0.50} & {1.00} & {2.00} \\
      \midrule
      Coverage\\
$T=20$ & 0.96 & 0.95 & 0.94 & 0.96 & 0.95 & 0.94 & 0.94 & 0.94 \\
$T=50$ & 0.95 & 0.94 & 0.95 & 0.94 & 0.95 & 0.94 & 0.94 & 0.94 \\
$T=100$ & 0.94 & 0.95 & 0.96 & 0.95 & 0.94 & 0.94 & 0.96 & 0.95 \\
$T=250$ & 0.95 & 0.96 & 0.94 & 0.95 & 0.95 & 0.95 & 0.94 & 0.96 \\
$T=500$ & 0.96 & 0.96 & 0.94 & 0.96 & 0.96 & 0.96 & 0.94 & 0.96 \\
$T=1000$ & 0.96 & 0.95 & 0.96 & 0.95 & 0.96 & 0.95 & 0.96 & 0.95 \\
\hline 
Length\\
$T=20$ & 3.37 & 3.37 & 3.37 & 3.37 & 3.30 & 3.31 & 3.31 & 3.34 \\
$T=50$ & 2.13 & 2.13 & 2.13 & 2.13 & 2.12 & 2.12 & 2.12 & 2.12 \\
$T=100$ & 1.51 & 1.51 & 1.51 & 1.51 & 1.50 & 1.50 & 1.50 & 1.51 \\
$T=250$ & 0.95 & 0.95 & 0.95 & 0.95 & 0.95 & 0.95 & 0.95 & 0.95 \\
$T=500$ & 0.67 & 0.67 & 0.67 & 0.67 & 0.67 & 0.67 & 0.67 & 0.67 \\
$T=1000$ & 0.48 & 0.48 & 0.48 & 0.48 & 0.48 & 0.48 & 0.48 & 0.48 \\
\hline
MSE for $\delta$ \\%\tiny{$(\times 10)$}\\
$T=20$ & 0.73 & 0.75 & 0.74 & 0.73 & 0.68 & 0.70 & 0.69 & 0.69 \\
$T=50$ & 0.30 & 0.31 & 0.31 & 0.31 & 0.29 & 0.30 & 0.30 & 0.30 \\
$T=100$ & 0.16 & 0.16 & 0.14 & 0.16 & 0.15 & 0.15 & 0.14 & 0.15 \\
$T=250$ & 0.06 & 0.06 & 0.06 & 0.06 & 0.06 & 0.06 & 0.06 & 0.06 \\
$T=500$ & 0.03 & 0.03 & 0.03 & 0.03 & 0.03 & 0.03 & 0.03 & 0.03 \\
$T=1000$ & 0.01 & 0.01 & 0.01 & 0.01 & 0.01 & 0.01 & 0.01 & 0.01 \\
\bottomrule
\end{tabular}          
}        
\caption{$\tau_0=0.3$}%, Normal DGP}
%\label{tab:week2}
\end{subtable}

%\vskip\baselineskip

%%%%%%%%%%%%%%%
  \caption[Caption for LOF]{Simulation results for $\delta$, $\tau$ fixed at $\tau_0$}
\label{table_simulation_delta_tau0}
\end{table}
\FloatBarrier
%%%%%%%%%%%%%%%%%%%%%%%%%%%%%%%%%%%%%%%%%%%%%%%%%%%%%%%%%%%%%%%%%%%%%%%%%%%%%%%%%%%%%%%%%%%%%%
% TABLE 4  ENDS HERE
%%%%%%%%%%%%%%%%%%%%%%%%%%%%%%%%%%%%%%%%%%%%%%%%%%%%%%%%%%%%%%%%%%%%%%%%%%%%%%%%%%%%%%%%%%%%%%

%\FloatBarrier

\pagebreak
%%%%%%%%%%%%%%%%%%%%%%%%%%%%%%%%%%%%%%%%%%%%%%%%%%%%%%%%%%%%%%%%%%%%%%%%%%%%%%%%%%%%%%%%% 
\section{Application}\label{section-application}

In this section, we illustrate difference in estimation and inference of the regression parameters in linear regression models with a structural break between the conventional approach and the Bayesian approach that we consider in this paper. 
\cite{paye_timmermann2006} consider the problem of ex-post prediction in stock returns under a structural break in the coefficients of state variables. 
%\cite{paye_timmermann2006} examine instability in models of ex-post predictable components in stock returns by studying structural breaks in the coefficients of state variables. 
%%
Their multivariate model with a structural break is 
\[
Ret_{t}
=
 \begin{cases}
 \delta^{(1)}_1 + \delta^{(2)}_1 Div_{t-1} + \delta^{(3)}_1 Tbill_{t-1} + \delta^{(4)}_1 Spread_{t-1} + \delta^{(5)}_1 Def_{t-1} + \epsilon_t , & \text{ if } t \leq \floor{\tau T}  \\
 \delta^{(1)}_2 + \delta^{(2)}_2 Div_{t-1} + \delta^{(3)}_2 Tbill_{t-1} + \delta^{(4)}_2 Spread_{t-1} + \delta^{(5)}_2 Def_{t-1} + \epsilon_t , & \text{ if } t > \floor{\tau T} ,
\end{cases} %\label{model_application}
\]
where 
$Ret_t$ is the excess return for the international index in question during month $t$, 
$Div_{t-1}$ is the lagged dividend yield,
$Tbill_{t-1}$ is the lagged local country short interest rate, 
$Spread_{t-1}$ is the lagged local country term spread, and 
$Def_{t-1}$ is the lagged U.S. default premium.    
%%
%They consider a pure jump model e.g.,\  all of the five slope parameters are allowed to change their values between the regimes.
%%
The authors estimate the model using the conventional frequentist approach: 
they first compute $\hat{\tau}_{LS}$ and then obtain point estimates as well as confidence intervals for the slope coefficients by fixing $\tau$ at $\hat{\tau}_{LS}$.
We examine whether the Bayesian method performs differently from the conventional approach.

Monthly series are collected from Global Financial Data and Federal Reserve Economic Data (FRED).
In this paper, we consider estimating the model for the United Kingdom and Japan\footnote{
\cite{paye_timmermann2006} conduct the sequential method suggested by \cite{bai_perron1998}, \cite{bai_perron2003}, \cite{perron2006} for determining the number of breaks
and find multiple breaks for some countries. They find single breaks for the U.K. and Japan, but, for example, two breaks for the U.S.
A fully Bayesian approach would be to place a prior on the number of breaks and use a trans-dimensional estimation method such as a reversible jump MCMC, which is beyond the scope of this paper.
}.
The indices to which the total return and the dividend yield correspond are the FTSE All-share for the U.K. and Nikko Securities Composite for Japan. 
For each country, a 3-month Treasury bill rate is used as a measure of the short interest rate while the yield on a long-term government bond is used as a measure of the long interest rate. 
Excess returns are computed as the total return on stocks in the local currency minus the local short rate. 
The dividend yield is expressed as an annual rate and is constructed as the sum of dividends over the preceding 12 months, divided by the current price. 
A term spread is the difference between the long and short local country interest rates.
The U.S. default premium is defined as the difference in yields between Moody's Baa and Aaa rated bonds. 
For each country, the sample spans between January 1970 and December 2003. 
% 
%The authors collected data for some OECD countries for this period and specified the above multivariate model with a structural break for six out of the ten countries.
%In this paper, we consider estimating the model for the United Kingdom and Japan.
%The authors specify the same model as \eqref{model_application} for the two countries. 

%We thereafter focus on the six countries Belgium (BEL),Canada (CAN),France (FRN), Japan (JPN),  the Netherlands (NTH), and the United Kingdom (UK) \footnote{The indices to which the total return and the dividend yield correspond are: BEL (), CAN (), FRN (), JPN (), NTH (), and UK ().}.

For both approaches, we set the range of the candidate values of $\tau$ to be $(\epsilon,1-\epsilon)$ with $\epsilon=0.05$ as we do in the simulation studies in the previous section.
For the Bayesian approach, 
we use the uniform prior on $(\epsilon,1-\epsilon)$ for $\tau$ and the conjugate prior for the regression parameters with $H=0.1 I_{(d_x+d_z)}$, $\underline{\mu}=0_{(d_x+d_z)}$, and $\underline{a}=\underline{b}=1$. The findings are similar even when we use the uninformative improper prior. 
For the break date, we compute the least-squares estimator  $\hat{\tau}_{LS}$ and the posterior mode $\hat{\tau}_{Bayes}$  of $\tau$ as well as 
the 95\% confidence interval of \cite{bai1997}, 
the highest posterior density (HPD) set, and  
the inverted likelihood ratio (ILR) confidence set of \cite{eo_morley2015}.
For the slope parameters, we compute $\hat{\gamma}_{LS}$ and the posterior mean of $\gamma$ as well as 
the 90\% confidence intervals of \cite{bai1997} based on the asymptotic result \eqref{conventional_theory}
and 
the equal-tailed credible intervals.

\begin{table}
\centering
\resizebox{0.4\columnwidth}{!}{%

  \begin{threeparttable}
\begin{tabular}{ccccccccccc}
    \toprule
    \toprule
    \multirow{2}{*}{slopes} &
      \multicolumn{3}{c}{Least-squares} &
      \multicolumn{3}{c}{Bayesian} \\
      \cmidrule(lr){2-4}\cmidrule(lr){5-7}
                & {Estimate} & {LB} & {UB} &{Estimate} & {LB} & {UB}  \\
\midrule
$\delta_1^{(1)}$ & -21.2 &-28.1  & -14.2 & -18.6& -25.2& -11.9 \\
$\delta_1^{(2)}$ & -0.35 & -3.01 & 2.30 & 0.07  & -2.51  &2.64 \\
$\delta_1^{(3)}$ & -0.77 & -1.52 & -0.03 &-0.96 & -1.70&-0.23 \\
$\delta_1^{(4)}$ &0.80 &-0.57 & 2.19 & 0.80 & -0.56&2.14 \\
$\delta_1^{(5)}$ &19.4  &11.8 & 27.0 &16.5 &9.18&  23.8 \\
\midrule
$\delta_2^{(1)}$ & 19.1&11.9 &26.4& 16.5  & 9.62& 23.3\\
$\delta_2^{(2)}$ & 1.42 &-1.43 & 4.29 &0.99  & -1.77 & 3.77 \\
$\delta_2^{(3)}$ &-0.36 &  -1.20 & 0.47 & -0.16 & -0.99 &0.64 \\
$\delta_2^{(4)}$ &-0.98 &-2.42& 0.45 & -0.98  & -2.39  &0.41 \\
$\delta_2^{(5)}$ & -19.4 &  -27.2& -11.6 & -16.5  &-23.9 & -8.95 \\
\midrule
\midrule
    \multirow{2}{*}{$\tau$} &
      \multicolumn{3}{c}{Least-squares} &
      \multicolumn{3}{c}{Bayesian} \\
      \cmidrule(lr){2-4}\cmidrule(lr){5-7}
                & {Estimate} & {LB} & {UB} &{Estimate} & {LB} & {UB}  \\
\midrule                
\multirow{2}{*}{} 
 & 0.150 &0.145 &0.155 & 0.150 & 0.149& 0.152 \\
 & (75:01) &(74:11) & (75:03) & (75:01) & (74:12)& (75:02)\\
 
     \multirow{2}{*}{} &
      \multicolumn{3}{c}{ILR} &
      \multicolumn{3}{c}{} \\
      \cmidrule(lr){2-4}%\cmidrule(lr){5-7}
                & {Estimate} & {LB} & {UB} &&  &   \\
\midrule                
\multirow{2}{*}{} 
 &  &0.149 & 0.151 &  & &  \\
 &  &(74:12) & (75:02) &  & &  \\

\bottomrule
\bottomrule
\end{tabular} 
    \begin{tablenotes}
      \small
      \item The upper panel shows
point estimates as well as 90\% confidence (left) and equal-tailed credible (right) intervals for the regression slope parameters. 
The lower panel shows
point estimates of $\tau$ with the corresponding months in parentheses as well as the bounds of 95\% confidence intervals of \cite{bai1997} and highest posterior density (HPD) sets. 
It also displays the inverted likelihood ratio (ILR) confidence sets of \cite{eo_morley2015}.
LB=lower bound and UB=upper bound of the intervals. 
    \end{tablenotes}
  \end{threeparttable}
  }
\caption{Estimation results for the U.K. stock return}  
\label{table_application_uk}  
\end{table}

When the uncertainty about $\tau$ is small, estimation and inference of the slope parameters roughly match between the conventional least-squares approach and the Bayesian approach, as illustrated by our simulation studies and indicated by our proven Bernstein-von mises theorem. 
See Table \ref{table_application_uk} for the results for the U.K. 
Both methods estimate a break at 1975:01.
The confidence interval of \cite{bai1997}, the Bayesian highest posterior density (HPD) set, and the inverted likelihood ratio (ILR) confidence set by \cite{eo_morley2015}  are all similar and narrow, indicating that the  uncertainty about $\tau$ is small. 
This can be seen also from the posterior density on the break date in Panel (a) of Figure \ref{fig_application_tau}, which has a sharp peak around 1975:01\footnote{
The mean and the standard deviation of the excess return of the FTSE All-share index during the sample period are -1.53 and 6.94 respectively.
%-7.53 and 6.94 respectively.
%
At $t=$1974:12, we have $Ret_t=-9.9$ while at $t=$1975:01, $Ret_t=43.75$, where the change is approximately 7.7 standard deviations.
Therefore, the change in the dependent variable is large enough for the break point to be detected with small uncertainty.}.
\cite{paye_timmermann2006} explain that the break in the mid-1970's might be related to the  large macroeconomic shocks reflecting  oil price increases. 
%%%
As a result of the small uncertainty about $\tau$, 
the point estimates of the slope parameters as well as the corresponding confidence/credible intervals are similar between the conventional  and the Bayesian approach. 
Importantly, when the confidence interval of a given slope parameter  includes (or does not include) zero, the corresponding credible interval also  includes (or does not include) zero.
Hence, the conventional approach to inference about the slope parameters for the U.K. sample seems to be robust with respect to the uncertainty on the break date.

\FloatBarrier
\graphicspath{{Figures_final/application_oct2021/}}
\begin{figure}[ht] 
\centering
  \begin{subfigure}[b]{0.4\linewidth}
    \centering
    \includegraphics[width=\linewidth]{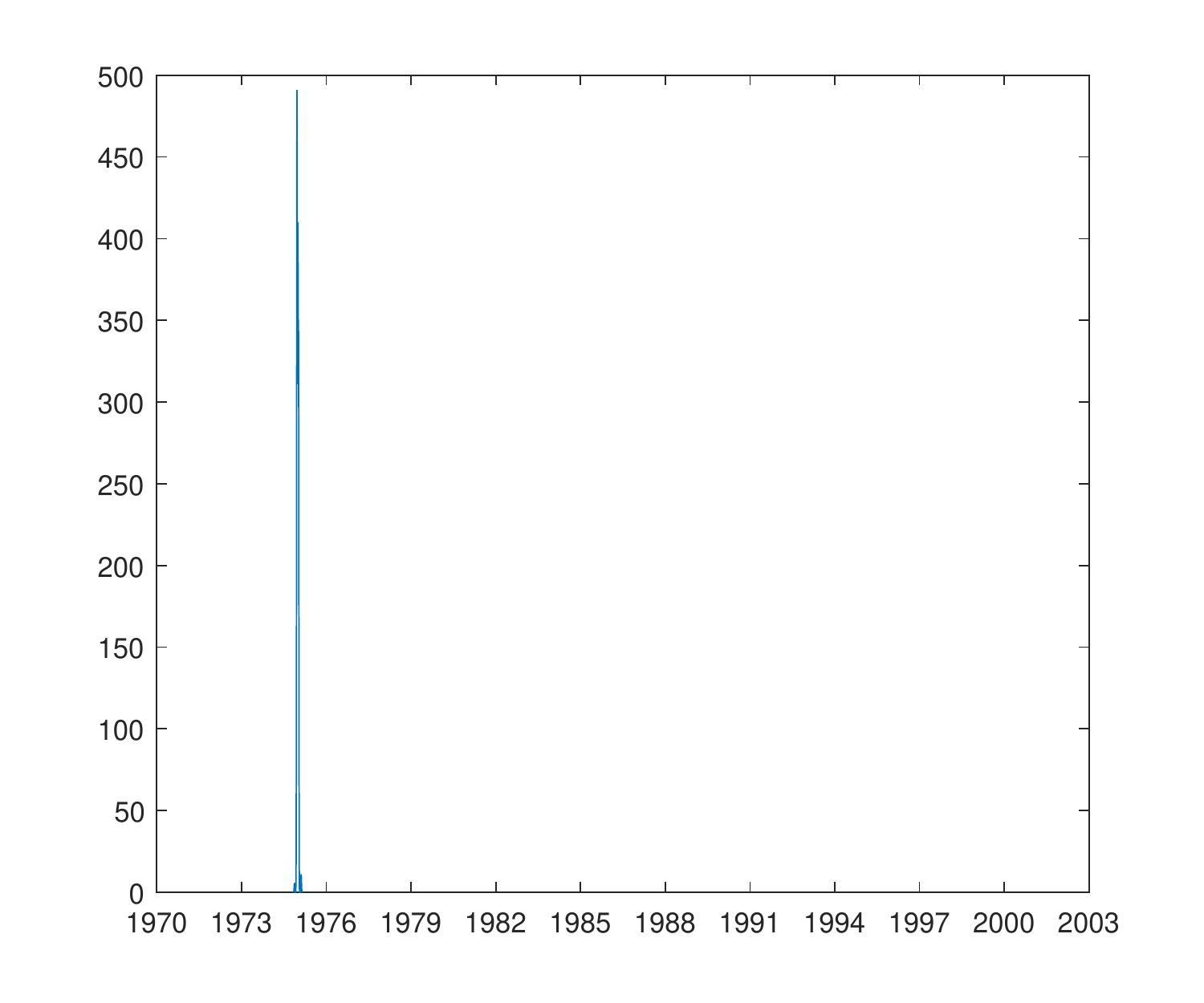} 
    \caption{United Kingdom} 
   % \vspace{4ex}
  \end{subfigure}%% 
  \begin{subfigure}[b]{0.4\linewidth}
    \centering
    \includegraphics[width=\linewidth]{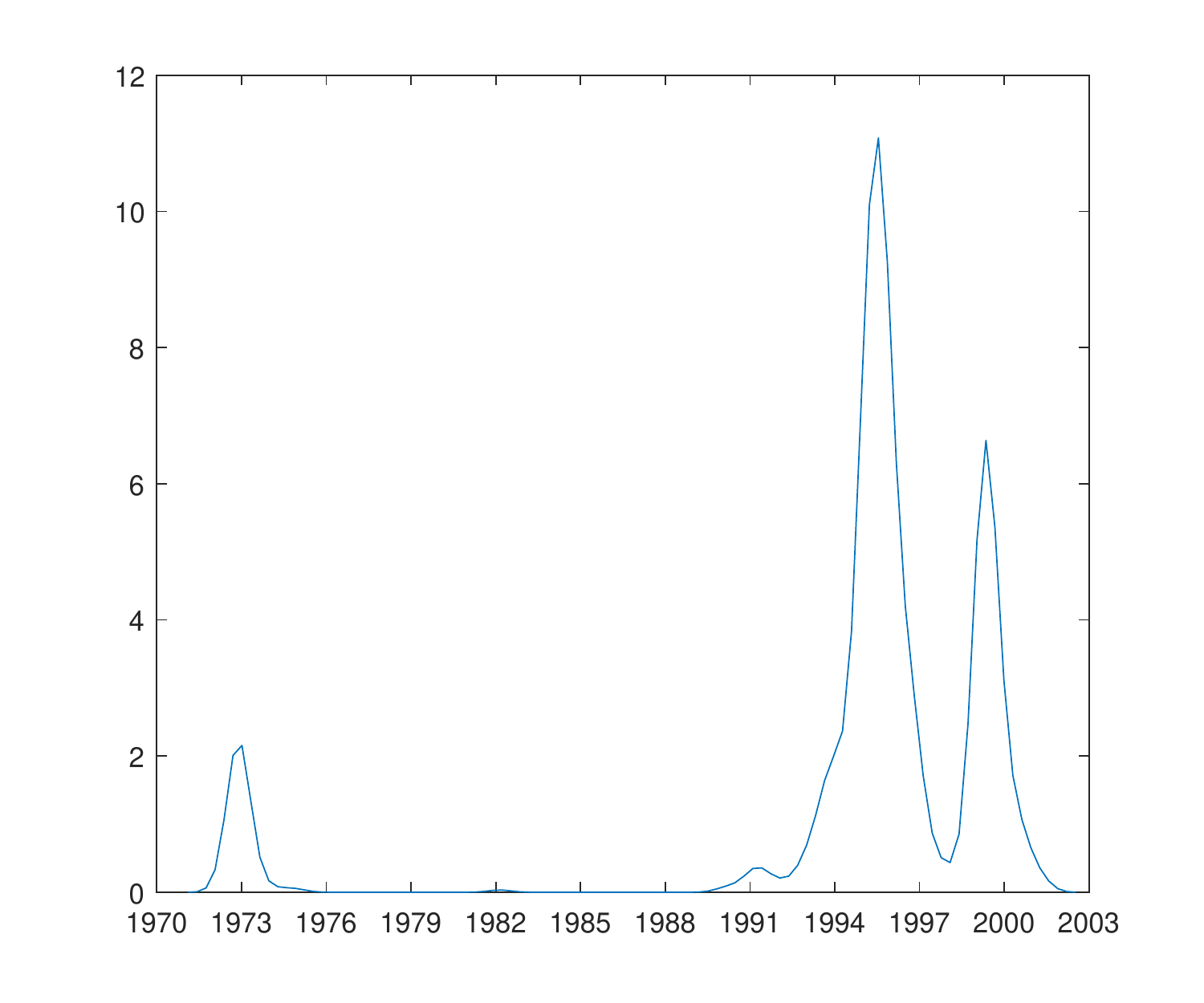} 
    \caption{Japan} 
   % \vspace{4ex}
  \end{subfigure}%% 
\caption{\small{Posterior density of the break date}}\label{fig_application_tau}
\end{figure}
\FloatBarrier

\begin{table}
\centering
\resizebox{0.4\columnwidth}{!}{%

  \begin{threeparttable}
\begin{tabular}{ccccccccccc}
    \toprule
    \toprule
    \multirow{2}{*}{slopes} &
      \multicolumn{3}{c}{Least-squares} &
      \multicolumn{3}{c}{Bayesian} \\
      \cmidrule(lr){2-4}\cmidrule(lr){5-7}
                & {Estimate} & {LB} & {UB} &{Estimate} & {LB} & {UB}  \\
\midrule
$\delta_1^{(1)}$ & 2.41 &0.24  & 4.57 & 1.49  &-1.89& 5.27 \\
$\delta_1^{(2)}$ & 1.17 &0.44 & 1.89 &1.31 & 0.29 & 3.26 \\
$\delta_1^{(3)}$ & -1.98 & -2.45 & -1.50 & -2.00  &-3.53 &-1.23 \\
$\delta_1^{(4)}$ &-0.85 &-1.55    & -0.16& -0.29 &-1.43 &3.59\\
$\delta_1^{(5)}$ & 2.76 & 1.39 & 4.12 &2.29 &0.15  & 4.11 \\
\midrule
$\delta_2^{(1)}$ &  -16.8  &-29.6 &  -4.07 & -9.39  & -23.4& 8.77 \\
$\delta_2^{(2)}$ & 12.4 & 2.35 & 22.4 &6.22 & -5.91 & 16.8 \\
$\delta_2^{(3)}$ &  -7.11 & -14.1 & -0.04 & -3.86 & -13.1 &5.32 \\
$\delta_2^{(4)}$ & 4.76 &1.25 & 8.27 & 1.70& -5.86 &6.59 \\
$\delta_2^{(5)}$ & -8.35 & -14.9 & -1.78 &-4.65&-12.1 & 4.98 \\
\midrule
\midrule
    \multirow{2}{*}{$\tau$} &
      \multicolumn{3}{c}{Least-squares} &
      \multicolumn{3}{c}{Bayesian} \\
      \cmidrule(lr){2-4}\cmidrule(lr){5-7}
                & {Estimate} & {LB} & {UB} &{Estimate} & {LB} & {UB}  \\
\midrule                
\multirow{2}{*}{} 
 & 0.780 &0.777 & 0.782 & 0.779 & 0.068& 0.934 \\
 & (96:05) &(96:04) & (96:06) & (96:05) & (72:03)&(01:10) \\
 
     \multirow{2}{*}{} &
      \multicolumn{3}{c}{ILR} &
      \multicolumn{3}{c}{} \\
      \cmidrule(lr){2-4}%\cmidrule(lr){5-7}
                & {Estimate} & {LB} & {UB} &&  &   \\
\midrule                
\multirow{2}{*}{} 
 &  &0.080 &0.890 &  & &  \\
 &  &(72:08) & (00:02) &  & &  \\

\bottomrule
\bottomrule
\end{tabular}  
    \begin{tablenotes}
      \small
      \item The upper panel shows
point estimates as well as 90\% confidence (left) and equal-tailed credible (right) intervals for the regression slope parameters. 
The lower panel shows
point estimates of $\tau$ with the corresponding months in parentheses as well as the bounds of 95\% confidence intervals of \cite{bai1997} and highest posterior density (HPD) sets. 
It also displays the inverted likelihood ratio (ILR) confidence sets of \cite{eo_morley2015}.
LB=lower bound and UB=upper bound of the intervals. 
    \end{tablenotes}
  \end{threeparttable}
  }
\caption{Estimation results for the Japanese stock return}  
\label{table_application_jpn}  
\end{table}

In contrast, when the uncertainty on $\tau$ is large, the conventional and the Bayesian results on inference about the slope parameters might disagree.
Table \ref{table_application_jpn} shows the results for Japan. 
Although both $\hat{\tau}_{LS}$ and the posterior mode of $\tau$ are at 1996:05, 
the HPD set and the ILR confidence set are much wider than the confidence interval of \cite{bai1997}, indicating a large uncertainty of the break date. 
The posterior density on $\tau$ in Figure \ref{fig_application_tau} also illustrates that the uncertainty of the break date is much larger for Japan than for the U.K. during the sample period\footnote{
In addition, the posterior on $\tau$ for Japan exhibits tri-modality, which would be similar to the tendency of a finite-sample distribution of $\hat{\tau}_{LS}$ to have three modes as reported in the literature (e.g.,\ \citealp{baek2021}; \citealp{casini_perron2021continuous}). 
}.
The  large uncertainty of $\tau$ is reflected on  Bayesian inference on the slope parameters. 
In the upper panel of Table \ref{table_application_jpn},
we see that in general the Bayesian credible intervals are wider than the confidence intervals. 
Importantly, this can have a qualitative consequence on statistical importance of some parameters. 
For seven of the ten slope coefficients, the confidence intervals do not include zero while the the Bayesian credible intervals do. 
Hence, 
the conventional approach to inference on the slope parameters might not be robust 
with respect to 
the uncertainty of the break date, for the Japanese sample.

%The confidence intervals under the least-squares method are computed fixing the break location at the point estimate $\hat{\tau}_{LS}=0.515$ and then applying the standard asymptotic theory. 

%%%%%%%%%%%%%%%%%%%%%%%%%%%%%%%%%%%%%%%%%%%%%%%%%%%%%%%%%%%%%%%%%%%%%%%%%%%%%%%%%%%%%%%%% 
\section{Conclusion and future direction}\label{section-conclusion}
In this paper, we establish a Bernstein-von Mises type theorem for the slope coefficients in linear regression  with a structural break. By doing so, we bridge the gap between the frequentist and the Bayesian approaches for inference on this  model. On the one hand, a frequentist researcher can look at Bayesian credible intervals for the slope coefficients as a robustness check to see whether the uncertainty of the break location affects  inference on the slope parameters. Such sensitivity analysis is natural as our theoretical result guarantees the credible interval to converge to the conventional confidence interval that the frequentist researcher would use otherwise. On the other hand,  Bayesian inference can be conveyed to frequentists via our proven result. 

%Extending the result to multiple number of breaks should be straightforward. When the number of breaks is unknown, the researcher can either conduct model comparisons based on marginal likelihoods, which are available under the conjugate prior, or can compute the posterior probabilities of the number of breaks using methods such as reversible jump MCMC or dynamic programing techniques. 

%I am currently working on two types of extensions of this paper, which include allowing for parametric/non-parametric heterosckedasticity and considering non-linear regression functions. 

Potential extensions include several directions.
First, the homoscedasticity assumption could be too strong in some applications, and hence extending the results to the case of heteroscedasticity and autocorrelation would be of interest. 
Second, a popular Bayesian method of \cite{chib1998} is different from the approach we took in this paper in that we place an explicit prior on $\tau$ and that Chib's framework can be naturally extended to the case of multiple breaks.
It would be interesting to study frequentist properties of Chib's approach.

\pagebreak
\appendix
%%%%%%%%%%%%%%%%%%%%%%%%%%%%%%%%%%%%%%%%%%%%%%%%%%%%%%%%%%
\section{Proof of Theorems and Corollaries}\label{AppendixThms}
In Appendix A, we provide proofs of Theorems 1-4 and Corollaries 1-2. See Appendix \ref{AppendixProps} for proofs of the Propositions used for proving the main theorems. 

\subsection{Proof of Theorem 1}\label{ProofThmConsistency}

\begin{proof}[Proof of Theorem \ref{ThmConsistency}]
%%%%%%%%%%%%%%%%%%%%%%%%%%%%%%%%%%%%%%%%%%%%%%%%%%%%%%%%%%
%Suppose that Assumptions \textbf{A1}-\textbf{A4} hold and that either the uninformative improper prior or the conjugate normal-inverse-gamma prior is used. 
%
Note that 
\begin{align*}
\pi_T(\tau) &= \frac{ L_T( \tau) }{ \int L_T( \tau')  d\tau'  } = \frac{ L_T( \tau_0) }{ \int L_T( \tau')  d\tau'  } \frac{ L_T( \tau)   }{ L_T( \tau_0)  } =\pi_T(\tau_0)  \frac{ L_T(\tau)   }{ L_T(\tau_0)  } ,\\
\pi_T(\tau_0) &=  \frac{ L_T( \tau_0) }{ \int L_T( \tau')  d\tau'  } \leq  \frac{ L_T(\tau_0) }{ \int _{   B^c_{M_0/T}(\tau_0) }L_T( \tau')  d\tau'  }  
=\bigg[ \int_{  B^c_{M_0/T} (\tau_0)  } \frac{ L_T( \tau') }{ L_T( \tau_0 )} d\tau'   \bigg]^{-1},
\end{align*}
for any $M_0>0$.
%%%%%%%%%%%%%%%%%%%%%%%%%%%%%
Hence for each $T$ and for any $M_0>0$, 
\begin{equation}
\int_{ B^c_{M/T}(\tau_0) }  \pi_T(\tau) d\tau 
=
\pi_T(\tau_0)   \int_{ B^c_{M/T}(\tau_0) } \frac{ L_T( \tau)   }{ L_T(\tau_0)  } d\tau 
\leq 
\bigg[ \int_{  B^c_{M_0/T} (\tau_0) } \frac{ L_T(\tau') }{ L_T( \tau_0 )} d\tau'   \bigg]^{-1}  
\int_{ B^c_{M/T}(\tau_0) } \frac{ L_T( \tau)   }{ L_T(\tau_0)  } d\tau.  \label{bounding}
\end{equation}
%%%%%%%%%%%%%%%%%%%%%%%%%%%%%
Therefore, we want to find 
\begin{enumerate}
\item
an upper bound for $\int_{ B^c_{M/T}(\tau_0) } \frac{ L_T( \tau)   }{ L_T(\tau_0)  } d\tau $ and 
\item 
a lower bound for $ \int_{  B^c_{M_0/T} (\tau_0) } \frac{ L_T(\tau') }{ L_T( \tau_0 )} d\tau'  $ for some $M_0>0$
\end{enumerate}
%%%%%%%%%%%%%%%%%%%%%%%%%%%%%
We can write the marginal likelihood ratio as   
\[
 \frac{  L_T(\tau)  }{  L_T(\tau_0) } 
=
\exp\bigg[ T\left\{  \frac{1}{T} \log \left( \frac{L_T(\tau)}{L_T(\tau_0)} \right)   \right\} \bigg] .
\]

%%%%%%%%%%%%%%%%%%%%%%%%%%%%%
The proof of Theorem 1 is built on some intermediate steps, Propositions \ref{PropRatio}-\ref{PropACts}.
Proposition \ref{PropRatio} shows that, under the normal likelihood and the conjugate prior, studying this ratio boils down to comparing the sum of squared residuals  $S_T(\tau)$. 

%%%%%%%%%%%%%%%%%%%%%%%%%%%%%%%%%%%%%%%%%%%%%%%%%%%%%%%%%%
% Prop
\begin{restatable}{proposition}{PropRatio}\label{PropRatio}%
Suppose  Assumption \ref{assumption1} holds. 
Then, with the normal likelihood 
and the conjugate prior described above,  
under $P_{\theta_0, \tau_0}$, for all $\tau$, 
\[
 \frac{1}{T} \log \left( \frac{L_T(\tau)}{L_T(\tau_0)} \right) 
 =
\frac{1}{2} \log \left( \frac{S_T(\tau_0)}{S_T(\tau)} \right)
+
O_p(T^{-1}).
\]
\end{restatable}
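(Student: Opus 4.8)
The plan is to exploit the closed-form marginal likelihood that is already available under the conjugate prior. Since $\pi_T(\tau) \propto L_T(\tau)$ and, by \eqref{posterior_tau_conjugate}, $\pi_T(\tau) \propto [\det(\bar{H}_\tau)]^{-1/2}\bar{b}_\tau^{-\bar{a}}\pi(\tau)$, we may write $L_T(\tau) = C_T\,[\det(\bar{H}_\tau)]^{-1/2}\bar{b}_\tau^{-\bar{a}}\pi(\tau)$ with $C_T$ free of $\tau$. First I would take the log of the ratio $L_T(\tau)/L_T(\tau_0)$, so that $C_T$ cancels, and divide by $T$ to obtain three pieces: a determinant term $-\tfrac{1}{2T}\log\{\det(\bar{H}_\tau)/\det(\bar{H}_{\tau_0})\}$, a scale term $-\tfrac{\bar{a}}{T}\log(\bar{b}_\tau/\bar{b}_{\tau_0})$, and a prior term $\tfrac{1}{T}\log\{\pi(\tau)/\pi(\tau_0)\}$.

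Two of these are immediately negligible. The prior term is $O_p(T^{-1})$ because $\pi$ is finite and positive on $\mathcal{H}$, so its log ratio is bounded. For the determinant term, Assumption \ref{assumption1}(iii)--(iv) give $\bar{H}_\tau/T = \underline{H}/T + \chi_\tau'\chi_\tau/T$ converging to a positive definite limit for both $\tau$ and $\tau_0$; hence $\log\{\det(\bar{H}_\tau)/\det(\bar{H}_{\tau_0})\} = \log\{\det(\bar{H}_\tau/T)/\det(\bar{H}_{\tau_0}/T)\}$ is $O_p(1)$ (the factors of $T^{d_x+d_z}$ cancel), and division by $T$ yields $O_p(T^{-1})$.

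The crux is the scale term, where I must convert the posterior quantity $\bar{b}_\tau$ into the sum of squared residuals $S_T(\tau)$. Writing $A_\tau = \chi_\tau'\chi_\tau$, so that $\bar{H}_\tau = \underline{H}+A_\tau$ and $S_T(\tau) = Y'Y - Y'\chi_\tau A_\tau^{-1}\chi_\tau'Y$, I would expand the ridge quadratic form $\bar{\mu}_\tau'\bar{H}_\tau\bar{\mu}_\tau = (\underline{H}\underline{\mu}+\chi_\tau'Y)'\bar{H}_\tau^{-1}(\underline{H}\underline{\mu}+\chi_\tau'Y)$ and compare it with the OLS projection $Y'\chi_\tau A_\tau^{-1}\chi_\tau'Y$. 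Using the Neumann expansion $(\underline{H}+A_\tau)^{-1} = A_\tau^{-1} - A_\tau^{-1}\underline{H}A_\tau^{-1}+\cdots$, together with the orders $A_\tau^{-1}=O_p(T^{-1})$, $\chi_\tau'Y=O_p(T)$, and $\underline{H}\underline{\mu}=O(1)$, the cross and correction terms all collapse to $O_p(1)$, giving $2\bar{b}_\tau = S_T(\tau) + O_p(1)$. Since $S_T(\tau)=O_p(T)$, it follows that $\log(\bar{b}_\tau/\bar{b}_{\tau_0}) = \log(S_T(\tau)/S_T(\tau_0)) + O_p(T^{-1})$. Finally, writing $\bar{a}/T = \tfrac12 + \underline{a}/T$ and noting $\log(S_T(\tau)/S_T(\tau_0)) = O_p(1)$, the scale term equals $\tfrac12\log\{S_T(\tau_0)/S_T(\tau)\} + O_p(T^{-1})$, and summing the three pieces gives the claim.

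The main obstacle I anticipate is the $\bar{b}_\tau$-to-$S_T(\tau)$ reduction: one must argue carefully that replacing the ridge estimator $\bar{\mu}_\tau$ by $\hat{\gamma}(\tau)$ costs only $O_p(1)$, i.e.\ that the fixed prior precision $\underline{H}$ is asymptotically swamped by $\chi_\tau'\chi_\tau=O_p(T)$. A secondary point, needed because the remainder is later integrated over $\tau$ in \eqref{bounding}, is to make these $O_p$ bounds hold uniformly on $\mathcal{H}$; this follows from Assumption \ref{assumption1}(iv) provided $\mathcal{H}$ is bounded away from $0$ and $1$, so that $A_\tau/T$ has eigenvalues uniformly bounded above and below.
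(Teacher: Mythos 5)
Your proposal is correct and follows essentially the same route as the paper's proof: the same three-term decomposition of $\frac{1}{T}\log\left(L_T(\tau)/L_T(\tau_0)\right)$ via \eqref{posterior_tau_conjugate}, the same $O_p(T^{-1})$ bounds on the determinant and prior terms, and the same reduction $(1/T)\bar{b}_\tau = \frac{1}{2T}S_T(\tau) + O_p(T^{-1})$ for the scale term. Your Neumann-expansion justification of the ridge-to-OLS step and the closing remark on uniformity in $\tau$ merely flesh out details the paper leaves implicit.
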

%%%%%%%%%%%%%%%%%%%%%%%%%%%%%%%%%%%%%%%%%%%%%%%%%%%%%%%%%%
Let us first examine the limit of the quantity $Q_T(\tau) = T^{-1}S_T(\tau)$.
%%%%%%%%%%%%%%%%%%%%%%%%%%%%%%%%%%%%%%%%%%%%%%%%%%%%%%%%%%
Proposition \ref{PropLimitQ} states that $Q_T(\tau)$ converges in probability to some deterministic function $Q(\tau)$.
%%%
See Figure \ref{fig_Qn} for examples of $Q_T(\tau)$ and $Q(\tau)$.
% Prop
\begin{restatable}{proposition}{PropLimitQ}\label{PropLimitQ}%
Suppose  Assumption \ref{assumption1} holds. Then, under $P_{\theta_0, \tau_0}$, for all $\tau$, 
\[
Q_T(\tau) = Q(\tau) + O_p(T^{-1/2}),
\]
where 
\[
Q(\tau) 
=
\sigma_0^2
+
\begin{cases}
(\tau_0-\tau)
\frac{(1-\tau_0)   }{(1-\tau) }
\delta_0'R'
\Sigma_X
R\delta_0 ,& \text{if } \tau \leq \tau_0\\
   %0,& \text{if } \tau = \tau_0\\   
  (\tau-\tau_0)
\frac{\tau_0   }{\tau}
\text{     }
\delta_0' R'
\Sigma_X
R\delta_0  
,& \text{if } \tau > \tau_0
\end{cases}
\equiv 
\sigma_0^2+\Delta(\tau) .
\]
\end{restatable}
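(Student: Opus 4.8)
The plan is to write $S_T(\tau)$ as a residual sum of squares from projecting onto the column space of $\chi_\tau$ and to isolate the deterministic ``bias'' term, whose probability limit produces $\Delta(\tau)$. Let $P_\tau = \chi_\tau(\chi_\tau'\chi_\tau)^{-1}\chi_\tau'$ be the projection onto $\mathrm{col}(\chi_\tau)$, so that $S_T(\tau) = Y'(I-P_\tau)Y$. Substituting the DGP $Y = \mu + \epsilon$ with $\mu := \chi_{\tau_0}\gamma_0$ gives the three-term decomposition
\[
S_T(\tau) = \mu'(I-P_\tau)\mu + 2\mu'(I-P_\tau)\epsilon + \epsilon'(I-P_\tau)\epsilon .
\]
First I would dispose of the last two terms. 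Writing $\epsilon'(I-P_\tau)\epsilon = \epsilon'\epsilon - \epsilon'P_\tau\epsilon$, the law of large numbers gives $T^{-1}\epsilon'\epsilon = \sigma_0^2 + O_p(T^{-1/2})$, while Assumption \ref{assumption1}(iii)--(iv) bound $\chi_\tau'\epsilon = O_p(T^{1/2})$ and $(\chi_\tau'\chi_\tau)^{-1} = O_p(T^{-1})$, so that $\epsilon'P_\tau\epsilon = O_p(1)$ and contributes only $O_p(T^{-1})$ to $Q_T(\tau)$. The same bounds, together with $\chi_{\tau_0}'\chi_\tau = O_p(T)$, give $\mu'(I-P_\tau)\epsilon = O_p(T^{1/2})$, contributing $O_p(T^{-1/2})$. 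Thus the $\sigma_0^2$ and the entire $O_p(T^{-1/2})$ remainder come from these two terms.

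The substance of the proof is the deterministic term $T^{-1}\mu'(I-P_\tau)\mu = \min_\gamma T^{-1}\|\mu - \chi_\tau\gamma\|^2$. Using $z_t = R'x_t$ and splitting the sum at the two break points $\lfloor\tau T\rfloor$ and $\lfloor\tau_0 T\rfloor$ (take $\tau\le\tau_0$; the case $\tau>\tau_0$ is symmetric), the residual $\mu_t-(\chi_\tau\gamma)_t$ equals $x_t'u$ on $[1,\lfloor\tau T\rfloor]$, $x_t'(u-R\delta)$ on $(\lfloor\tau T\rfloor,\lfloor\tau_0 T\rfloor]$, and $x_t'(u-R\delta+R\delta_0)$ on $(\lfloor\tau_0 T\rfloor,T]$, where $u := \beta_0-\beta$. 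Invoking the block second-moment limits of Assumption \ref{assumption1}(iv), each block average $T^{-1}\sum x_tx_t'$ converges at rate $O_p(T^{-1/2})$ to the corresponding population fraction times $\Sigma_X$, so the minimized objective converges to
\[
\min_{u,\delta}\Big[\tau\, u'\Sigma_X u + (\tau_0-\tau)(u-R\delta)'\Sigma_X(u-R\delta) + (1-\tau_0)(u-R\delta+R\delta_0)'\Sigma_X(u-R\delta+R\delta_0)\Big].
\]
Solving the first-order conditions --- using that $R'\Sigma_X R$ is positive definite because $R$ has full column rank and $\Sigma_X\succ 0$ --- yields $u=0$ (i.e.\ $\beta=\beta_0$) and $R\delta = \tfrac{1-\tau_0}{1-\tau}R\delta_0$; back-substitution gives the minimized value $(\tau_0-\tau)\tfrac{1-\tau_0}{1-\tau}\delta_0'R'\Sigma_X R\delta_0 = \Delta(\tau)$, matching the stated formula. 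I would sanity-check this against the scalar mean-shift example $y_t=\delta_0 1(t>\lfloor\tau_0T\rfloor)+\epsilon_t$, where the optimum collapses to the regime-two level $s=\tfrac{1-\tau_0}{1-\tau}\delta_0$ and reproduces $\Delta(\tau)$.

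The step I expect to be the main obstacle is the interchange of minimization and limit: Assumption \ref{assumption1}(iv) delivers convergence of the \emph{coefficient matrices} of the quadratic objective, but I must deduce convergence of the \emph{minimized value} at the claimed $O_p(T^{-1/2})$ rate. Because the objective is a quadratic form in $(u,\delta)$ whose Hessian converges to the positive-definite block matrix built from $\Sigma_X$ and $R'\Sigma_X R$, the minimizer is eventually unique and the minimum value is a locally Lipschitz function of the coefficient matrices; the $O_p(T^{-1/2})$ perturbation of those matrices then propagates to an $O_p(T^{-1/2})$ perturbation of the minimum. A secondary point is that $T^{-1}\epsilon'\epsilon = \sigma_0^2 + O_p(T^{-1/2})$ requires a finite fourth moment (or a central limit theorem for $\epsilon_t^2$); if only second moments are posited this term is merely $\sigma_0^2 + o_p(1)$, which still suffices for the consistency argument of Theorem \ref{ThmConsistency} that this proposition feeds into.
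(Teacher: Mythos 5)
Your proof is correct, but it takes a genuinely different route from the paper's. The paper starts from the algebraic identity $S_T(\tau)=\bar S_T - V_T(\tau)$ (Amemiya; Bai), where $\bar S_T$ is the SSR from regressing $Y$ on $X$ alone and $V_T(\tau)=\hat\delta(\tau)'\left(Z_{2\tau}'MZ_{2\tau}\right)\hat\delta(\tau)$ with $M=I-X(X'X)^{-1}X'$; it then uses Frisch--Waugh to write $\hat\delta(\tau)$ in closed form and obtains $\Delta(\tau)$ by computing probability limits of the partialled-out cross-products $T^{-1}Z_{2\tau}'MZ_{2\tau_0}$, $T^{-1}Z_{2\tau}'MZ_{2\tau}$, $T^{-1}Z_{2\tau_0}'MZ_{2\tau_0}$ and assembling the scalar factor $\tau_0(1-\tau_0)-\tau(1-\tau_0)^2/(1-\tau)=(\tau_0-\tau)\frac{1-\tau_0}{1-\tau}$. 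You instead work with the full projection $P_\tau$ onto $\mathrm{col}(\chi_\tau)$, split $S_T(\tau)$ into bias, cross, and noise terms, and identify the limit of the bias term as the minimum of the limiting population least-squares criterion, solved via first-order conditions; your stochastic orders for the cross and noise terms match the paper's, and your minimization (yielding $u=0$, $R\delta=\frac{1-\tau_0}{1-\tau}R\delta_0$, and minimized value $\Delta(\tau)$) checks out. Both arguments rest on the same ingredient, the block second-moment limits in Assumption 1(iv). What the paper's route buys is that everything stays in explicit matrix form, so no interchange of minimization and limit ever needs justification; what your route buys is that it dispenses with the $\bar S_T - V_T$ identity and Frisch--Waugh altogether, and it exposes the pseudo-true slope values $\beta^*(\tau)=\beta_0$ and $\delta^*(\tau)=\frac{1-\tau_0}{1-\tau}\delta_0$, which connect naturally to the pseudo-true parameter $\theta^*(\tau)$ invoked in the non-conjugate analysis of Section 4. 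Your treatment of the min/limit interchange is sound: the minimum value of a quadratic is a smooth function of its coefficient matrices in a neighborhood where the Hessian is positive definite, and the limiting Hessian, built from $\Sigma_X$ and $R'\Sigma_X R$, is indeed positive definite for $\tau\in(0,1)$. Finally, your fourth-moment caveat is well taken but applies equally to the paper: its own step $T^{-1}\epsilon'M\epsilon=\sigma_0^2+O_p(T^{-1/2})$ likewise requires $E\epsilon_t^4<\infty$, which Assumption 1 does not state, so this gap is shared rather than introduced by your argument.
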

%%%%%%%%%%%%%%%%%%%%%%%%%%%%%%%%%%%%%%%%%%%%%%%%%%%%%%%%%%
Define $A_T(\tau) = g\left(  Q_T(\tau)   \right)$ and  $A(\tau) = g(Q(\tau))$ where $g(x)=-\frac{1}{2}\log(x)$. 
%%%%%%%%%%%%%%%
Due to Proposition \ref{PropRatio}, we can write
\begin{equation}
 T^{-1} \log \left( \frac{L_T(\tau)}{L_T(\tau_0)} \right) 
=
A_T(\tau) - A_T(\tau_0) +O_p(T^{-1}). \label{ratio}
\end{equation}

Proposition \ref{PropAMax} below says that the limit $A(\tau)$ of $A_T(\tau)$ attains its maximum at $\tau_0$.
%%%%%%%%%%%%%%%%%%%%%%%%%%%%%%%%%%%%%%%%%%%%%%%%%%%%%%%%%%
% Prop
\begin{restatable}{proposition}{PropAMax}\label{PropAMax}%
$A(\tau)$ attains its unique maximum at $\tau_0$
\end{restatable}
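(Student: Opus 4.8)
The plan is to reduce the claim to a monotonicity argument on $Q$. Since $g(x)=-\tfrac{1}{2}\log x$ is strictly decreasing on $(0,\infty)$, and since $Q(\tau)=\sigma_0^2+\Delta(\tau)$ is strictly positive for every $\tau$, maximizing $A(\tau)=g(Q(\tau))$ over $\mathcal{H}$ is equivalent to minimizing $Q(\tau)$, and hence to minimizing $\Delta(\tau)$. So it suffices to show that $\Delta(\tau)\ge 0$ for all $\tau\in\mathcal{H}$, with equality if and only if $\tau=\tau_0$.

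First I would record that the scalar $c:=\delta_0'R'\Sigma_X R\delta_0$ is strictly positive. By Assumption~\ref{assumption1}(i) we have $\delta_0\neq 0$, and since $R$ has full column rank this gives $R\delta_0\neq 0$; then positive definiteness of $\Sigma_X$ from Assumption~\ref{assumption1}(iii) yields $c=(R\delta_0)'\Sigma_X(R\delta_0)>0$. With this in hand I would examine the two branches of $\Delta$. For $\tau\le\tau_0$, $\Delta(\tau)=(\tau_0-\tau)\frac{1-\tau_0}{1-\tau}\,c$; because $\tau,\tau_0\in(0,1)$ both $\frac{1-\tau_0}{1-\tau}$ and $c$ are strictly positive, while $\tau_0-\tau\ge 0$ with equality precisely at $\tau=\tau_0$. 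For $\tau>\tau_0$, $\Delta(\tau)=(\tau-\tau_0)\frac{\tau_0}{\tau}\,c$, where again $\frac{\tau_0}{\tau}\,c>0$ and $\tau-\tau_0>0$. Hence $\Delta(\tau)\ge 0$ everywhere with equality only at $\tau_0$, so $Q$ attains its unique minimum, and therefore $A$ its unique maximum, at $\tau_0$.

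There is essentially no hard step here; the content is entirely in the explicit form of $Q$ already supplied by Proposition~\ref{PropLimitQ}. The only points needing care are that $R\delta_0\neq 0$ must be argued from the full-column-rank property of $R$ together with $\delta_0\neq 0$ (rather than from $\delta_0\neq 0$ alone), and that the denominators $1-\tau$ and $\tau$ remain strictly positive throughout because $\mathcal{H}\subset(0,1)$, so the piecewise expression is well defined and the positivity of the prefactors is genuine. Continuity of $\Delta$ (and hence $A$) at $\tau=\tau_0$, where both branches vanish, can be noted in passing but plays no role beyond confirming that $\tau_0$ is an interior minimizer of $Q$.
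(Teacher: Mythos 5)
Your proof is correct and takes essentially the same route as the paper's: both establish $\delta_0'R'\Sigma_X R\delta_0>0$ from the full column rank of $R$, $\delta_0\neq 0$, and positive definiteness of $\Sigma_X$, then use the explicit form of $Q(\tau)=\sigma_0^2+\Delta(\tau)$ to conclude $Q(\tau)>Q(\tau_0)=\sigma_0^2$ for $\tau\neq\tau_0$ and apply the strictly decreasing map $g(x)=-\tfrac{1}{2}\log(x)$. Your branch-by-branch check of the prefactors and the remark that $\mathcal{H}\subset(0,1)$ keeps the denominators positive merely make explicit what the paper leaves implicit.
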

%%%%%%%%%%%%%%%%%%%%%%%%%%%%%%%%%%%%%%%%%%%%%%%%%%%%%%%%%%
%
Proposition \ref{PropACts} establishes the modulus of continuity of the empirical process $\left\{  A_T(\tau) -A_T(\tau_0) \right\} -   \left\{  A(\tau) -A(\tau_0)    \right\}$ outside of a ball around $\tau_0$ with radius proportional to $T^{-1}$.
% Prop
\begin{restatable}{proposition}{PropACts}\label{PropACts}
Suppose  Assumption \ref{assumption1} holds. Then, under
$\forall \eta>0$, $\forall \epsilon>0$, $\exists M>0$ and $k>0$ such that $T \geq k$ $\implies $
\[
P_{\theta_0, \tau_0} 
\left( 
\inf_{\tau \in B^c_{M/T}(\tau_0)} \frac{| \left\{  A_T(\tau) -A_T(\tau_0) \right\} -   \left\{  A(\tau) -A(\tau_0)    \right\}   |}{| \tau-\tau_0| } < \eta
\right)
> 1-\epsilon.
\]
\end{restatable}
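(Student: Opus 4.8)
The plan is to control the fluctuation process uniformly by passing from $A_T=g(Q_T)$ to $Q_T=T^{-1}S_T(\cdot)$ and then down to the underlying partial sums. First I would record that on $B^c_{M/T}(\tau_0)$ (where $\mathcal H$ is trimmed away from $0$ and $1$) both $Q(\tau)=\sigma_0^2+\Delta(\tau)$ and, by Proposition \ref{PropLimitQ}, $Q_T(\tau)$ lie in a fixed compact interval $[\underline c,\overline c]\subset(0,\infty)$ with probability approaching one, so that $g(x)=-\tfrac12\log x$ has $g'$ and $g''$ bounded on the relevant range. Writing $E_T(\tau):=A_T(\tau)-A(\tau)=g(Q_T(\tau))-g(Q(\tau))$, the numerator equals $E_T(\tau)-E_T(\tau_0)$, and a mean-value expansion gives
\[
E_T(\tau)-E_T(\tau_0)=g'(\xi_\tau)\big\{[Q_T(\tau)-Q_T(\tau_0)]-[Q(\tau)-Q(\tau_0)]\big\}+\big[g'(\xi_\tau)-g'(\xi_{\tau_0})\big]\,[Q_T(\tau_0)-Q(\tau_0)],
\]
with $\xi_\tau$ between $Q_T(\tau)$ and $Q(\tau)$. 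The second term is negligible: $|g'(\xi_\tau)-g'(\xi_{\tau_0})|=O(|\tau-\tau_0|)+O_p(T^{-1/2})$ by Lipschitzness of $g'$ together with $Q(\tau)-Q(\tau_0)=\Delta(\tau)=O(|\tau-\tau_0|)$, while $Q_T(\tau_0)-Q(\tau_0)=O_p(T^{-1/2})$ by Proposition \ref{PropLimitQ}; after dividing by $|\tau-\tau_0|$ this leaves a term of order $T^{-1/2}+M^{-1}$. So everything reduces to the bracketed $Q$-increment.

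Second, I would expand that increment through the sum of squared residuals. Decomposing $T^{-1}[S_T(\tau)-S_T(\tau_0)]$ into a deterministic piece, which reproduces $Q(\tau)-Q(\tau_0)=\Delta(\tau)$ to leading order, and stochastic pieces built from the partial sums $\sum z_t\epsilon_t$ and $\sum(z_tz_t'-\Sigma_X)$ over the block of indices strictly between $\lfloor\tau T\rfloor$ and $\lfloor\tau_0 T\rfloor$ (a block of length $\asymp |\tau-\tau_0|T$), Assumption \ref{assumption1}(iv) yields
\[
[Q_T(\tau)-Q_T(\tau_0)]-[Q(\tau)-Q(\tau_0)]=O_p\!\big(\sqrt{|\tau-\tau_0|/T}\big)
\]
pointwise. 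Dividing by $|\tau-\tau_0|$ gives order $\big(T|\tau-\tau_0|\big)^{-1/2}\le M^{-1/2}$ on $B^c_{M/T}(\tau_0)$, which is the source of the $M$-dependence in the conclusion.

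Third comes the uniformity, which is the real work. Because $S_T(\tau)$ depends on $\tau$ only through the integer $m=\lfloor\tau T\rfloor$, the supremum over $\tau\in B^c_{M/T}(\tau_0)$ collapses to a maximum over integer offsets with $|m|\gtrsim M$, and the stochastic increment divided by $|\tau-\tau_0|\asymp|m|/T$ becomes $|P_m|/|m|$ for partial sums $P_m$ of the $\delta_0'z_t\epsilon_t$ terms and their second-moment analogues. I would then invoke a H\'ajek--R\'enyi / Kolmogorov-type maximal inequality to bound $\max_{|m|\ge M}|P_m|/|m|$; since the controlling tail $\sum_{i>M}i^{-2}=O(M^{-1})$ converges uniformly over the admissible range $M\le|m|\le O(T)$, this maximum is $O_p(M^{-1/2})$ uniformly in $T$. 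Combining the three steps and choosing $M$ so large that the resulting $O(M^{-1})$ probability bound is below $\epsilon$ and the order-$M^{-1/2}$ magnitude is below $\eta$, the uniform bound $\sup_{\tau\in B^c_{M/T}(\tau_0)}|E_T(\tau)-E_T(\tau_0)|/|\tau-\tau_0|<\eta$ holds with probability exceeding $1-\epsilon$ for all $T\ge k$; since $B^c_{M/T}(\tau_0)\neq\varnothing$, this implies the stated infimum bound a fortiori (and is in fact the uniform control that the bounding inequality \eqref{bounding} in Theorem \ref{ThmConsistency} requires).

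The main obstacle is this last maximal-inequality step: under the weak rate conditions of Assumption \ref{assumption1}(iv) (rather than i.i.d.\ bounded summands) I must establish genuinely uniform-in-$m$ control of $|P_m|/|m|$ over the full range $M\le|m|\le O(T)$, rather than the pointwise $O_p$ bounds that Proposition \ref{PropLimitQ} supplies, and simultaneously verify that the deterministic remainder left over from matching $T^{-1}[S_T(\tau)-S_T(\tau_0)]$ to $\Delta(\tau)$ is $o(|\tau-\tau_0|)$ so that it too vanishes after division. The transfer from $A$ to $Q$ via the mean-value expansion is routine once $g'$ and $g''$ are pinned to a compact range, so the probabilistic heart of the argument is the partial-sum maximal inequality.
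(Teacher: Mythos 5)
Your proposal is correct and, although it starts from the same reduction as the paper (a Taylor/mean-value expansion of $g(x)=-\tfrac{1}{2}\log x$ combined with the rate $Q_T(\tau)=Q(\tau)+O_p(T^{-1/2})$ from Proposition \ref{PropLimitQ}), it departs from the paper's proof precisely where that proof is tersest, so the comparison is worth recording. The paper expands $A_T(\tau)-A(\tau)$ and $A_T(\tau_0)-A(\tau_0)$ separately and then writes their difference as $\left(g'(Q(\tau))-g'(Q(\tau_0))\right)O_p(T^{-1/2})+O_p(T^{-1})$, implicitly treating the two fluctuations $Q_T(\tau)-Q(\tau)$ and $Q_T(\tau_0)-Q(\tau_0)$ as a single common $O_p(T^{-1/2})$ factor; it then concludes via the elementary inequality $1/b-1/(b+x)\le Bx$ and the piecewise-linear shape of $\Delta(\tau)$, which gives the Lipschitz factor $B'|\tau-\tau_0|$, and no maximal inequality ever appears, with uniformity in $\tau$ of the $O_p$ terms left implicit. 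Your decomposition makes the suppressed piece explicit: your cross term $\left[g'(\xi_\tau)-g'(\xi_{\tau_0})\right]\left[Q_T(\tau_0)-Q(\tau_0)\right]$ reproduces the paper's bound, while your leading term $g'(\xi_\tau)\left\{[Q_T(\tau)-Q_T(\tau_0)]-[Q(\tau)-Q(\tau_0)]\right\}$ is exactly the quantity the paper's display absorbs into the common factor, and it is what forces you to the block partial sums of length $\asymp|\tau-\tau_0|T$ and to the H\'ajek--R\'enyi maximal inequality, yielding $O_p\left((T|\tau-\tau_0|)^{-1/2}\right)\le O_p(M^{-1/2})$ uniformly over $B^c_{M/T}(\tau_0)$ --- the same device \cite{bai1997} uses for $T$-consistency. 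What your route buys is genuine uniformity: as you correctly observe, the proposition as stated (an infimum bound) is weaker than what the proof of Theorem \ref{ThmConsistency} actually consumes at \eqref{ub}--\eqref{lb}, namely a bound holding for \emph{all} $\tau\in B^c_{M/T}(\tau_0)$, and your supremum version delivers exactly that, whereas pointwise-in-$\tau$ $O_p$ statements do not, strictly speaking. The cost is that you need the $\sqrt{|\tau-\tau_0|/T}$ variance scaling of block sums --- a mild strengthening of the literal statement of Assumption \ref{assumption1}(iv), though automatic under the i.i.d.\ errors of Assumption \ref{assumption1}(ii) with bounded $\mathcal{X}$ --- plus the maximal-inequality bookkeeping you flag; the paper's route is shorter and stays entirely within Proposition \ref{PropLimitQ} and elementary inequalities, but at the price of asserting, rather than proving, both the identification of the two $O_p(T^{-1/2})$ terms and the uniformity that the downstream argument requires.
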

%%%%%%%%%%%%%%%%%%%%%%%%%%%%%%%%%%%%%%%%%%%%%%%%%%%%%%%%%%
%Proofs of Theorem 1 and Propositions 1-4 can be found in  Appendices \ref{ProofThmConsistency} and \ref{AppendixProps}, respectively. 

%See Appendix \ref{ProofPropRatio} for the proof. 

By Proposition \ref{PropAMax}, $A(\cdot)$ attains its unique max at $\tau_0$. Note that the convex function $A(\tau)$ is not differentiable at $\tau_0$. Hence we have, 
\begin{align*}
A(\tau)-A(\tau_0) &< |\tau - \tau_0 |  B_1,  \\ %\\\label{ub} 
A(\tau)-A(\tau_0) &> |\tau - \tau_0 | B_2 ,  %\label{lb}
\end{align*}
%%%%%%%%%%%%%%%%%%%%%%%%%%%%%
for some $B_1,B_2<0$.
%\begin{align*}
%$B_1 = \max \{ A'(\tau_0+),-A'(\tau_0-)  \}<0 $ and 
%$ B_2= \min \{ A'(\tau_0+),-A'(\tau_0-)  \}<0 $.\\
%\end{align*}
%
By Proposition \ref{PropACts}, given $\eta_1>0$, $\exists M>0:$ with $P_{\theta_0, \tau_0} \to 1$,
\begin{equation}
A_T(\tau) -A_T(\tau_0) 
< 
\eta_1|\tau-\tau_0| +A(\tau) -A(\tau_0)
< 
|\tau-\tau_0| \{   \eta_1 +B_1  \},  \label{ub}
\end{equation}
for all $\tau \in B^c_{M/T}(\tau_0)$. 
%%%%%%%%%%%%%%%%%%%%%%%%%%%%%
Similarly, given $\eta_2>0$, $\exists M_0>0:$ with $P_{\theta_0, \tau_0}\to1$,
\begin{equation}
A_T(\tau) -A_T(\tau_0) 
>
-\eta_2 |\tau-\tau_0| +A(\tau) -A(\tau_0) 
> 
|\tau-\tau_0| \{  - \eta_2 +B_2  \},  \label{lb}
\end{equation}
for all $\tau \in B^c_{M_0/T}(\tau_0)$.
%%%%%%%%%%%%%%%%%%%%%%%%%%%%%
%where the last inequalities are due to \eqref{ub} and \eqref{lb}. 
%%%%%%%%%%%%%%%%%%%%%%%%%%%%%
Recall, by Eq.\eqref{ratio}, we have
\[
\frac{L_T(\tau)}{L_T(\tau_0)} 
=
\exp\bigg[ T \bigg( A_T(\tau_0)-A_T(\tau)  \bigg) +O_p(1) \bigg].
\]
%%%%%%%%%%%%%%%%%%%%%%%%%%%%%
Hence, from Eq. \eqref{ub}, given $\eta_1>0$, small compared to $-B_1$, there is $B_1'<0$, which is independent of $M$: we have with $P_{\theta_0, \tau_0}\to1$,
\begin{equation}
\frac{L_T(\tau)}{L_T(\tau_0)} 
\leq
\exp\bigg[ T|\tau-\tau_0|  B'_1    +O_p(1) \bigg] 
=
\exp\bigg[ T|\tau-\tau_0|  B'_1  \bigg]
O_p(1),
\label{exp_ub}
\end{equation}
for all $\tau \in B^c_{M/T}(\tau_0)$. Note that the statement above still holds with a larger value of $M>0$ as the area outside of the ball will be contained by that for the original $M$.
%%%%%%%%%%%%%%%%%%%%%%%%%%%%%
Similarly, from Eq. \eqref{lb}, there is $B_2'<0$ and $M_0>0:$  with $P_{\theta_0, \tau_0}\to1$,
\begin{equation}
\frac{L_T(\tau)}{L_T(\tau_0)} 
\geq
\exp\bigg[ T|\tau-\tau_0|  B'_2   +O_p(1) \bigg]
=
\exp\bigg[ T|\tau-\tau_0|  B'_2  \bigg]
O_p(1) ,
\label{exp_lb}
\end{equation}
for all $\tau \in B^c_{M_0/T}(\tau_0)$.
%%%%%%%%%%%%%%%%%%%%%%%%%%%%%
Now, by Inequality \eqref{exp_ub} and the fundamental theorem of calculus, 
\[
\int_{B^c_{M/T}(\tau_0)} \frac{L_T(\tau)}{L_T(\tau_0)} d\tau
\leq 
\int_{B^c_{M/T}(\tau_0)} \exp\bigg[ T|\tau-\tau_0|  B'_1  \bigg]d\tau O_p(1) 
=
\frac{1}{TB'_1} \left(  e^{TB'_1} - e^{B'_1M} \right)
O_p(1) .
\]
%%%%%%%%%%%%%%%%%%%%%%%%%%%%%
Similarly, by Inequality \eqref{exp_lb}, 
\[
\int_{B_{M_0^c/T}(\tau_0)} \frac{L_T(\tau)}{L_T(\tau_0)} d\tau
\geq 
\int_{B_{M_0^c/T}(\tau_0)} \exp\bigg[ T|\tau-\tau_0|  B'_2   \bigg] d\tau O_p(1) 
=
\frac{1}{TB'_2} \left(  e^{TB'_2} - e^{B'_2 M_0} \right)
O_p(1) .
\]
%%%%%%%%%%%%%%%%%%%%%%%%%%%%%
This means, together with the bound \eqref{bounding}, 
\[
\int_{ B^c_{M/T}(\tau_0) }  \pi_T(\tau) d\tau 
\leq 
\bigg[ \int_{  B_{M^c_0/T} (\tau_0) } \frac{ L_T(\tau') }{ L_T( \tau_0 )} d\tau'   \bigg]^{-1}  
\int_{ B^c_{M/T}(\tau_0) } \frac{ L_T( \tau)   }{ L_T(\tau_0)  } d\tau  
\leq 
\frac{B'_2}{B'_1} \frac{ e^{B'_1 T}-e^{B'_1M} }{ e^{B'_2 T} - e^{B_2' M_0}  }
O_p(1) ,
\]
which can be made arbitrarily small by choosing $M>0$ and $T$ sufficiently large.
%%%%%%%%%%%%%%%%%%%%%%%%%%%%%%%%%%%%%%%%%%%%%%%%%%%%%%%%%%
\end{proof}

\subsection{Proof of Corollary 1}\label{ProofThmMode}
\begin{proof}[Proof of Corollary \ref{ThmMode} ]
The main structure of the proof follows Proposition 2 of \cite{bai1997} and relies on an implication of our Theorem \ref{ThmConsistency}. First, note that we have 
\begin{align*}
\hat{\tau}_{Bayes} 
&=
\argmax_{\tau \in \mathcal{H}} \pi_T(\tau) \\
&=
\argmax_{\tau \in \mathcal{H}} L_T(\tau) \\
&=
\argmax_{\tau \in \mathcal{H}} \frac{1}{T} \log \left( \frac{L_T(\tau)}{L_T(\tau_0)} \right) ,
\end{align*}
which converges in distribution to $\argmax_{\tau \in \mathcal{H}} \log \left( \frac{S_T(\tau_0)}{S_T(\tau)} \right)$ by Proposition \ref{PropRatio}. We have 
\begin{align*}
\argmax_{\tau \in \mathcal{H}} \log \left( \frac{S_T(\tau_0)}{S_T(\tau)} \right)
&=
\argmin_{\tau \in \mathcal{H}}S_T(\tau)\\
&=
\argmax_{\tau \in \mathcal{H}}V_T(\tau)-V_T(\tau_0),
\end{align*}
where $V_T(\tau) = \hat{\delta}(\tau)' \left( Z_2'MZ_2 \right) \hat{\delta}(\tau)$.  \cite{bai1997} shows that $V_T(\tau)-V_T(\tau_0)$ converges in distribution to $W^*\left(\floor*{ T(\tau- \tau_0) } \right)$ uniformly on any bounded interval around $\tau_0$. Let $m^* = \argmax_m W^*(m)$, which is $O_p(1)$. Hence, $\forall \epsilon>0$, $\exists R_1>0:$ $P\left( |m^*| > R_1 \right) <\epsilon$. Our Theorem \ref{ThmConsistency} implies that $\hat{\tau}_{Bayes} = \tau_0 + O_p(T^{-1})$. In other words, $\forall \epsilon>0$, $\exists R_2>0:$ $P\left( T|\hat{\tau}_{Bayes} - \tau_0 |> R_2 \right) <\epsilon$. Take $R=\max\{ R_1,R_2\}$.

Define $\hat{\tau}_R =\argmax_{T|\tau - \tau_0 | \leq R } V_T(\tau) - V_T(\tau_0)$ and $m^*_R =\argmax_{|m | \leq R } W^*(m)$. Then we have $T|\hat{\tau}_R - \tau_0| \overset{d}{\to} m^*_R$. In other words, $\big| P\left( \floor*{T(\hat{\tau}_R -\tau_0 )}=j \right) - P\left( m^*_R = j \right) \big| < \epsilon$ as $T \to \infty$ $\forall |j| \leq R$.  

Note that if $T|\hat{\tau}_{Bayes} - \tau_0 |<R$, then $\hat{\tau}_R=\hat{\tau}_{Bayes}$. Similarly, if $|m^*| <R$, then $m^*_R=m^*$. Hence, $\big| P\left( \floor*{T(\hat{\tau}_{Bayes} -\tau_0 )}=j \right) - P\left( m^* = j \right) \big| $ is bounded by $\big| P\left( \floor*{T(\hat{\tau}_R -\tau_0 )}=j \right) - P\left( m^*_R = j \right) \big|  + P\left(T|\hat{\tau}_{Bayes} - \tau_0|  \geq R\right)  + P\left( |m^*|  \geq R\right) < 3 \epsilon$. As $\epsilon$ can be made arbitrarily small, the desired result holds.

\end{proof}

\subsection{Proof of Theorem 2}\label{ProofThmBvm}
%%%%%%%%%%%%%%%%%%%%%%%%%%%%%%%%%%%%%%%%%%
\begin{proof}[Proof of Theorem \ref{ThmBvm} ]
Define $z = \sqrt{T}\left( \gamma- \hat{\gamma}_{LS} \right)$ and let $\phi(x; \mu, \Sigma)$  be the multivariate normal density with mean $\mu$ and covariance matrix $\Sigma$ evaluated at $x$. %= \left( \sqrt{T}\left( \beta- \hat{\beta}(\tau_0)  \right), \sqrt{T}\left( \delta-  \hat{\delta}(\tau_0)  \right) \right)'$. 
\begin{align*}
&d_{TV} \bigg(  
\pi \left[ z |  \bm{D}_T\right], 
N_{(d_x+d_z)}\left(  0, \sigma_0^2 V^{-1} \right) 
\bigg)
=
\int |  \pi(z| \bm{D}_T) -\phi(z; 0, \sigma_0^2 V^{-1})  | dz\\
&\leq
\int \int  |  \pi(z|\tau,  \bm{D}_T) -\phi(z; 0, \sigma_0^2 V^{-1})  | dz d\pi(\tau| \bm{D}_T)\\
&=
\int d_{TV}\bigg( \pi(z|\tau, \bm{D}_T) , \phi(z; 0, \sigma_0^2 V^{-1})   \bigg) d\pi(\tau| \bm{D}_T)\\
&=
\int_{B_{M/T}(\tau_0)} d_{TV}\bigg( \pi(z|\tau,  \bm{D}_T) , \phi(z; 0, \sigma_0^2 V^{-1})  \bigg) d\pi(\tau| \bm{D}_T) +o_p(1),
\end{align*}
where the last equality is due to Theorem \ref{ThmConsistency}.

%%%%%%%%%%%%%%%%%%%%%
%\noindent \underline{Case 2: Conjugate prior}\\
%Next, let us consider the case with the conjugate prior \eqref{prior_conjugate}. 
From \eqref{posterior_gamma_conjugate}, asymptotically, the posterior of $\gamma$ conditional on $\tau$ is normal:
\begin{align*}
\gamma
\big| \tau, \bm{D}_T &\overset{a}{\sim} N_{(d_x+d_z)} 
\left(
\bar{\mu}_{\tau},
( \bar{b}_\tau /  \bar{a} )\bar{H}_\tau^{-1}
\right) \\
\implies
z | \tau , \bm{D}_T
&\overset{a}{\sim} N_{(d_x+d_z)}
\left(
\sqrt{T}\left( \bar{\mu}_{\tau}-\hat{\gamma}_{LS} \right),
(T \bar{b}_\tau /  \bar{a} )\bar{H}_\tau^{-1}
\right).
\end{align*}
%

%%%%%%%%%%%%%%%%%%%%%%%%%%%%%%%%%%%%%%%%%%%%%%%%%%%%%%%%%%%%%
The total variation distance is bounded above by 2 times square root of the KL divergence. 
In general, the KL divergence between two $p$-dimensional normal distributions $N_p(\mu_1,\Sigma_1)$ and $N_p(\mu_2,\Sigma_2) $ is bounded above by 
\begin{equation}\label{KLbound}
%KL\bigg( N_p(\mu_1,\Sigma_1), N_p(\mu_2,\Sigma_2) \bigg) 
%&\leq 
\underbrace{
\frac{ \left|  \det\left( \Sigma_2^{-1}\right) - \det\left( \Sigma_1^{-1}\right)  \right| }{\min( \det\left( \Sigma_1^{-1}\right) ,\det\left( \Sigma_2^{-1}\right) )  } 
}_{I}
+ 
\underbrace{
p ||\Sigma_2^{-1} - \Sigma_1^{-1} ||_\infty ||\Sigma_1||_\infty
}_{II} 
+
\underbrace{
||\mu_1-\mu_2||_2^2 ||\Sigma_2^{-1}||_2
}_{III} , 
\end{equation}
where $||\Sigma||_\infty = max_{ij} |\Sigma_{ij}|$ is the largest element of $\Sigma$ in the absolute value, and $||\Sigma||_2 = sup_\mu ||\Sigma\mu||_2/||\mu||_2$ is a matrix norm induced by the standard norm on $\mathbb{R}^p$, $||\mu||_2 = \sum_{i=1}^p \mu_i^2$.
%%%%%%%%%%%%%%%%%%%%%%%%%%%%%%%%%%%%%%%%%%%%%%%%%%%%%%%%%%%%%
We can bound the total variation distance between the posterior density of $z$ conditional on $\tau$ and that of $N_{(d_x+d_z)}(0, \sigma_0^2 V^{-1})$ using the bound \eqref{KLbound}, with
%\begin{enumerate}
%\item 
$\mu_1 =  
\sqrt{T}\left( \bar{\mu}_{\tau}-\hat{\gamma}_{LS} \right)$,
$\Sigma_1=(T \bar{b}_\tau /  \bar{a} )\bar{H}_\tau^{-1}$, 
$\mu_2=0$, and
$\Sigma_2=  \sigma_0^2V^{-1}$. 

%\frac{TS_T(\tau)}{T-(d_x+d_z)}( \chi_\tau' \chi_\tau )^{-1} $.
%\end{enumerate}
%%%%%%%%%%%%%%%%%%%%%%%%%%%%%%%%%%%%%%%%%%%%%%%%%%%%%%%%%%%%%

To show $III=o_p(1)$, we write
\begin{equation}
\sqrt{T}\left( \bar{\mu}_{\tau}-\hat{\gamma}_{LS} \right)
=
\sqrt{T}\left( \bar{\mu}_{\tau}-\hat{\gamma}(\tau) \right)
+
\sqrt{T}\left( \hat{\gamma}(\tau) -\hat{\gamma}_{LS} \right). \label{mean}
\end{equation}
By definition, 
\[
\bar{\mu}_{\tau} 
=
 \left[ \frac{1}{T} \underline{H} +\frac{1}{T}  \chi_\tau'\chi_\tau\right]^{-1} \left[ \frac{1}{T}  \underline{H}\underline{\mu}+\frac{1}{T} \chi_\tau Y   \right]  
=
\hat{\gamma}(\tau) + O_p(T^{-1}),
\]
so the first term in \eqref{mean} is  $o_p(1)$. 
To show that  the second term in \eqref{mean} is $o_p(1)$ for $\tau \in B_{M/T}(\tau_0)$, 
write $\sqrt{T}\left(  \hat{\gamma}(\tau) - \hat{\gamma}_{LS} \right)=\sqrt{T}\left(  \hat{\gamma}(\tau) - \gamma_0 \right)-\sqrt{T}\left(  \hat{\gamma}_{LS} -\gamma_0 \right)$. 
Note that $Y=X\beta_0 + Z_{2\tau_0}\delta_0 + \epsilon = X\beta_0+ Z_{2\tau}\delta_0+\epsilon_\tau^*$, where $\epsilon_\tau^* = ( Z_{2\tau_0} - Z_{2\tau } ) \delta_0 + \epsilon$. This implies 
\begin{align*}
\sqrt{T}\left(  \hat{\gamma}(\tau) - \gamma_0 \right)
&=
\left[ 
\frac{1}{T}
\begin{pmatrix}
X'X              & X'Z_{2\tau} \\
Z_{2\tau}'X &  Z_{2\tau}'Z_{2\tau}
\end{pmatrix}
\right]^{-1}
\frac{1}{ \sqrt{T} }
\begin{pmatrix}
X'\epsilon_\tau^*\\
Z_{2\tau}'\epsilon_\tau^*
\end{pmatrix}\\
&=
\left[ 
\frac{1}{T}
\begin{pmatrix}
X'X              & X'Z_{2\tau} \\
Z_{2\tau}'X &  Z_{2\tau}'Z_{2\tau}
\end{pmatrix}
\right]^{-1}
\frac{1}{ \sqrt{T} }
\begin{pmatrix}
X'\epsilon             +X'                 ( Z_{2\tau_0}-Z_{2\tau} ) \delta_0\\
Z_{2\tau}'\epsilon +Z_{2\tau_0}'(Z_{2\tau_0}-Z_{2\tau})  \delta_0 
\end{pmatrix}.
\end{align*}
%%%
For $|\tau - \tau_0| < \frac{M}{T}$, we have
\begin{align*}
\frac{1}{T}X'Z_{2\tau}-\frac{1}{T}X'Z_{2\tau_0} &=o_p(1), 
\quad
\frac{1}{T}Z_{2\tau}'Z_{2\tau}-\frac{1}{T}Z_{2\tau_0}'Z_{2\tau_0} =o_p(1), \\
\quad 
\frac{1}{ \sqrt{T} }X'                 ( Z_{2\tau_0}-Z_{2\tau} ) &=o_p(1), 
\quad  
\frac{1}{ \sqrt{T} } Z_{2\tau_0}'(Z_{2\tau_0}-Z_{2\tau}) =o_p(1), \\
\quad 
\frac{1}{ \sqrt{T} } Z_{2\tau}'\epsilon - \frac{1}{\sqrt{T} } Z_{2\tau_0}\epsilon &=o_p(1),
\end{align*}
which implies 
\[
\sqrt{T}\left(  \hat{\gamma}(\tau) -\gamma_0 \right)
=
\left[ 
\frac{1}{T}
\begin{pmatrix}
X'X              & X'Z_{2\tau_0} \\
Z_{2\tau_0}'X &  Z_{2\tau_0}'Z_{2\tau_0}
\end{pmatrix}
\right]^{-1}
\frac{1}{ \sqrt{T} }
\begin{pmatrix}
X'\epsilon            \\
Z_{2\tau_0}'\epsilon 
\end{pmatrix}
+o_p(1).
\]
%By central limit theorem, it can be shown that the left hand side above has the same limit distribution as that of $ \sqrt{T}\left(  \hat{\gamma}(\tau_0) -\gamma_0 \right)$ for $|\tau - \tau_0| < \frac{M}{T}.$ This means that $III=o_p(1)$.
Similarly, since the least-square estimator $\hat{\tau}_{LS} \in B_{M/T}(\tau_0)$ for sufficiently large $T$, we can show 
\begin{align*}
\sqrt{T}\left(  \hat{\gamma}_{LS}- \gamma_0 \right)
&=
\sqrt{T}\left(  \hat{\gamma}(\hat{\tau}_{LS}) - \gamma_0 \right)\\
&=
\left[ 
\frac{1}{T}
\begin{pmatrix}
X'X              & X'Z_{2\hat{\tau}_{LS}} \\
Z_{2\hat{\tau}_{LS}}'X &  Z_{2\hat{\tau}_{LS} }'Z_{2\hat{\tau}_{LS} }
\end{pmatrix}
\right]^{-1}
\frac{1}{ \sqrt{T} }
\begin{pmatrix}
X'\epsilon             +X'                 ( Z_{2\tau_0}-Z_{2\hat{\tau}_{LS} } ) \delta_0\\
Z_{2\hat{\tau}_{LS} }'\epsilon +Z_{2\tau_0}'(Z_{2\tau_0}-Z_{2\hat{\tau}_{LS} })  \delta_0 
\end{pmatrix}\\
&=
\left[ 
\frac{1}{T}
\begin{pmatrix}
X'X              & X'Z_{2\tau_0} \\
Z_{2\tau_0}'X &  Z_{2\tau_0}'Z_{2\tau_0}
\end{pmatrix}
\right]^{-1}
\frac{1}{ \sqrt{T} }
\begin{pmatrix}
X'\epsilon            \\
Z_{2\tau_0}'\epsilon 
\end{pmatrix}
+o_p(1).
\end{align*}
Hence, 
$\sqrt{T}\left(  \hat{\gamma}(\tau) - \hat{\gamma}_{LS} \right)=o_p(1)$. 
%%%

To show $I$ and $II$ are $o_p(1)$,  note that $\Sigma_1-\Sigma_2$ equals to %for each $\tau \in B_{M/T}(\tau_0)$
\begin{align}
(T \bar{b}_\tau /  \bar{a} )\bar{H}_\tau^{-1} - \sigma^2_0V^{-1}
&=
\left[ (T \bar{b}_\tau /  \bar{a} )\bar{H}_\tau^{-1} - \frac{TS_T(\tau)}{T-(d_x+d_z)} ( \chi_\tau \chi_\tau )^{-1} \right]
+
\left[ \frac{TS_T(\tau)}{T-(d_x+d_z)} ( \chi_\tau \chi_\tau )^{-1}- \sigma^2_0V^{-1} \right]. \label{var}
\end{align}
For the first term in \eqref{var}, we have
\begin{align*}
(T \bar{b}_\tau /  \bar{a} )\bar{H}_\tau^{-1}
&=
\frac{ \bar{b}_\tau }{\underline{a} +T/2} \left[ \frac{1}{T} \underline{H} +\frac{1}{T}  \chi_\tau'\chi_\tau\right]^{-1} 
=
\frac{ \frac{1}{T}\bar{b}_\tau }{\underline{a}/T +1/2} \left[ \frac{1}{T} \underline{H} +\frac{1}{T}  \chi_\tau'\chi_\tau\right]^{-1} .
\end{align*}
Note that  $(1/T)\bar{b}_{\tau}=\frac{1}{2T} S_T(\tau) +O_p(T^{-1})$, 
so we have 
$
(T \bar{b}_\tau /  \bar{a} )\bar{H}_\tau^{-1}
=
S_T(\tau) 
\left[ \chi_\tau'\chi_\tau\right]^{-1}
+O_p(T^{-1}).
$
Therefore, the term in the first square brackets in \eqref{var} is $o_p(1)$. 
For the second term in \eqref{var}, we have that for $|\tau - \tau_0|<\frac{M}{T}$,
\begin{align*}
 \frac{TS_T(\tau)}{T-(d_x+d_z)} ( \chi_\tau \chi_\tau )^{-1}- \sigma^2_0V^{-1}
&=
\underbrace{
\left( 
Q_T(\tau)
-
Q_T(\tau_0)
\right)
}_{=o_p(1)}
\hat{V}^{-1} _T(\tau) %\\
%%%
+
Q_T(\tau_0)
\underbrace{
\left( 
\hat{V}^{-1} _T(\tau)
-
\hat{V}^{-1} _T(\tau_0)
\right)
}_{=o_p(1)}
\\
%%%
&+
\underbrace{
\left(
Q_T(\tau_0) 
-
\sigma_0^2 
 \right)
}_{=O_p(T^{-1/2})}
\hat{V}^{-1}_T(\tau_0)
%\\
%%%
+
\sigma_0^2
\underbrace{
\left(
\hat{V}^{-1}_T(\tau_0)
-
V^{-1}
\right)
}_{=o_p(1)}
+
o_p(1)
=o_p(1),
\end{align*}
where $\hat{V}_T(\tau)= \frac{1}{T}\chi_\tau' \chi_\tau$.

This implies that 
$\Sigma_2^{-1}-\Sigma_1^{-1} 
=o_p(1)
$. Hence $II =o_p(1)$. By continuity of determinants, we also have that $I=o_p(1)$ for $\tau \in B_{M/T}(\tau_0)$.

%%%%%%%%%%%%%%%%%%%%%%%%%%%%%%%%%%%%%%%%%%%%%%%%%%%%%%%
Finally, for $\tau \in B_{M/T}(\tau_0)$
\begin{align*}
d_{TV}\left( \pi(z|\tau, \bm{D}_T) , N_{(d_x+d_z)}(0, \sigma_0^2 V)  \right)
\leq
2 \sqrt{o_p(1)}
=o_p(1).
\end{align*}
This implies that  $d_{TV} \bigg(  
\pi \left[ z | \bm{D}_T\right], 
N_{(d_x+d_z)}(0, \sigma_0^2 V^{-1}) 
\bigg)$ is bounded above by
\[
\int_{B_{M/T}(\tau_0)} d_{TV}\bigg( \pi(z|\tau,  \bm{D}_T) , \phi(z; 0, \sigma_0^2 V^{-1})  \bigg) d\pi(\tau| \bm{D}_T) +o_p(1)=o_p(1).
\]

\end{proof}

\pagebreak

\subsection{Proof of Theorem 3}\label{ProofThmConsistencyGeneral}
%%%%%%%%%%%%%%%%%%%%%%%%%%%%%%%%%%%%%%%%%%
\begin{proof}[Proof of Theorem \ref{ThmConsistencyGeneral} ]

Recall that the proof of Theorem \ref{ThmConsistency} is an implication of Propositions 1-4.
Assumption \ref{assumption1} implies Propositions 2-4.
Proposition \ref{PropRatio} establishes that under the normal likelihood and the conjugate prior, Assumption \ref{assumption1} implies
\begin{align}
 \frac{1}{T} \log \left( \frac{L_T(\tau)}{L_T(\tau_0)} \right) 
 &=
\frac{1}{2}
\log \left( \frac{S_T(\tau_0)}{S_T(\tau)} \right)
+
O_p(T^{-1}). \label{desired_eq}
\end{align}
Therefore, Theorem \ref{ThmConsistencyGeneral} can be proved if we establish the above equality under the normal likelihood and the prior described in Section \ref{section-general-theory}, together with Assumptions \ref{assumption1}-\ref{assumption2}.

%%%%%%%%%%%%%%%%%%%%%%%%%%%%%%%%%%%%%%%
For a given $\tau$, denote by $F_T(\theta,\tau)=\log p(\bm{Y}_T |\bm{X}_T, \theta, \tau)$ the log  likelihood function conditional on $\tau$.
Under the normal likelihood and Assumption \ref{assumption1}, together with Assumption \ref{assumption2}, 
we can involke 
Theorem 3 of \cite{hong_preston2012} (see their page 361) which establishes that
\begin{align*}
\log \int e^{ F_T(\theta, \tau) - F_T( \hat{\theta}(\tau) , \tau) } \pi(\theta, \tau) d\theta = \log\left[  \pi(\theta^*(\tau), \tau ) (2\pi)^{(d_x+d_z+1)/2} \det \left( -TA_\theta(\tau) \right)^{-1/2} \right] +o_p(1),
\end{align*}
for each $\tau$, where $-A_\theta(\tau) $ is the probability limit of 
$ -\frac{1}{T}\frac{\partial^2}{\partial \theta \partial \theta'} F_T\left( \hat{\theta}(\tau), \tau \right) $ and is positive definite.

Note that 
\begin{align*}
\frac{L_T(\tau)}{L_T(\tau_0)}
&=
\frac{ e^{F_T(\hat{\theta}(\tau), \tau)} \int e^{ F_T(\theta, \tau) - F_T( \hat{\theta}(\tau) , \tau) } \pi(\theta, \tau) d\theta }{ e^{F_T(\hat{\theta}(\tau_0), \tau_0)}\int e^{ F_T(\theta, \tau_0) - F_T( \hat{\theta}(\tau_0) , \tau_0) )} \pi(\theta, \tau_0) d\theta  },
\end{align*}
which implies that  
\begin{align*}
 \frac{1}{T} \log \left( \frac{L_T(\tau)}{L_T(\tau_0)} \right) 
&=
\frac{1}{T} F_T(\hat{\theta}(\tau), \tau)
-
\frac{1}{T} F_T(\hat{\theta}(\tau_0), \tau_0)\\
&+
\frac{1}{T}  \log\left[  \pi(\theta^*(\tau), \tau ) (2\pi)^{(d_x+d_z+1)/2} \det \left( -TA_\theta(\tau) \right)^{-1/2} \right] \\
&-
\frac{1}{T}  \log\left[  \pi(\theta^*(\tau_0), \tau_0 ) (2\pi)^{(d_x+d_z+1)/2} \det \left( -TA_\theta(\tau_0) \right)^{-1/2} \right] 
+
O_p(T^{-1})\\
&=
\frac{1}{T}\log \left[  \frac{p(\bm{Y}_T  |\bm{X}_T, \hat{\theta}(\tau),\tau)}{p(\bm{Y}_T  |\bm{X}_T, \hat{\theta}(\tau_0),\tau_0)} \right]
+
\frac{1}{T}\log \left[  \frac{\pi \left(  \theta^*(\tau), \tau \right) }{ \pi \left(  \theta^*(\tau_0), \tau_0 \right) } \right]\\
&-
\frac{1}{2T} \log \left[ \frac{\det \left( -A_\theta(\tau)  \right) }{ \det \left( -A_\theta(\tau_0) \right) } \right]
+
O_p(T^{-1}).
\end{align*} 
Note that we assumed that $ \pi \left(  \theta^*(\tau), \tau \right) $ and $ \pi \left(  \theta^*(\tau_0), \tau_0 \right) $ are finite and non-zero. Hence, the term involving the ratio of the priors is $O_p(T^{-1})$. 
Also,  $-A_\theta(\tau)$ is a positive definite matrix hence its determinant is a finite positive number. 
We have 
\[
p(\bm{Y}_T  |\bm{X}_T, \hat{\theta}(\tau),\tau)
\propto
\left( \frac{1}{\hat{\sigma}^2(\tau)} \right)^{T/2} \exp \bigg[ - \frac{1}{2\hat{\sigma}^2(\tau)} \underbrace{ \sum_{t=1}^T \left( y_t -\chi_{\tau,t} \hat{\gamma}(\tau)  \right)^2 }_{=S_T(\tau)} \bigg] 
=
\left( \frac{1}{\hat{\sigma}^2(\tau)} \right)^{T/2} \exp\left( -T/2\right), %\label{like}
\]
where the last equality is due to the fact that $\hat{\sigma}^2(\tau) = S_T(\tau)/T$. 
This  implies the desired result  i.e.,\ \eqref{desired_eq}.
%
%\begin{align*}
% \frac{1}{T} \log \left( \frac{L_T(\tau)}{L_T(\tau_0)} \right) &=\frac{1}{2}\log \left( \frac{S_T(\tau_0)}{S_T(\tau)} \right) +O_p(T^{-1}).
%\end{align*}
%%%%%%%%%%%%%%%%%%%%%%%%%%%%%%%%%%%%%%%%%%%%%%%%%%%%%%%%%%%%%%%%%%%%%
Note that Propositions 2-4 hold under Assumption \ref{assumption1}.
Therefore, given \eqref{desired_eq}, 
 the rest of the proof of Theorem \ref{ThmConsistencyGeneral} follows the same argument in the proof of Theorem \ref{ThmConsistency} in \ref{ProofThmConsistency}.
\end{proof}

\subsection{Proof of Corollary 2}\label{ProofThmModeGeneral}
\begin{proof}[Proof of Corollary \ref{ThmModeGeneral} ]
Note that 
\begin{align*}
\hat{\tau}_{Bayes} 
&=
\argmax_{\tau \in \mathcal{H}} \pi_T(\tau) 
=
\argmax_{\tau \in \mathcal{H}} \frac{1}{T} \log \left( \frac{L_T(\tau)}{L_T(\tau_0)} \right) ,
\end{align*}
 converges in distribution to $\argmax_{\tau \in \mathcal{H}} \log \left( \frac{S_T(\tau_0)}{S_T(\tau)} \right)$ due to \eqref{desired_eq}, 
 under the normal likelihood and the assumed condition on the prior, together with Assumptions \ref{assumption1}-\ref{assumption2}.
Furthermore, Theorem \ref{ThmConsistencyGeneral} implies that $\hat{\tau}_{Bayes} = \tau_0 + O_p(T^{-1})$.
Based on these two facts,  the rest of the proof follows the same argument as in the proof of Corollary \ref{ThmMode} in \ref{ProofThmMode}.

\end{proof}

\subsection{Proof of Theorem 4}\label{ProofThmBvmGeneral}
%%%%%%%%%%%%%%%%%%%%%%%%%%%%%%%%%%%%%%%%%%
\begin{proof}[Proof of Theorem \ref{ThmBvmGeneral} ]

%Assume that a given prior satisfies the set of conditions in \nameref{prior_general}.  
%
Under the normal likelihood and Assumptions \ref{assumption1}-\ref{assumption2}, a result from \cite{hong_preston2012} (see their page 367) implies that the posterior of $\sqrt{T}\left( \gamma-  \hat{\gamma}(\tau) \right)$ conditional on $\tau$ converges in total variation in probability to the multivariate normal distribution $N(0, -A_\gamma^{-1}(\tau) )$, where $A_\gamma^{-1}(\tau)$ is the sub-matrix of $A_\theta^{-1}(\tau)$ obtained by deleting the last row and the last column,  
$-A_\theta(\tau) $ is the probability limit of 
$ -\frac{1}{T}\frac{\partial^2}{\partial \theta \partial \theta'} F_T\left( \hat{\theta}(\tau), \tau \right) $, and 
$F_T(\theta,\tau)=\log p(\bm{Y}_T |\bm{X}_T, \theta, \tau)$ is the log  likelihood function conditional on $\tau$.
 This means that the total variation between the posterior of $z=\sqrt{T}\left( \gamma -  \hat{\gamma}_{LS} \right)$ given $\tau$ and $N\left( \sqrt{T}\left( \hat{\gamma}(\tau) -  \hat{\gamma}_{LS} \right), -A_\gamma^{-1}(\tau)  \right)$ converges to 0 in probability.

The bound \eqref{KLbound} on the KL divergence between two normal densities can be used again now with 
%\begin{enumerate}
%\item 
%\sqrt{T} \left(     \hat{\beta}^{(j)}(\tau)- \hat{\beta}^{(j)}(\tau_0)   \right)$,
%\item 
$\mu_2=0$,
%\item 
%$\Sigma_1=  \frac{nS_T(\tau)}{n-2k} \left(X_{j\tau}'X_{j\tau} \right)^{-1}   =\frac{n}{n-2k} Q_T(\tau)  \left(  \frac{1}{T} X_{j\tau}'X_{j\tau} \right)^{-1} $, and
%\item 
$\Sigma_2=  \sigma_0^2V^{-1}$, 
$\mu_1 =  
 \sqrt{T}\left( \hat{\gamma}(\tau)- \hat{\gamma}_{LS} \right)$, and
$\Sigma_1 =  -A_\gamma^{-1}(\tau) = \text{plim } \hat{\sigma}^2(\tau) \hat{V}_T^{-1}(\tau)$ where $\hat{V}_T(\tau)= \frac{1}{T}\chi_\tau' \chi_\tau$. 
%S_T(\tau) \left( \chi_\tau \chi_\tau' \right)^{-1} = Q_T(\tau) \hat{V}^{-1} _n(\tau)$ where 
%
Note that from the proof of Theorem \ref{ThmBvm} in \ref{ProofThmBvm}, we know that $\mu_1=o_p(1)$ for $|\tau - \tau_0|<\frac{M}{T}$.

For $|\tau - \tau_0|<\frac{M}{T}$,
\begin{align*}
\Sigma_1-\Sigma_2
&=
%%%
\underbrace{
-A_\gamma^{-1}(\tau) -Q_T(\tau) \hat{V}_T^{-1}(\tau)
}_{=o_p(1)}
\\
%%%
&+
\underbrace{
\left( 
Q_T(\tau)
-
Q_T(\tau_0)
\right)
}_{=o_p(1)}
\hat{V}^{-1} _T(\tau) \\
%%%
&+
Q_T(\tau_0)
\underbrace{
\left( 
\hat{V}^{-1} _T(\tau)
-
\hat{V}^{-1} _T(\tau_0)
\right)
}_{=o_p(1)}
\\
%%%
&+
\underbrace{
\left(
Q_T(\tau_0) 
-
\sigma_0^2 
 \right)
}_{=O_p(T^{-1/2})}
\hat{V}^{-1}_T(\tau_0)
\\
%%%
&+
\sigma_0^2
\underbrace{
\left(
\hat{V}^{-1}_T(\tau_0)
-
V^{-1}
\right)
}_{=o_p(1)}
+
o_p(1)
=o_p(1),
\end{align*}
which implies 
\begin{align*}
\Sigma_2^{-1}-\Sigma_1^{-1} 
=o_p(1).
\end{align*}
The rest of the proof can be done similarly as in the proof of Theorem \ref{ThmBvm} in \ref{ProofThmBvm} by applying the bound \eqref{KLbound}.
\end{proof}

\pagebreak
%%%%%%%%%%%%%%%%%%%%%%%%%%%%%%%%%%%%%%%%%%%%%%%%%%%%%%%%%%
\section{Proof of Propositions}\label{AppendixProps}
%Let $\tau_1,\tau_2\in \Theta$ with $\tau_1<\tau_2$. The following results are used in the proofs of Propositions. (L1)-(L4) are implied by Assumptions \textbf{A1} -\textbf{A4} made in Section \ref{section-dgp-bayes}.
%\begin{align}
%&\frac{ 1}{n}\sum_{i=\floor*{\tau_1 T}+1}^{\floor*{\tau_2T}}x_te_i= O_p(T^{-1/2}) \\\
%&\frac{ 1}{n} \sum_{i=\floor*{\tau_1T}+1}^{\floor*{\tau_2T}}x_tx_t'= (\tau_2-\tau_1)\Sigma_X+O_p(T^{-1/2}) \label{L3}  \tag{L3}
%\end{align}

% Prop

\subsection{Proof of Proposition 1}\label{ProofPropRatio}
\PropRatio*

\begin{proof}[Proof of Proposition \ref{PropRatio}]
%%%%%%%%%%%%%%%%%%%%%%%%%%%%%%%%%%%%%%%%%
%\noindent \underline{Case 2: Conjugate prior}\\
From \eqref{posterior_tau_conjugate}, we have
\begin{align*}
  \frac{1}{T} \log \left( \frac{L_T(\tau)}{L_T(\tau_0)} \right)   
&=
\frac{1}{2T}\log \left[ 
\frac{
\det
\left( \bar{H}_{\tau_0}\right)
}
{
\det
\left(\bar{H}_{\tau}\right)
}
\right] 
+\frac{\bar{a}}{T} \log \bigg[ \frac{ \bar{b}_{\tau_0} }{ \bar{b}_{\tau} }\bigg] 
+\frac{1}{T}\log \bigg( \frac{\pi(\tau)}{\pi(\tau_0)} \bigg).
%&=
%\log \left( (1/T)\bar{b}_{\tau_0} \right) - \log \left( (1/T)\bar{b}_{\tau} \right)
%+O_p(T^{-1})
\end{align*}
Assumption \ref{assumption1}  implies that each component of $(1/T)\chi_\tau' \chi_\tau$ converges in probability to a constant matrix. By continuity of determinant, the determinant converges to the determinant of the limiting matrix. As a result, the quantity inside of log in the first term is $O_p(1)$ and hence the first term is $O_p(T^{-1})$.  
By the choice of the prior, the ratio $\pi(\tau)/\pi(\tau_0)$ is bounded, so the last term is $O(T^{-1})$. 
Note that  
\begin{align}
(1/T)\bar{b}_{\tau}
&=
(1/T)\underline{b}
+
\frac{1}{2T} \left[  \underline{\mu}' \underline{H} \underline{\mu} +Y'Y - \bar{\mu}_\tau' \bar{H}_\tau \bar{\mu}_\tau \right] \nonumber  \\
&=
\frac{1}{2T} \left[  Y'Y - (\chi_\tau'Y)' (\chi_\tau'\chi_\tau)^{-1} (\chi_\tau'Y) \right] +O_p(T^{-1}) \nonumber \\
&=
\frac{1}{2T} S_T(\tau) +O_p(T^{-1}) . \label{b_tau}
\end{align}
Hence, we conclude that
\begin{align*}
&  \frac{1}{T} \log \left( \frac{L_T(\tau)}{L_T(\tau_0)} \right)   
 =  \frac{1}{2} \log \bigg( \frac{S_T(\tau_0)}{S_T(\tau)} \bigg)+O_p(T^{-1}). 
\end{align*}

%Also, note that for example $\frac{1}{T}\sum_{i=1}^T x_tx_t'$, $\frac{1}{T}\sum_{i=\floor*{\tau_0T}+1}^T x_tz_t' $, and $\frac{1}{T}\sum_{i=\floor*{\tauT}+1}^T x_tz_t' $ converge in probability and by continuity of determinant, its determinant converges to the determinant of the limiting moment matrix. Hence, the quantity inside of the log in the third term is $O_p(1)$. Therefore, the first and third terms are $O_p(T^{-1})$.  

%In order to show that the first term after the first equality above is $O_p(T^{-1})$, we use the assumptions on the prior. By the boundedness assumption, there is $B>0$ such that $\pi\left( \hat{\theta}(\tau), \tau \right)<B$. In addition, we know that $\hat{\theta}(\tau_0)=\theta_0 +o_p(1)$. Therefore, by the continuity assumption of the prior, $\frac{  \pi\left( \hat{\theta}(\tau), \tau \right)    }{\pi\left( \hat{\theta}(\tau_0), \tau_0 \right)}$ is bounded above by $B /\left(  \pi(\theta_0, \tau_0) +o_p(1) \right)$. This ratio is $O_p(1)$ because $ \pi(\theta_0, \tau_0)$ is assumed to be positive. 

%To show that the third term is $O_p(T^{-1})$, note that for example $\frac{1}{T}\sum_{i=1}^T x_tx_t'$, $\frac{1}{T}\sum_{i=\floor*{\tau_0 T}+1}^T x_tz_t' $, and $\frac{1}{T}\sum_{i=\floor*{\tau T}+1}^T x_tz_t' $ converge in probability and by continuity of determinant, its determinant converges to the determinant of the limiting moment matrix. Hence, the quantity inside of the log in the third term is $O_p(1)$. Therefore, the first and third terms are $O_p(T^{-1})$.  
%%%%%%%%%%%%%%%%%%%%%%%%%%%%%%%%%%%%%%%%%

\end{proof}

\subsection{Proof of Proposition 2}\label{proof-prop2}
\PropLimitQ*

\begin{proof}[Proof of Proposition \ref{PropLimitQ}]
Let $\tau \in(0,1)$ be given. Let $M=I -X(X'X)^{-1}X'$. We have the following identity: $S_T(\tau) = \bar{S}_T  - V_T(\tau)$ (Amemiya, 1985; Bai, 1997), where $\bar{S}_T$ is the sum of squared residuals from regressing $Y$ on $X$ alone and $V_T(\tau) = \hat{\delta}'(\tau) ( Z_{2\tau}'M Z_{2\tau} ) \hat{\delta}(\tau)$. By Frisch-Waugh Theorem, the OLS estimate of $\delta$ in Eq. \eqref{model3} is equivalent to that in the model $MY = MZ_{2\tau}\delta+M\epsilon$. Note the true model is $MY = MZ_{2\tau_0}\delta_0+M\epsilon$. Hence, 
\begin{align*}
\hat{\delta}(\tau)
&=
(Z_{2\tau}'M Z_{2\tau})^{-1} Z_{2\tau}'M Y\\
&=
(Z_{2\tau}'M Z_{2\tau})^{-1} Z_{2\tau}'\left\{MZ_{2\tau_0}\delta_0+M\epsilon \right\} \\
&=
(Z_{2\tau}'M Z_{2\tau})^{-1}Z_{2\tau}'MZ_{2\tau_0}\delta_0+(Z_{2\tau}'M Z_{2\tau})^{-1}Z_{2\tau}'M\epsilon.
%\hat{\delta}(\tau_0)
%&=
%\delta_0+(Z_{2\tau_0}'M Z_{2\tau_0})^{-1}Z_{2\tau_0}'M\epsilon.
\end{align*}
We have 
\begin{align*}
V_T(\tau) 
&=
\delta_0' (Z_{2\tau}'MZ_{2\tau_0})'(Z_{2\tau}'M Z_{2\tau})^{-1}(Z_{2\tau}'MZ_{2\tau_0})\delta_0\\
&+
2\delta_0' (Z_{2\tau}'MZ_{2\tau_0})'(Z_{2\tau}'M Z_{2\tau})^{-1}(Z_{2\tau}'M\epsilon)\\
&+
(Z_{2\tau}'M\epsilon)'(Z_{2\tau}'M Z_{2\tau})^{-1}(Z_{2\tau}'M\epsilon).
\end{align*}
By Assumption \ref{assumption1} (i),(ii), and (iv),
\begin{align*}
\frac{1}{T}V_T(\tau) 
&=
\frac{1}{T}\delta_0' (Z_{2\tau}'MZ_{2\tau_0})'(Z_{2\tau}'M Z_{2\tau})^{-1}(Z_{2\tau}'MZ_{2\tau_0})\delta_0 + O_p(T^{-1/2}).
\end{align*}
Also we have 
\begin{align*}
\bar{S}_T
&=
Y'MY 
=\delta_0'Z_{2\tau_0}'M Z_{2\tau_0}\delta_0
+
2\delta_0Z_{2\tau_0}M
+
\epsilon'M\epsilon,
\end{align*}
which implies 
\begin{align*}
\frac{1}{T}\bar{S}_T
&=
\frac{1}{T}\delta_0'Z_{2\tau_0}'M Z_{2\tau_0}\delta_0
+\sigma_0^2
+O_p(T^{-1/2}).
\end{align*}
By the above identity, $Q_T(\tau) 
=
\frac{1}{T}\bar{S}_T
-
\frac{1}{T}V_T(\tau)
$ equals to
\[
\sigma_0^2
+
\frac{1}{T}  \delta_0'\bigg\{ (Z_{2\tau_0}'M Z_{2\tau_0})   - (Z_{2\tau}'MZ_{2\tau_0})'(Z_{2\tau}'M Z_{2\tau})^{-1}(Z_{2\tau}'MZ_{2\tau_0}) \bigg\}\delta_0
+O_p(T^{-1/2}).
\]
Note that 
\begin{align*}
(Z_{2\tau_0}'M Z_{2\tau_0}) 
&=
Z_{2\tau_0}'Z_{2\tau_0} -Z_{2\tau_0}'X(XX')^{-1} X' Z_{2\tau_0}\\
&=
R'X_{2\tau_0}' X_{2\tau_0}R -R' (X_{2\tau_0}'X)(XX')^{-1} (X' X_{2\tau_0})R\\
&=
R'X_{2\tau_0}' X_{2\tau_0}R -R' (X_{2\tau_0}'X_{2\tau_0})(XX')^{-1} (X_{2\tau_0}' X_{2\tau_0})R.
\end{align*}
By Assumption \ref{assumption1} (iii) and (iv),
\begin{align*}
\frac{1}{T}(Z_{2\tau_0}'M Z_{2\tau_0}) 
&=(1-\tau_0)R'\Sigma_XR -(1-\tau_0)^2R'\Sigma_XR +O_p(T^{-1/2})\\
&=
\tau_0(1-\tau_0) R'\Sigma_XR +O_p(T^{-1/2}).
\end{align*}
Similarly,
\begin{align*}
\frac{1}{T} (Z_{2\tau}'M Z_{2\tau}) 
&=
\tau(1-\tau) R'\Sigma_XR +O_p(T^{-1/2}).
\end{align*}
WLOG, suppose $\tau<\tau_0$. Then
\begin{align*}
(Z_{2\tau}'M Z_{2\tau_0}) 
&=
Z_{2\tau}'Z_{2\tau_0} -Z_{2\tau}'X(XX')^{-1} X' Z_{2\tau_0}\\
&=
R'X_{2\tau}' X_{2\tau_0}R -R' (X_{2\tau}'X)(XX')^{-1} (X' X_{2\tau_0})R\\
&=
R'X_{2\tau_0}' X_{2\tau_0}R -R' (X_{2\tau}'X_{2\tau})(XX')^{-1} (X_{2\tau_0}' X_{2\tau_0})R,
\end{align*}
which implies that 
\begin{align*}
\frac{1}{T}(Z_{2\tau}'M Z_{2\tau_0}) 
&=
(1-\tau_0) R'\Sigma_XR -(1-\tau)(1-\tau_0)R'\Sigma_XR+O_p(T^{-1/2})\\
&=
\tau(1-\tau_0)R'\Sigma_XR+O_p(T^{-1/2}).
\end{align*}
Therefore,
\begin{align*}
\frac{1}{T}
(Z_{2\tau}'MZ_{2\tau_0})'(Z_{2\tau}'M Z_{2\tau})^{-1}(Z_{2\tau}'MZ_{2\tau_0})
&=
%\bigg[ \underbrace{ \tau(1-\tau_0)\frac{1}{\tau(1-\tau)}\tau(1-\tau_0) }_{=\frac{\tau(1-\tau_0)^2}{1-\tau}}  \bigg] 
\frac{\tau(1-\tau_0)^2}{1-\tau}
R'\Sigma_XR+O_p(T^{-1/2}).
\end{align*}
Finally, 
$ 
T^{-1}\big\{
(Z_{2\tau_0}'MZ_{2\tau_0})
-
(Z_{2\tau}'MZ_{2\tau_0})'(Z_{2\tau}'M Z_{2\tau})^{-1}(Z_{2\tau}'MZ_{2\tau_0})
\big\} 
$ equals to 
\begin{align*}
\bigg[\tau_0(1-\tau_0)- \frac{\tau(1-\tau_0)^2}{1-\tau}  \bigg] R'\Sigma_XR+O_p(T^{-1/2})
=
(\tau_0-\tau)\frac{1-\tau_0}{1-\tau}R'\Sigma_XR+O_p(T^{-1/2}).
\end{align*}

\end{proof}
%%%%%%%%%%%%%%%%%%%%%%%%%%%%%%%%%%%%
%%%%%%%%%%%%%%%%%%%%%
\subsection{Proof of Proposition 3}

\PropAMax*
%proof
\begin{proof}[Proof of Proposition \ref{PropAMax}]
By definition, $Q(\tau_0)=\sigma_0^2$.
Note that $\delta_0' R' \Sigma_X R\delta_0  >0$. This is because (1) $R$ has full column rank, (2) $\delta_0 \ne 0$, and (3) $\Sigma_X$ is assumed to be positive definite. 
Hence, $Q(\tau)>\sigma_0^2$ $\forall \tau \ne \tau_0$. Recall that $A(\tau) = g(Q(\tau))$ where $g(x) = -\frac{1}{2} \log(x)$. Hence $A(\tau) = -\frac{1}{2} \log(\sigma_0^2)$ if $\tau=\tau_0$ and $A(\tau)<-\frac{1}{2} \log(\sigma_0^2)$ otherwise. 
\end{proof} 

%%%%%%%%%%%%%%%%%%%%%%%%%%%%%%%%%%%%%%%%%%%%%%%%%%%%%%%%%%

\subsection{Proof of Proposition 4}
\PropACts*
%proof
\begin{proof}[Proof of Proposition \ref{PropACts}]
Recall that $A_T(\tau) = g(Q_T(\tau))$ and $A(\tau)=g(Q(\tau))$ where $g(x) = -\frac{1}{2}\log(x)$. By Taylor approximation, there is $c$ between $x$ and $a$:
\begin{align*}
g(x) -g(a) = g'(a)(x-a) + \frac{1}{2}g''(c) (x-a)^2.
\end{align*}
Hence, for each $\tau \in B^c_{M/T}(\tau_0)$, there is $c_T$ between $Q_T(\tau)$ and $Q(\tau)$:
\begin{align*}
g\left( Q_T(\tau) \right) - g\left( Q(\tau) \right)  
&=
g'\left( Q(\tau) \right) \big(   Q_T(\tau) - Q(\tau)   \big)
+
\frac{1}{2}g''\left( c_T \right) \big(   Q_T(\tau) - Q(\tau)   \big)^2\\
&=
g'\left( Q(\tau) \right)O_p(T^{-1/2})
+
O_p(T^{-1}),
\end{align*}
where we used Proposition \ref{PropLimitQ}. 
Similarly, there is $c_{0T}$ between $Q_T(\tau_0)$ and $Q(\tau_0)$:
\begin{align*}
g\left( Q_T(\tau_0) \right) - g\left( Q(\tau_0) \right)  
&=
g'\left( Q(\tau_0) \right)O_p(T^{-1/2})
+
O_p(T^{-1}).
\end{align*}
Note that $g''\left( c_{T} \right)=\frac{1}{2c_T^2} $ and $g''\left( c_{0T} \right) =\frac{1}{2c_{0T}^2}$ are bounded with probability tending to one because for each $\tau$, $Q_T(\tau)\overset{p}{\to} Q(\tau)$, and $Q(\tau) $ is bounded.  

Now, 
\begin{align*}
&\left\{  A_T(\tau) -A_T(\tau_0) \right\} -   \left\{  A(\tau) -A(\tau_0)    \right\}
=
\left\{  A_T(\tau) -A(\tau) \right\} -   \left\{  A_T(\tau_0) -A(\tau_0)    \right\} \\
&=
\bigg\{  g( Q_T(\tau) ) -g( Q(\tau)) \bigg\} -   \bigg\{  g( Q_T(\tau_0) ) -g(  Q(\tau_0) )    \bigg\} \\
&=
\bigg( g'\left( Q(\tau) \right)-g'\left( Q(\tau_0) \right)\bigg)O_p(T^{-1/2})
+
O_p(T^{-1})\\
 &=
\bigg[  
-\frac{1}{2(\sigma^2_0+\Delta(\tau) )}
-
\bigg( -\frac{1}{2( \sigma^2_0+\Delta(\tau_0) ) } \bigg)
 \bigg] 
O_p(T^{-1/2})
+
O_p(T^{-1})\\
&=
\frac{1}{2}\bigg[  \frac{1}{\sigma_0^2} - \frac{1}{\sigma_0^2+\Delta(\tau)}\bigg]
O_p(T^{-1/2})
+
O_p(T^{-1}).
\end{align*}

In general, there is $B>0$ such that  $1/b -1/(b+x) \leq Bx$ for $b,x>0$. Hence, $  1/\sigma_0^2 - 1/( \sigma_0^2+\Delta(\tau)) \leq B\Delta(\tau) \leq B'|\tau-\tau_0|$ where the last inequality holds for some $B'>0$ due to the shape of $Q(\tau)$.
%Remark \ref{remarkQlip} below.

Finally, 
\begin{align*}
 \frac{ | \left\{  A_T(\tau) -A_T(\tau_0) \right\} -   \left\{  A(\tau) -A(\tau_0)    \right\}  | }{ | \tau-\tau_0| }
\leq &
B'
O_p(T^{-1/2})
+
\frac{1}{| \tau-\tau_0|}
O_p(T^{-1})
\leq
O_p(T^{-1/2})
+
\frac{1}{M}O_p(1),
\end{align*}
for $|\tau-\tau_0| > M/T$.
The desired result is established by taking $M$ large enough.% and $M \to \infty$.
\end{proof}  

%%%%%%%%%%%%%%%%%%%%%%%%%%%%%%%%%%%%%%%%%%%%

\section{Derivation of posterior distributions under the normal likelihood and the conjugate prior }\label{derivation} 
In this section, we derive  posterior distributions under the normal likelihood and the conjugate prior. We have 
\begin{align}
p(\theta, \tau | \bm{D}_T ) 
&\propto 
p( \bm{Y}_T |  \bm{X}_T, \theta, \tau ) \pi(\theta, \tau)\nonumber \\
&\propto 
 \left(  \frac{1}{\sigma^2} \right)^{T/2} 
\exp\left[ -\frac{1}{2\sigma^2} \left\{ \sum_{t=1}^T \left( y_t - \chi_{\tau,t}' \gamma \right)^2   \right\} \right]\pi(\tau) \nonumber \\
&\times 
\left(  \frac{1}{\sigma^2} \right)^{\underline{a} + p/2 - 1} \exp\left[ -\frac{1}{\sigma^2} \left\{    \underline{b} + \frac{1}{2} \left( \gamma - \underline{\mu} \right)' \underline{H}   \left( \gamma - \underline{\mu} \right)  \right\}  \right] \nonumber \\
&\propto
\left(  \frac{1}{\sigma^2} \right)^{\underline{a} + (p+T)/2 - 1}
 \exp\left[ -\frac{1}{\sigma^2} \left\{    \bar{b}_\tau + \frac{1}{2} \left( \gamma - \bar{\mu}_\tau \right)' \bar{H}_\tau   \left( \gamma - \bar{\mu}_\tau \right)  \right\}  \right]\pi(\tau), \label{post_conj}
 \end{align}
where $p=d_x+d_z$. %Note that the distribution for $\gamma, \sigma^2 | \tau, \bm{D}_T$ is also a normal-inverse-gamma distribution. 
From above, we can deduce that 
\begin{align*}
\gamma \vert \sigma^2 \tau, \bm{D}_T &\sim N_p\left(\bar{\mu}_{\tau}, \sigma^2 \bar{H}_\tau^{-1} \right) \text{, and }\\
\sigma^2 \vert \tau, \bm{D}_T &\sim Inv Gamma \left( \bar{a},\bar{b}_\tau \right).
\end{align*}

Integrating the right hand side of \eqref{post_conj} with respect to $\gamma$, we obtain 
\begin{align}\label{sigma}
\left(  \frac{1}{\sigma^2} \right)^{\underline{a} + T/2 - 1}
\exp\left[ -\frac{1}{\sigma^2}   \bar{b}_\tau  \right]
\left[ \det \left( \bar{H}_\tau \right)  \right] ^{-0.5}
\pi(\tau). 
\end{align}
Integrating the above with respect to $\sigma^2$ over the positive part of the real line and using the change of variable $\phi = 1/\sigma^2$, we get the marginal posterior for $\tau$
%%%%%%%%%%
\begin{align*}
\pi(\tau | \bm{D}_T ) \propto 
\left[ 
\det
\left( \bar{H}_\tau \right)
 \right]^{-0.5} 
\bar{b}_\tau^{-\bar{a}}  \pi(\tau),% \equiv L_T(\tau),
\end{align*}
Finally, %to derive the posterior of $\gamma$ conditional on $\tau$, 
we apply the well-known property that the integral of a normal-inverse-gamma distribution with respect to $\sigma^2$ is a t-distribution to \eqref{post_conj} to conclude that 
%%%%%%%%
\begin{align*}
\gamma
\big| \tau, \bm{D}_T \sim t_{p} 
\left(
2\bar{a}, 
\bar{\mu}_{\tau},
( \bar{b}_\tau /  \bar{a} )\bar{H}_\tau^{-1}
\right).
\end{align*}

\clearpage 

\iffalse 
\begin{center}
\textbf{Acknowledgements}
\end{center} 
I am grateful for Andriy Norets, my dissertation advisor, for guidance and encouragement through the work on this project. I thank for valuable comments and suggestions from 
Eric Renault, 
Susanne Schennach, 
Dimitris Korobilis, 
Jesse Shapiro, 
Kenneth Chay, 
Toru Kitagawa, 
Adam McCloskey, 
Siddhartha Chib,  and
Florian Gunsilius.
I thank to audiences at the 2020 NBER-NSF Seminar on Bayesian Inference in Econometrics and Statistics (SBIES) conference for helpful discussions.
This work was supported by the Economics department dissertation fellowship at Brown University. All remaining errors are mine.
\quad \\
\fi

\bibliography{references}

\end{document}